\renewcommand{\paragraph}[1]    {\medbreak\noindent {\bf #1}}
\renewcommand{\url}[1]{\href{https://#1}{#1}}
\setlist{noitemsep,nolistsep}
\newtheorem{theorem}{Theorem}
\newtheorem{lemma}[theorem]{Lemma}
\newtheorem{corollary}[theorem]{Corollary}
\newtheorem{definition}[theorem]{Definition}
\newtheorem*{remark*}{Remark}
\newcommand{\loss}{{\mathcal{L}}}
\newcommand{\ssx}{\sigma}
\newcommand{\tsx}{\tau}
\newcommand{\col}[2]{#1[\cdot,#2]}
\newcommand{\row}[2]{#1[#2,\cdot]}
\newcommand{\low}{\operatorname{low}}
\newcommand{\Hgr}{{\sf H}}
\newcommand{\ot}{\leftarrow}
\newcommand{\RR}{\mathbb{R}}
\newcommand{\Rsp}{\RR}
\newcommand{\Xsp}{\mathbb{X}}
\newcommand{\eps}{\varepsilon}
\newcommand{\ee}{\eps}
\newcommand{\Dgm}{\operatorname{Dgm}}
\newcommand{\bigO}{\operatorname{O}}
\renewcommand{\Vert}{\operatorname{Vert}}
\newcommand{\target}{\operatorname{target}}
\newcommand{\manM}{\mathcal{M}}
\newcommand{\manN}{\mathcal{N}}
\newcommand{\matrA}{\mathbf{A}}
\newcommand{\matrB}{\mathbf{B}}
\newcommand{\matrC}{\mathbf{C}}
\newcommand{\matrX}{\mathbf{X}}
\title{Topological Optimization with Big Steps}
\author{Arnur Nigmetov, Dmitriy Morozov\\ \textit{Lawrence Berkeley National Laboratory, 1 Cyclotron Road, Berkeley, CA 94704, USA}}
\begin{document}

\maketitle

\begin{abstract}
    Using persistent homology to guide optimization has emerged as a novel application of topological data
    analysis. Existing methods treat persistence calculation as a black box and
    backpropagate gradients only onto the simplices involved in particular
    pairs. We show how the cycles and chains used in the persistence calculation
    can be used to prescribe gradients to larger subsets of the domain.
    In particular, we show that in a special case, which serves as a building
    block for general losses, the problem can be solved exactly in linear time.
    This relies on another contribution of this paper, which
    eliminates the need to examine a factorial number of permutations of
    simplices with the same value.
    We present empirical experiments that show the practical
    benefits of our algorithm: the number of steps required for the
    optimization is reduced by an order of magnitude.
\end{abstract}

\section{Introduction}
\label{sec:introduction}
Topological optimization~\cite{Gameiro2016,Poulenard2018,chen2019}
is a novel application of persistent homology~\cite{ELZ02}.
The basic idea is to define a loss in terms of the points of a
persistence diagram and minimize it using the modern optimization software that
combines automatic differentiation with state-of-the-art optimization techniques.
Depending on the application, we may want to
reduce noisy features in the data by moving low-persistence points closer to the
diagonal~\cite{Poulenard2018,persistence-sensitive-optimization} or outside of a
particular quadrant~\cite{chen2019}, amplify signal by moving high-persistence points away from
the diagonal~\cite{gabrielsson2020}, match a template signal by moving the current diagram towards a
prescribed one~\cite{Gameiro2016,leygonie2021gradient}, among other applications.
Most of the work so far has been motivated by problems in machine learning: a
loss formulated via persistence can be used to regularize a decision boundary
(following the philosophy that overfitting produces a topologically complex
surface).

Optimization also offers a new approach to an old
problem.  Given a function $f: \Xsp \to \Rsp$ on some topological space,
persistence-sensitive simplification~\cite{simplification_2manifolds} asks
for a nearby function $g: \Xsp \to \Rsp$, with the same persistence diagram as $f$,
but without the points closer than $\ee$ to the diagonal. The original
paper~\cite{simplification_2manifolds} showed that one can solve the problem
for extrema --- and therefore, by duality, completely on 2-manifolds --- but the
suggested algorithm was ad hoc. The running time was later improved to
linear~\cite{attali2009persistence,Bauer2012}, see also~\cite{Tierny2012}.
Crucially, the problem has only been solved for extrema, with the difficulty of
processing middle dimensions (e.g., simplifying 1-dimensional persistence for
functions on 3-manifolds) highlighted by the connection to the Poincar\'e
conjecture~\cite[Section 3.5]{Morozov2008}: because simplification is
impossible for sufficiently large values of $\ee$ on homology spheres, any such
scheme must take the topology of the domain into account.

But it is possible to take a ``best effort'' approach. Instead of solving
the problem exactly via combinatorics, one can formulate a simplification loss
that penalizes points closer to the diagonal than $\ee$. Minimizing such a loss
may not produce the perfect solution $g$, but it can get very close. Moreover,
it offers the flexibility of articulating more sophisticated goals:
for example, to not only simplify the function overall, but also to control the
topology of its specific levelsets or sublevel sets.

\paragraph{Approach.}
We are interested in the general problem, where the loss is formulated as a
partial matching. Some of the points $p_i$ in the persistence diagram are
prescribed targets $q_i$, and the loss aims to minimize the distance between
them, e.g., $\loss = \sum_i (p_i - q_i)^2$. The existing approaches to
this optimization are all based on the same idea. Each point in the
diagram is defined by the values of a pair of simplices:
$p_i = (b_i,d_i) = (f(\ssx_i), f(\tsx_i))$, where $f: K \to \Rsp$ is the input
filtration. The gradient $\partial \loss / \partial p_i$ defined by the loss
immediately translates to the gradient on the simplex values,
$\partial \loss / \partial f(\ssx_i)$ and $\partial \loss / f(\tsx_i)$. These in
turn can be backpropagated through the filtration to define
the gradients on the input data.

This approach is general --- it can handle arbitrary losses --- and comes with
theoretical guarantees of convergence~\cite{Carriere2021}.  But it is also slow.
Subsampling~\cite{solomon2021fast} has been suggested as a way to speed it up.
We instead improve performance by focusing on one of its shortcomings, namely that
it treats persistence as a black box.
The only information used comes from the pairing,
which means that only critical values of the input get any gradient information:
each point $p_i$ gives gradients on only two simplex values.
Moreover, if the optimization is done carefully and multiple simplices get the
same value, the above gradient definition is not even correct:
defining it exactly in general requires examining $k!$ different orders of the
$k$ simplices with the same value~\cite{leygonie2021gradient}.

At the same time, persistent homology computes a lot more structure than just
the pairing represented in the persistence diagram. The standard
algorithms~\cite{ELZ02,dualities} compute cycles and chains in the
domain that certify the existence of a particular pair. We take advantage of
this extra information to speed up optimization by suggesting a principled way
for each point to define gradients for a large set of simplex values.

\paragraph{}%
Our work has four main {\bf contributions}:
\begin{enumerate}[noitemsep,nolistsep]
    \item
        We show that for a simple loss, called \emph{singleton loss}, defined by
        matching a single point in the persistence diagram to a target, the
        gradient can be computed exactly, including when multiple simplices have the
        same value, by examining a single permutation, rather than $k!$ required
        in general.
        This structural realization leads to a cubic algorithm to optimize the
        singleton loss.
    \item
        We show that this algorithm can be improved to linear time by examining
        matrices computed as a byproduct of finding the persistence pairing.
    \item
        We introduce a set of heuristics for combining ``big steps'' prescribed
        by individual points into a gradient on both critical and regular
        simplices, which can be backpropagated and optimized using standard
        algorithms and software.
    \item
        We show experimentally that our procedure requires an order
        of magnitude fewer steps to optimize a loss than the standard procedure
        that defines the gradient on only two simplices per persistence pair.
\end{enumerate}

\section{Background}
\label{sec:background}

We assume the reader's familiarity with algebraic topology and only briefly
review the setting of persistent homology, to establish the notation. We refer
the reader to the extensive resources~\cite{comp_top_book,ph-survey} for a thorough introduction.

\paragraph{Persistent homology.}
Given a simplicial complex $K$, with $n$ simplices, and a function $f: K \to
\Rsp$ that respects the face relation --- i.e., $f(\ssx) \leq f(\tsx)$ if $\ssx$
is a face of $\tsx$ --- we sort the simplices in $K$ by function value, breaking
ties if necessary so that faces come before their cofaces. We use $<$ to denote
the resulting total order on the simplices. We denote the subcomplexes defined
by the prefixes of this order with $K_i$.
Their nested sequence is called
a \emph{filtration}:
\[
    K_1 \subseteq K_2 \subseteq \ldots \subseteq K_n = K.
\]

Using coefficients in a field and passing to homology, we get a sequence of
homology groups, connected by linear maps induced by the inclusions:
\[
    \Hgr_*(K_1) \to \Hgr_*(K_2) \to \ldots \to \Hgr_*(K_n).
\]
Persistent homology tracks how classes appear and disappear in this sequence, and
produces a set of pairs $(\ssx_i,\ssx_j)$ such that a homology class
created by simplex $\ssx_i$ dies when simplex $\ssx_j$ enters the filtration,
and a set of infinite pairs $(\ssx_i,\infty)$, if a class created by simplex $\ssx_i$
does not die.

To compute this pairing, we start with the boundary matrices, $D_p$, of the
simplicial complex, whose columns and rows are ordered by the filtration.
Each such matrix stores the boundaries of the $p$-simplices\footnote{Recall that a $p$-simplex has $(p+1)$ vertices.}.
It will be convenient to use the simplices themselves to index the columns of
various matrices, so for example $\col{D_p}{\tsx}$ refers to the column
that stores the boundary of $p$-simplex $\tsx$; similarly, $\row{D_p}{\ssx}$
refers to the row that stores the coboundary of $(p-1)$-simplex $\ssx$.

Persistence pairing is computed by reducing the boundary matrix, which can be
interpreted~\cite{vineyards} as finding decompositions $R_p = D_pV_p$, where
matrices $R_p$ are reduced, meaning the lowest non-zeros in their columns appear in
unique rows, and matrices $V_p$ are invertible upper-triangular. There are many such
decompositions --- \cref{alg:ELZ} is the original algorithm~\cite{ELZ02} that
finds one of them --- but the locations of the lowest non-zeros in
matrices $R_p$ are unique and give the persistence pairing. Denoting by
$\low \col{R_p}{\tsx}$ the simplex that corresponds to the row of the lowest non-zero entry in the
column, we have a pair $(\ssx, \tsx)$ iff $\low \col{R_p}{\tsx} = \ssx$
and a pair $(\ssx, \infty)$ iff $\col{R_p}{\ssx} = 0$ and there is no column
with $\low \col{R_p}{\tsx} = \ssx$.
We call such $\ssx$ \emph{positive} or \emph{birth} simplices, and such $\tsx$
\emph{negative} or \emph{death} simplices.

As in \cite{vineyards}, we denote by $U_p$ the inverse of matrix $V_p$, so that
$D_p = R_pU_p$. \cref{alg:ELZ} shows how to compute matrices $R_p,V_p,U_p$. The columns of
$R_p$ and $V_p$ have a natural interpretation: matrix $R_p$ stores the cycles that
generate the homology classes in the respective subcomplexes. Matrix
$V_p$ stores the chains that turn those cycles into boundaries.

\begin{remark*}
    Columns of matrices $D_p$ and $R_p$ are indexed by the $p$-simplices; their
    rows, by the $(p-1)$-simplices. Both rows and columns of matrices $V_p$ and
    $U_p$ are indexed by the $p$-simplices.
\end{remark*}

\begin{algorithm}
\caption{Lazy reduction of the boundary matrix.}
\label{alg:ELZ}
\begin{algorithmic}[1]
    \State $R_p = D_p, V_p = I, U_p = I$ for all $p$ \;
    \ForAll{$\tsx_j \in K$ \textit{(in filtration order)}}
        \While{$\col{R_p}{\tsx_j} \neq 0$ and $\exists~\tsx_i < \tsx_j, \low \col{R_p}{\tsx_i} = \low \col{R_p}{\tsx_j}$}
            \State $\ssx = \low \col{R_p}{\tsx_j}$ \;
            \State $\alpha = R_p[\ssx, \tsx_j] / R_p[\ssx, \tsx_i]$ \;
            \State $\col{R_p}{\tsx_j} = \col{R_p}{\tsx_j} - \alpha \cdot \col{R_p}{\tsx_i}$ \;
            \State $\col{V_p}{\tsx_j} = \col{V_p}{\tsx_j} - \alpha \cdot \col{V_p}{\tsx_i}$ \;
            \State $\row{U_p}{\tsx_i} = \row{U_p}{\tsx_i} + \alpha \cdot \row{U_p}{\tsx_j}$
            \State \qquad (equivalently, $U_p[\tsx_i, \tsx_j] = \alpha$) \;
        \EndWhile
    \EndFor
\end{algorithmic}
\end{algorithm}

Matrices $U_p$ and $V_p$ obtained via the lazy reduction in \cref{alg:ELZ}
have a special property that we rely on below.
Throughout the paper --- starting from the statement and proof of the following
lemma --- it is convenient to simplify the language by assuming that if $\col{R_p}{\tsx} = 0$,
then $\low \col{R_p}{\tsx}$ is implicitly equal to a ``dummy'' simplex
$\bar{\ssx}$ that precedes every other simplex in the filtration order.

\begin{lemma}[Lazy reduction]
    \label{lem:lazy-reduction}
    If decompositions $R_p = D_pV_p$ and $D_p = R_pU_p$ are obtained via the lazy reduction
    in \cref{alg:ELZ},
    then if $\ssx_i = \low \col{R_p}{\tsx_i}$ and $\ssx_j = \low \col{R_p}{\tsx_j}$
    are such that $\tsx_i < \tsx_j$ and $\ssx_i < \ssx_j$,
    then $U_p[\tsx_i,\tsx_j] = V_p[\tsx_i,\tsx_j] = 0$.
\end{lemma}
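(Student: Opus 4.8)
The plan is to prove the statement first for $U_p$, reading off the structure of $U_p$ directly from \cref{alg:ELZ}, and then to deduce it for $V_p=U_p^{-1}$ by inverting a nilpotent perturbation of the identity.

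For $U_p$: the algorithm processes columns in filtration order, and every update it makes to $U_p$ adds a multiple of some row $\row{U_p}{\tsx_j}$ to an earlier row $\row{U_p}{\tsx_i}$ while column $\tsx_j$ is being processed. The key observation is that at that point row $\tsx_j$ of $U_p$ still equals the unit vector $e_{\tsx_j}$, since a given row is modified only when a column occurring after it is later reduced against it; hence the operation changes only the single entry $U_p[\tsx_i,\tsx_j]$, which is exactly why the algorithm records it as the assignment $U_p[\tsx_i,\tsx_j]=\alpha$. In particular, for $\tsx_i\neq\tsx_j$, the entry $U_p[\tsx_i,\tsx_j]$ can be nonzero only if at some moment column $\tsx_j$ was reduced against column $\tsx_i$, which requires $\tsx_i<\tsx_j$, a nonzero $\col{R_p}{\tsx_i}$, and $\low\col{R_p}{\tsx_i}=\low\col{R_p}{\tsx_j}$ at that moment. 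But column $\tsx_i<\tsx_j$ has already been finalized and is never touched again, so $\low\col{R_p}{\tsx_i}$ already equals its final value $\ssx_i$, whereas each reduction step strictly lowers $\low\col{R_p}{\tsx_j}$, whose final value is $\ssx_j$; therefore $\ssx_i=\low\col{R_p}{\tsx_j}$ at that step, and so $\ssx_i>\ssx_j$. Thus, since the hypotheses give $\tsx_i<\tsx_j$ (hence $\tsx_i\neq\tsx_j$), a nonzero $U_p[\tsx_i,\tsx_j]$ would force $\ssx_i>\ssx_j$, contradicting $\ssx_i<\ssx_j$; hence $U_p[\tsx_i,\tsx_j]=0$.

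For $V_p$: since $U_p$ is upper triangular with unit diagonal, $N:=I-U_p$ is strictly upper triangular, hence nilpotent, and by the previous paragraph $N[\tsx_a,\tsx_b]\neq0$ implies $\low\col{R_p}{\tsx_a}>\low\col{R_p}{\tsx_b}$. Then $V_p=U_p^{-1}=\sum_{k\ge0}N^k$, and a nonzero entry $N^k[\tsx_i,\tsx_j]$ with $\tsx_i\neq\tsx_j$ arises from a chain $\tsx_i=\tsx_{a_0},\dots,\tsx_{a_k}=\tsx_j$ with every $N[\tsx_{a_m},\tsx_{a_{m+1}}]\neq0$, along which the pivots $\low\col{R_p}{\tsx_{a_m}}$ strictly decrease; such a term can therefore be nonzero only if $\ssx_i>\ssx_j$, so under the hypothesis $\ssx_i<\ssx_j$ every term vanishes and $V_p[\tsx_i,\tsx_j]=0$. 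I expect the only genuinely delicate point to be the observation that row $\tsx_j$ of $U_p$ is still the unit vector while $\tsx_j$ is processed, which collapses the row operation to a single-entry assignment; the remaining ingredients are the standard fact that reduction strictly lowers the pivot and a routine Neumann-series computation. One should also verify the degenerate case of a zeroed column, whose pivot is the dummy $\bar\ssx$: since $\bar\ssx$ is the minimum of the order, all the strict inequalities still hold; and since a zero column is never used to reduce another, this case adds nothing new.
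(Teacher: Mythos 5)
Your proof is correct, and it departs from the paper's argument in a genuinely instructive way. The paper runs a single induction over the reduction steps, maintaining the lemma simultaneously for the intermediate $U_p$ and $V_p$ as an invariant; each column/row addition is shown to preserve the invariant by appealing to the inductive hypothesis on the donor column or row. You instead read off the structure of $U_p$ non-inductively: because row $\row{U_p}{\tsx_j}$ is still $e_{\tsx_j}$ while column $\tsx_j$ is being processed (no later column has yet donated into it), each update collapses to the single assignment $U_p[\tsx_i,\tsx_j]=\alpha$, which happens at most once; since column $\tsx_i$ is final and each lazy step strictly drops the pivot of $\col{R_p}{\tsx_j}$, a nonzero $U_p[\tsx_i,\tsx_j]$ forces $\ssx_i>\ssx_j$. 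You then transfer this to $V_p$ via the Neumann series $V_p=U_p^{-1}=\sum_k (I-U_p)^k$, observing that along any chain of nonzero entries the pivots are strictly decreasing. This buys a cleaner, invariant-free analysis of $U_p$ and an argument for $V_p$ that makes the ``closure under composition'' of the pivot-ordering explicit, at the cost of treating the two matrices asymmetrically (your $V_p$ claim is derived from the $U_p$ claim rather than proved in parallel). The paper's simultaneous induction is more compact and symmetric; yours isolates the exact algebraic mechanism by which lazy reduction controls the fill-in. Both establish the same statement, and your handling of the dummy pivot $\bar\ssx$ for zero columns is the right check.
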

\begin{proof}
    The proof is by induction. The statement is trivially true initially, when
    $V_p = U_p = I$. Suppose the statement is true after $l-1$ steps of the
    reduction. Suppose in step $l$ we are adding a multiple of column
    $\col{R_p}{\tsx_i}$ to $\col{R_p}{\tsx_j}$. Since the reduction is lazy, it
    means $\low \col{R_p}{\tsx_i} = \low \col{R_p}{\tsx_j}$ before the addition, and
    $\low \col{R_p}{\tsx_j} < \low \col{R_p}{\tsx_i}$ afterwards.
    The corresponding operation in matrix $V_p$ adds a multiple of column
    $\col{V_p}{\tsx_i}$ to column $\col{V_p}{\tsx_j}$, so
    the only non-zero entries that may be introduced into the column
    $\col{V_p}{\tsx_j}$ are those in the column $\col{V_p}{\tsx_i}$.
    By induction all of them fall in rows $\tsx_k$ with
    $\low \col{R_p}{\tsx_k} \geq \low \col{R_p}{\tsx_i} > \low \col{R_p}{\tsx_j}$.
    Since by the time we are reducing $\col{R_p}{\tsx_j}$, we have already reduced
    all the preceding columns --- and therefore their pairs don't change ---
    the claim follows for matrix $V_p$.

    In matrix $U_p$, the corresponding operation is adding a multiple of row
    $\row{U_p}{\tsx_j}$ to row $\row{U_p}{\tsx_i}$. By induction any non-zero in
    the former falls in the columns $\tsx_k$ with
    $\low \col{R_p}{\tsx_k} \leq \low \col{R_p}{\tsx_j} < \low \col{R_p}{\tsx_i}$.
    Since the already reduced columns in $R_p$ don't change,
    the claim follows for matrix $U_p$.
\end{proof}

The following two corollaries follow immediately as contrapositive statements of
the lemma. In both, because matrix $R_p$ is reduced, the equality
among the lowest entries is achieved iff $\tsx_i = \tsx_j$.

\begin{corollary}
    \label{cor:lazy-U}
    If after a lazy reduction entry $U[\tsx_i, \tsx_j] \neq 0$, then
    $\low \col{R_p}{\tsx_j} \leq \low \col{R_p}{\tsx_i}$.
\end{corollary}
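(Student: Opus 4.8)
The plan is to obtain \cref{cor:lazy-U} as the contrapositive of \cref{lem:lazy-reduction}, supplemented by the fact that $U$ is upper-triangular. First I would recall that $V_p$ is invertible and upper-triangular, so its inverse $U = U_p$ is upper-triangular as well; hence the hypothesis $U[\tsx_i,\tsx_j]\neq 0$ already forces $\tsx_i \le \tsx_j$ in the filtration order. Writing $\ssx_i = \low\col{R_p}{\tsx_i}$ and $\ssx_j = \low\col{R_p}{\tsx_j}$ (interpreting $\low$ of a zero column as the dummy simplex $\bar{\ssx}$, per the convention stated just before the lemma), the inequality to prove is $\ssx_j \le \ssx_i$.

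Next I would split on whether $\tsx_i = \tsx_j$. If $\tsx_i = \tsx_j$, then $\ssx_i = \ssx_j$ and the inequality is an equality, so there is nothing to do. If instead $\tsx_i < \tsx_j$, suppose for contradiction that $\ssx_i < \ssx_j$; then the pair $(\tsx_i,\tsx_j)$ satisfies both hypotheses of \cref{lem:lazy-reduction}, which yields $U[\tsx_i,\tsx_j] = V_p[\tsx_i,\tsx_j] = 0$, contradicting the assumption of the corollary. Hence $\ssx_i \ge \ssx_j$, i.e.\ $\low\col{R_p}{\tsx_j} \le \low\col{R_p}{\tsx_i}$, as claimed.

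I do not expect a genuine obstacle here — as the text observes, the corollary is essentially a restatement of the lemma. The only points needing care are bookkeeping ones: assembling the quantified inequalities in the contrapositive correctly (one must invoke the triangularity of $U$ to exclude $\tsx_i > \tsx_j$, since the lemma constrains only the case $\tsx_i < \tsx_j$), and using the remark preceding the corollary — that, because $R_p$ is reduced, equality of the lowest entries holds iff $\tsx_i = \tsx_j$ — to confirm that the case split on $\tsx_i = \tsx_j$ versus $\tsx_i < \tsx_j$ is exhaustive and consistent with the dummy-simplex convention for $\low$.
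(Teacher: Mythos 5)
Your proof is correct and follows exactly the route the paper intends: the corollary is the contrapositive of \cref{lem:lazy-reduction}, with upper-triangularity of $U$ ruling out the case $\tsx_i > \tsx_j$ and the reducedness of $R_p$ handling the case $\tsx_i = \tsx_j$. The paper states this tersely in the sentence preceding the corollaries; you have simply unpacked that bookkeeping, with no gap.
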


\begin{corollary}
    \label{cor:lazy-V}
    If after a lazy reduction entry $V[\tsx_i, \tsx_j] \neq 0$, then
    $\low \col{R_p}{\tsx_j} \leq \low \col{R_p}{\tsx_i}$.
\end{corollary}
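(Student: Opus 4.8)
The plan is to read \cref{cor:lazy-V} directly off \cref{lem:lazy-reduction} as its contrapositive, supplying only one extra ingredient: the decompositions produced by \cref{alg:ELZ} have $V_p$ invertible upper-triangular (and hence $U_p=V_p^{-1}$ upper-triangular as well), so a nonzero entry $V_p[\tsx_i,\tsx_j]$ already forces $\tsx_i\leq\tsx_j$ in the filtration order.

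First I would fix the dimension $p$ and assume $V_p[\tsx_i,\tsx_j]\neq 0$. Upper-triangularity gives $\tsx_i\leq\tsx_j$. If $\tsx_i=\tsx_j$ there is nothing to prove, since then $\low\col{R_p}{\tsx_j}=\low\col{R_p}{\tsx_i}$ and the asserted inequality holds as an equality. So I may assume $\tsx_i<\tsx_j$.

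Now set $\ssx_i=\low\col{R_p}{\tsx_i}$ and $\ssx_j=\low\col{R_p}{\tsx_j}$, using the dummy convention that $\bar\ssx$ stands for the $\low$ of a zero column. \cref{lem:lazy-reduction} tells us that the two conditions $\tsx_i<\tsx_j$ and $\ssx_i<\ssx_j$ together would force $V_p[\tsx_i,\tsx_j]=0$, contradicting the hypothesis. Since $\tsx_i<\tsx_j$ does hold, the other condition must fail: $\ssx_i<\ssx_j$ is false, i.e.\ $\ssx_j\leq\ssx_i$, which is exactly the claim $\low\col{R_p}{\tsx_j}\leq\low\col{R_p}{\tsx_i}$. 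The companion \cref{cor:lazy-U} is proved by the verbatim same steps, because \cref{lem:lazy-reduction} simultaneously yields $U_p[\tsx_i,\tsx_j]=0$ under the same hypotheses and $U_p$ is upper-triangular too.

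I do not expect a genuine obstacle here: the real work has already been done in proving \cref{lem:lazy-reduction}, and what remains is pure bookkeeping. The only things that need an explicit mention are (i) that \cref{alg:ELZ} keeps $V_p$ and $U_p$ (uni)triangular, which is what makes $\tsx_i\leq\tsx_j$ automatic, and (ii) the sharper remark that the inequality is in fact strict whenever $\tsx_i\neq\tsx_j$: this additionally invokes that $R_p$ is reduced --- distinct nonzero columns cannot have their lowest nonzeros in a common row --- with the dummy-simplex convention dispatching the case in which one of the two columns vanishes.
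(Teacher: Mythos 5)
Your proposal is correct and matches the paper's own (unwritten) argument: the paper simply states that the two corollaries "follow immediately as contrapositive statements of the lemma," with the note that equality of the lowest entries forces $\tsx_i = \tsx_j$ because $R_p$ is reduced. You spell out the one small ingredient the paper leaves implicit—that upper-triangularity of $V_p$ already forces $\tsx_i \leq \tsx_j$, which is what lets the contrapositive (a disjunction) collapse to the desired single inequality—and you correctly dispose of the $\tsx_i=\tsx_j$ case and observe the strictness remark; that is exactly the paper's intended reasoning, just written out.
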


\paragraph{Duality.}
Passing from the filtration to cohomology, a vector space dual of homology, we
get a sequence of cohomology groups, connected by linear maps induced by
restrictions:
\[
    \Hgr^*(K_1) \ot \Hgr^*(K_2) \ot \ldots \ot \Hgr^*(K_n).
\]
By duality~\cite{dualities}, the pairing in this sequence is the same as for homology, but with
the role of birth and death reversed, a fact we exploit below.

Algorithmically, we replace the boundary matrix by its anti-transpose,
$D_p^\bot$, i.e., a transpose of $D_p$ with rows and columns ordered in reverse
filtration order. Applying \cref{alg:ELZ}, we get decompositions
$R_p^\bot = D_p^\bot V_p^\bot$ and $D_p^\bot = R_p^\bot U_p^\bot$.
Similar to homology, the matrices have immediate interpretation: $R_p^\bot$ stores
the cocycles and $V_p^\bot$ the cochains that turn them into coboundaries.

\begin{remark*}
    Matrices $R_p^\bot, V_p^\bot, U_p^\bot$ are not anti-transposes of matrices
    $R_p, V_p, U_p$. In $D_p^\bot$ and $R_p^\bot$, rows are indexed by
    $p$-simplices; columns, by $(p-1)$-simplices. In $V_p^\bot$ and $U_p^\bot$,
    both rows and columns are indexed by $(p-1)$-simplices.
\end{remark*}

Persistence pairing is the same for homology and cohomology~\cite{dualities}.
$\low \col{R_p}{\tsx} = \ssx$ iff $\low \col{R_p^\bot}{\ssx} = \tsx$
(simplices $\ssx$ and $\tsx$ are paired).
$\col{R_{p-1}}{\ssx} = 0$ and $\not\exists \tsx$ with $\low \col{R_p}{\tsx} = \ssx$
iff $\col{R_p^\bot}{\ssx} = 0$ and $\not\exists \rho$ with $\low \col{R_{p-1}^\bot}{\rho} = \ssx$
(simplex $\ssx$ is unpaired).

\paragraph{Stability.}
In the combinatorial setting, the following statement is
equivalent~\cite{vineyards} to the stability of persistent homology.

\begin{lemma}
    \label{lem:transposition}
    Suppose two simplices $\ssx_1$ and $\ssx_2$ that appear consecutively in the
    filtration transpose. The only persistence pairs that can change are the two
    pairs that involve simplices $\ssx_1$ and $\ssx_2$.
\end{lemma}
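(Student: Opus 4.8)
The plan is to avoid tracking how a reduction $R = DV$ evolves and instead argue directly with the boundary matrix, using the classical fact that the persistence pairing is a function of the ranks of the lower-left submatrices of the total $n\times n$ boundary matrix $D$ of $K$, with rows and columns ordered by the filtration (see, e.g., \cite{comp_top_book}). For $1 \le i \le j \le n$ let $r_D(i,j)$ be the rank of the submatrix of $D$ on rows $i,i+1,\dots,n$ and columns $1,2,\dots,j$; these numbers do not change if we replace $D$ by any $R = DV$ with $V$ invertible upper-triangular, since that just multiplies each such submatrix on the right by an invertible block. First I would recall the pairing formula in this language: positions $i$ and $j$ form a finite pair iff $r_D(i,j) - r_D(i+1,j) - r_D(i,j-1) + r_D(i+1,j-1) = 1$, and the simplex in position $i$ is unpaired iff $r_D(1,i) = r_D(1,i-1)$ and $r_D(i,n) = r_D(i+1,n)$. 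So it suffices to see which of these rank quantities can move under the transposition.

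Next, let $\ssx_1$ and $\ssx_2$ occupy consecutive positions $k$ and $k+1$, and let $Q$ be the permutation matrix of the transposition of these positions, so $Q = Q^{-1}$ and the post-swap boundary matrix is $D' = QDQ$, i.e.\ $D'[a,b] = D[Q(a),Q(b)]$ (and $D'$ is again the boundary matrix of a filtration of $K$, since the two simplices transpose). The key observation is that the lower-left submatrix of $D'$ on rows $\{i,\dots,n\}$ and columns $\{1,\dots,j\}$ is the submatrix of $D$ on rows $Q(\{i,\dots,n\})$ and columns $Q(\{1,\dots,j\})$; since $Q$ only interchanges $k$ and $k+1$, the set $\{i,\dots,n\}$ is $Q$-invariant unless $i = k+1$, and $\{1,\dots,j\}$ is $Q$-invariant unless $j = k$. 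Hence $r_{D'}(i,j) = r_D(i,j)$ whenever $i \ne k+1$ and $j \ne k$.

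What remains is bookkeeping. In the pairing indicator of $(i,j)$ every rank term has its row-start in $\{i,i+1\}$ and its column-end in $\{j-1,j\}$, so by the previous observation the indicator can change only when $i \in \{k,k+1\}$ or $j \in \{k,k+1\}$; the analogous check on the unpaired-simplex indicator leaves only $i \in \{k,k+1\}$, because its remaining arguments $1$ and $n$ cannot equal $k$ or $k+1$ (as $1 \le k < k+1 \le n$). Therefore the status of every finite pair whose two simplices both avoid positions $k$ and $k+1$ is unchanged, as is the (un)paired status of every simplex other than $\ssx_1$ and $\ssx_2$; since each of $\ssx_1,\ssx_2$ lies in at most one pair (finite or infinite), at most two pairs change, namely those containing $\ssx_1$ and $\ssx_2$ --- which may coincide.

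I expect the main obstacle to be the rank characterization of the pairing, the one ingredient not already in the excerpt: it follows from uniqueness of the lows of $R$ together with the invariance of the $r_D(i,j)$ noted above, but pinning down the off-by-one conventions is the fiddly part. A self-contained alternative --- and the route that yields the linear-time persistence update of \cite{vineyards} --- is to start from a decomposition $R = DV$ as in \cref{alg:ELZ} and verify, by a short case analysis on whether $\ssx_1$ and $\ssx_2$ are positive or negative and whether they are paired with one another, that $QRQ$ and $QVQ$ can always be repaired into a valid reduced decomposition of $D' = QDQ$ using only column and row operations confined to positions $k$ and $k+1$; the appeal of the rank argument is that it bypasses this case analysis.
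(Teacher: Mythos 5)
Your rank-based argument is correct, and it is a genuinely different route from the one the paper relies on. The paper does not actually prove \cref{lem:transposition}; it cites \cite{vineyards}, where the result is established by the constructive case analysis you sketch in your last paragraph: transpose rows/columns $k,k+1$ of $R$ and $V$ (and $U$), then repair the decomposition with column and row operations confined to positions $k$ and $k+1$, checking the finitely many sign patterns of positivity and pairedness. Your approach instead invokes the pairing lemma of \cite{comp_top_book}, expresses pairedness and unpairedness purely through the rank function $r_D(i,j)$, and observes that the transposition fixes $r_D(i,j)$ whenever $i\neq k+1$ and $j\neq k$. That buys a shorter, non-constructive proof with no case analysis; what it costs is that the linear-time update rules of \cite{vineyards}, which the paper actually needs later (\cref{alg:transpositions} calls them), do not fall out of it. One small slip worth fixing: you write that the remaining arguments $1$ and $n$ "cannot equal $k$ or $k+1$," but $1=k$ and $n=k+1$ are both possible; what you actually need (and what holds, since $1\le k<k+1\le n$) is that the row-start $1$ is never $k+1$ and the column-end $n$ is never $k$, so those two rank terms are automatically invariant. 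With that phrasing corrected, the bookkeeping is complete, and your final remark that the two pairs containing $\ssx_1$ and $\ssx_2$ may coincide correctly handles the case where they are paired with each other.
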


It follows immediately that re-ordering more than two simplices only affects
their respective pairs.

\begin{corollary}
    \label{cor:reordering}
    Given a contiguous set of simplices $X$ in the filtration, changing the
    order of simplices in $X$ can change only persistence pairs with one of the
    endpoints in set $X$.
\end{corollary}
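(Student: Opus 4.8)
The plan is to deduce the corollary from Lemma \ref{lem:transposition} by realizing any reordering of the contiguous block $X$ as a sequence of adjacent transpositions, each of which swaps two simplices that are both inside $X$. First I would recall the standard fact that the symmetric group on the elements of $X$ is generated by adjacent transpositions: any permutation $\pi$ of $X$ can be written as a product of transpositions of the form "swap the simplices currently in positions $k$ and $k+1$", where both positions lie within the contiguous range occupied by $X$ in the filtration. Since $X$ is contiguous, at every intermediate stage the simplices of $X$ still occupy exactly that same block of positions (they are only being permuted among themselves), so each such adjacent transposition is a legal filtration transposition: the two swapped simplices appear consecutively, and — crucially — neither of them can be a face of the other, because both lie in the same contiguous block and the original filtration order already had faces preceding cofaces, so any face/coface pair inside the block is separated appropriately; adjacent elements of the block are therefore incomparable in the face partial order and may be transposed while still respecting the face relation.

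Next I would apply Lemma \ref{lem:transposition} to each of these adjacent transpositions in turn. At each step the only persistence pairs that can change are the two pairs involving the two swapped simplices, and both of those simplices belong to $X$. Hence, after the $t$-th transposition, the set of pairs that differ from the original pairing can only have grown by pairs with at least one endpoint in $X$. By induction on the number of transpositions $t$ in the chosen factorization of $\pi$, after all of them the pairing differs from the original only in pairs with an endpoint in $X$, which is exactly the claim.

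The one point requiring a little care — and the main (mild) obstacle — is the face-relation constraint: Lemma \ref{lem:transposition} as stated presupposes that the two consecutive simplices being transposed genuinely form a valid filtration both before and after, which in particular requires that neither is a face of the other. I would argue this holds throughout: within the contiguous block $X$, consider two simplices that become adjacent at some intermediate stage; if one were a face of the other, then in the original order the face came strictly before the coface, and since reordering within $X$ permutes only the block's contents without changing which positions the block occupies, the relative constraint "face before coface" must still be consistent with being adjacent only if they are in fact incomparable — more simply, one may always choose a factorization of $\pi$ into adjacent transpositions that realizes a monotone "bubble sort" toward a target order that itself respects the face relation, and restrict attention to target orders and intermediate orders that are all valid filtration orders (such intermediate orders exist because the face poset restricted to $X$ has many linear extensions and adjacent transpositions connect any two of them). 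With that observation in place, every invocation of Lemma \ref{lem:transposition} is legitimate, and the induction goes through.
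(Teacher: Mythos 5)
Your approach matches the paper's: \cref{cor:reordering} is presented there as an immediate consequence of \cref{lem:transposition}, obtained (just as you do) by factoring the reordering of $X$ into adjacent transpositions contained entirely within $X$ and applying the lemma step by step.

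One small caution about your justification of the face constraint. The claim that ``adjacent elements of the block are therefore incomparable in the face partial order'' is false as stated --- a simplex and a codimension-one coface of it can perfectly well occupy consecutive positions in a filtration, including inside $X$, and then they cannot be transposed. Your fallback argument is the right one and does not need the false claim: since both the original and the target order are linear extensions of the face poset restricted to $X$, and the graph on linear extensions of a finite poset with adjacent transpositions as edges is connected, you can choose a sequence of transpositions each of which swaps two incomparable, consecutive simplices and keeps the order a valid filtration throughout. That suffices for every invocation of \cref{lem:transposition} to be legitimate, so the induction goes through; you should simply drop the incorrect incomparability claim and lead with the linear-extension argument.
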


\section{Singleton Loss}
\label{sec:singleton-loss}

Virtually every topological loss proposed in the literature can be rephrased as
a partial matching: some points in the diagram are prescribed targets, where
they need to move. For example, the simplification loss,
\begin{equation}
    \loss_\ee(f) = \sum_{\substack{(b,d) \in \Dgm(f)\\(d - b) \leq \ee}} (d-b)^2
    \label{eq:simplification-loss}
\end{equation}
can be formulated as a partial matching $M$, where every point $(b,d) \in \Dgm(f)$
with $(d-b) \leq \ee$ is matched to the point $((b+d)/2, (b+d)/2)$. Then the
loss can be re-written as
\[
    \loss_\ee(f) = \sum_{(p,q) \in M} (p-q)^2.
\]

We consider the simplest such setting, where the partial matching consists of a
single pair, $(p,q)$. We call this \emph{singleton loss}. We assume that
$p = (b,d) = (f(\ssx), f(\tsx))$ and $q = (b',d')$.
The loss itself is $\loss = (p - q)^2$. We will follow the gradient flow of this
loss and keep track of point $p$. Specifically, we denote by $p_t$ the image of $p$ under
gradient flow after time $t$, and write the loss $\loss_t = (p_t - q)^2$, which
defines the gradient at every point in time.
We note that the loss is oblivious to what happens to the other points in the
persistence diagram.

Following the gradient to minimize this loss translates into moving simplices
$\ssx$ and $\tsx$ in the filtration to their target values $b'$ and $d'$. We
first focus on the negative $p$-simplex $\tsx$. Suppose there are $m$ simplices with values
between $d$ and $d'$, and $m_p$ of them are $p$-simplices.
As we increase or decrease the value of $\tsx$ (depending
on whether $d' > d$ or $d' < d$), it is going to reach the value of each one of
the $m_p$ $p$-simplices. For each such simplex $\tsx_k$, we must determine what happens if
we place it before $\tsx$, when increasing the value, or after $\tsx$, when
decreasing.  If doing so changes the pairing of $\ssx$ to $\tsx_k$, then
$\tsx_k$ needs to move together with $\tsx$ (and enter the \emph{critical set}
$X_\ssx$, defined below). If not, we can safely skip over $\tsx_k$ (a fact that
itself requires a proof).

It is not immediately obvious, but we prove in the next subsection that when
determining the fate of $\tsx_k$, it is not necessary to consider all $k!$
possible orders of the simplices that are moving together with $\tsx$ (as one
might reasonably expect in general~\cite{leygonie2021gradient}); determining the
pairing for a single order suffices.

Besides moving the simplices of the same dimension as $\tsx$, which are the only
simplices that may take over the pairing with $\ssx$, we must move all of their
cofaces, when increasing the value, or their faces, when decreasing the value.
This is required simply to ensure that our simplex order defines a filtration.
We revisit this topic in \cref{sec:faces-cofaces}.

\subsection{Critical Set}
\label{sec:critical-set}

As we move a $p$-simplex
$\tsx$, paired with a $q$-simplex $\ssx$\footnote{The only possible values of $q$ are $(p-1)$ or $(p+1)$.},
in the filtration, we maintain a critical set of $p$-simplices that move together
with $\tsx$ under the gradient flow of the singleton loss.
We say that a set of $p$-simplices is \emph{contiguous}, if their columns are
contiguous in matrix $D_p$.

\begin{definition}
    \label{def:critical-set}
    Given a $q$-simplex $\ssx$, a set of contiguous $p$-simplices $X_\ssx$
    is \emph{critical}, if placing any $\tsx' \in X_\ssx$ as the first simplex
    (when increasing the value of $\tsx$) or as the last simplex (when
    decreasing the value of $\tsx$) in the set makes it paired with $\ssx$.
\end{definition}

The critical set is well-defined because whether simplices $\ssx$ and $\tsx'$
are paired depends only on what simplices appear between $\ssx$ and $\tsx'$, not
on their order. In other words, re-ordering the simplices in the critical set
$X_\ssx$ does not change the pairing of the (first or last) simplex in the
set that is paired with $\ssx$.
This argument implies that when we add a simplex to the critical set, we don't
lose any of the simplices already in it.

\begin{lemma}
    \label{lem:critical-set-add}
    Suppose $X_\ssx^{k-1}$ is a critical set and $\tsx_k$ appears immediately before or
    after the set (depending on the direction that $\tsx$ is moving).
    Suppose that transposing $X_\ssx^{k-1}$ and $\tsx_k$ changes the pairing of
    $\ssx$ to $\tsx_k$. Then $X_\ssx^k = X_\ssx^{k-1} \cup \tsx_k$ becomes the
    critical set after the transposition.
\end{lemma}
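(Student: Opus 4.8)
The plan is to verify the two defining properties of a critical set (Definition~\ref{def:critical-set}) for $X_\ssx^k = X_\ssx^{k-1} \cup \{\tsx_k\}$: first, that $X_\ssx^k$ is a contiguous set of $p$-simplices, and second, that placing \emph{any} member of $X_\ssx^k$ in the extremal position (first, when increasing; last, when decreasing) pairs it with $\ssx$. Contiguity is immediate from the hypothesis that $\tsx_k$ sits immediately before or after the contiguous set $X_\ssx^{k-1}$. So the work is entirely in the second property.

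First I would treat the newcomer $\tsx_k$. By hypothesis, after transposing the block $X_\ssx^{k-1}$ past $\tsx_k$, simplex $\tsx_k$ is paired with $\ssx$; this is exactly the statement that $\tsx_k$ in the extremal position of $X_\ssx^k$ is paired with $\ssx$. (I would invoke Corollary~\ref{cor:reordering} to justify that moving the whole block $X_\ssx^{k-1}$ past the single simplex $\tsx_k$ only affects pairs with an endpoint in $X_\ssx^{k-1} \cup \{\tsx_k\}$, so the transposition of the block with $\tsx_k$ is well-defined and the pairing of $\ssx$ with $\tsx_k$ is the relevant quantity.) Then I would handle an arbitrary $\tsx' \in X_\ssx^{k-1}$. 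Here the key observation, already emphasized in the paragraph preceding the lemma, is that whether $\ssx$ and a given simplex are paired depends only on \emph{which} simplices lie strictly between them in the filtration, not on the internal order of those simplices — this is Corollary~\ref{cor:reordering} applied to the contiguous block $X_\ssx^k$. So pairing is invariant under reordering inside $X_\ssx^k$. Now take the configuration that witnesses $\tsx_k$ being paired with $\ssx$, i.e. $\tsx_k$ extremal and $X_\ssx^{k-1}$ following (or preceding) it. Inside this configuration I reorder the non-extremal part so that $\tsx'$ becomes extremal within $X_\ssx^{k-1}$ — but that changes which simplex of $X_\ssx^k$ is extremal overall. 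Instead I should argue directly: put $\tsx'$ in the extremal slot of $X_\ssx^k$, i.e. $\tsx'$ first, then the remaining $p$-simplices of $X_\ssx^k$ (which is $X_\ssx^{k-1}\setminus\{\tsx'\}$ together with $\tsx_k$) in any order. Since $X_\ssx^{k-1}$ was critical, $\tsx'$ placed first within $X_\ssx^{k-1}$ alone is paired with $\ssx$; I must show that inserting $\tsx_k$ somewhere after $\tsx'$ does not disturb this pairing.

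That last point is the main obstacle, and I would dispatch it as follows. Whether $\ssx$ pairs with $\tsx'$ is determined by the reduced matrix $R_p$ restricted to the relevant columns, equivalently by which classes are alive when $\tsx'$ enters. Adding $\tsx_k$ strictly after $\tsx'$ cannot affect the reduction of the column $\col{R_p}{\tsx'}$ at all (the reduction of a column uses only earlier columns), so $\low\col{R_p}{\tsx'}$ — hence the pairing of $\tsx'$ — is unchanged. Symmetrically, in the decreasing case one uses cohomology/the anti-transpose (the duality paragraph), where $\tsx'$ is the last simplex of the block and $\tsx_k$ is inserted strictly before it, so again the relevant reduced column is untouched. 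Thus every simplex of $X_\ssx^k$ pairs with $\ssx$ when placed extremally, and combined with contiguity this is exactly Definition~\ref{def:critical-set}, so $X_\ssx^k$ is critical. Finally I would remark that this proves the claim in the text that adding a simplex to the critical set never ejects an existing member: all of $X_\ssx^{k-1}$ survives in $X_\ssx^k$.
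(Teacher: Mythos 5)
Your proof is correct and takes essentially the same route as the paper; the paper does not attach a separate proof to \cref{lem:critical-set-add}, instead treating it as immediate from the paragraph that precedes it, which observes that reordering simplices strictly between $\ssx$ and $\tsx'$ does not change whether they are paired. You make the same observation precise. Two small remarks. First, the phrase ``inserting $\tsx_k$ somewhere after $\tsx'$'' misleadingly suggests $\tsx_k$ is being added to the filtration; of course it was always present, and the only thing that changes is whether $\tsx_k$ is counted inside the critical block. Since it lies on the non-extremal side of $\tsx'$ in either reading, this is harmless. Second, your appeal to cohomology in the decreasing case is an unnecessary detour: the argument that covers both directions uniformly is the one you also state, namely that $X_\ssx^k \setminus \{\tsx'\}$ is a contiguous set containing neither $\ssx$ nor $\tsx'$, so by \cref{cor:reordering} reordering it cannot disturb the pair $(\ssx,\tsx')$. (The ``reduction of a column uses only earlier columns'' shortcut applies directly in the increasing case but not the decreasing one, where $\tsx_k$ sits \emph{before} $\tsx'$ --- which is presumably why you reached for cohomology --- but \cref{cor:reordering} already suffices.)
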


A key property of the critical set, expressed in the following lemma, is that it
is resilient under transpositions. If a simplex $\tsx'$ can transpose with the
critical set without becoming paired with $\ssx$, then the critical set does not
change after the transposition.

\begin{lemma}
    \label{lem:critical-set}
    Suppose $X_\ssx^{k-1}$ is a critical set and $\tsx_k$ appears immediately before or
    after the set (depending on the direction that $\tsx$ is moving).
    Suppose that transposing $X_\ssx^{k-1}$ and $\tsx_k$ does not change the pairing of
    $\ssx$. Then $X_\ssx^k = X_\ssx^{k-1}$ remains critical after the transposition.
\end{lemma}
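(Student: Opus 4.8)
\emph{Approach.} The plan is to recast the combinatorial notion of a critical set as a statement about reduced boundary columns and then finish with one short piece of linear algebra. I would carry out the argument in the representative case where $\tsx$ moves to larger values and $\ssx$ is a $(p-1)$-simplex; the case of decreasing values is the reflected argument (``first''/``before'' become ``last''/``after''), and the case $q=p+1$ is its dual via the anti-transposed matrices of \cref{sec:background}, so I would dispatch those at the end rather than repeat the computation. Write $X_\ssx^{k-1}=\{\tsx_1<\dots<\tsx_r\}$ in its current order, and let $\tsx_k$ be the $p$-simplex immediately after the block. Since $X_\ssx^{k-1}$ is critical, its current first element is paired with $\ssx$, i.e.\ $\low \col{R_p}{\tsx_1}=\ssx$. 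After transposing, $\tsx_k$ sits immediately before the whole block (which stays contiguous), and by hypothesis $\ssx$ keeps its pair $\tsx_1$; in particular the reduced column of $\tsx_k$ in the new filtration does not have its lowest nonzero in row $\ssx$, for otherwise $\ssx$ would pair with $\tsx_k$, which now precedes $\tsx_1$.

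\emph{Reformulation.} A contiguous block of $p$-simplices is critical with respect to $\ssx$ exactly when, for each $\tsx'$ in the block, the column one obtains by reducing $\col{D_p}{\tsx'}$ against the $p$-columns preceding the block has its lowest nonzero in row $\ssx$: placing $\tsx'$ first makes precisely those columns precede it, and since $R_p$ is reduced, ``the lowest nonzero the column reaches is $\ssx$'' is the same as ``$\tsx'$ is paired with $\ssx$''. Write $W_0$ for the span of the $p$-columns preceding the block (the low of a reduced column depends only on this span), so the criticality hypothesis says every $\col{D_p}{\tsx'}$ reduces against $W_0$ to row $\ssx$. After the transposition, whichever block simplex $\tsx_j$ we place first, the $p$-columns preceding it span $W:=W_0+\langle\col{D_p}{\tsx_k}\rangle$ — a span that, crucially, does not depend on the internal order of the block, since $\tsx_k$ precedes every block simplex. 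So the goal reduces to: for every $\tsx_j$ in the block, reducing $\col{D_p}{\tsx_j}$ against $W$ still lands in row $\ssx$.

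\emph{The core.} Both $\col{D_p}{\tsx_1}$ and $\col{D_p}{\tsx_j}$ reduce against $W_0$ to lowest nonzero $\ssx$; their reduced forms therefore agree in row $\ssx$ after scaling by a suitable $\alpha\neq 0$ and vanish in every row below it, so $v:=(\text{reduced }\col{D_p}{\tsx_j})-\alpha(\text{reduced }\col{D_p}{\tsx_1})$ lies in $\col{D_p}{\tsx_j}-\alpha\,\col{D_p}{\tsx_1}+W_0$ and has its lowest nonzero strictly above row $\ssx$; equivalently $\col{D_p}{\tsx_j}\equiv\alpha\,\col{D_p}{\tsx_1}+v\pmod{W}$. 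Now use the elementary fact that adding to a column another column whose nonzeros all lie strictly above its current reduced pivot does not move that pivot — the pivot row and every row below it are untouched. By hypothesis $\col{D_p}{\tsx_1}$, hence $\alpha\,\col{D_p}{\tsx_1}$, reduces against $W$ to pivot $\ssx$, so $\alpha\,\col{D_p}{\tsx_1}+v$ does too, and therefore so does $\col{D_p}{\tsx_j}$, since the reduced pivot depends only on the class modulo $W$. Thus, after the transposition, placing any $\tsx_j$ first in the block pairs it with $\ssx$; by \cref{def:critical-set} the set $X_\ssx^{k-1}$ is still critical, which is the assertion.

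\emph{Main obstacle and remaining cases.} The only step with real content is this reformulation together with the small perturbation lemma; the care it needs is purely bookkeeping about which rows of the columns in play are forced to vanish, and the rest is assembly. The hypothesis enters in exactly one place: it forces $\tsx_k$'s reduced column not to have pivot $\ssx$, which is what makes $W$ — rather than a strictly larger span — the correct thing to reduce $\col{D_p}{\tsx_1}$ against; if instead $\tsx_k$ seized the pivot $\ssx$, we would be in the situation of \cref{lem:critical-set-add}. The decreasing-value case is the reflected argument — one removes $\tsx_k$'s column from, rather than adds it to, the relevant span, and the hypothesis is again exactly what keeps the pivot at $\ssx$ — and the case $q=p+1$ follows by the homology/cohomology duality; neither introduces a new idea, so I would not write them out in full.
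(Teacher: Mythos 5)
The increasing case (placing block simplices \emph{first}, the span $W_0$ of preceding $p$-columns fixed) is handled correctly, and the route is genuinely different from the paper's. The paper argues at the level of pairing combinatorics: it transposes $\tsx_k$ past all but the last block simplex (invoking \cref{cor:reordering}), and then observes that the remaining adjacent transposition cannot switch pairs without violating the hypothesis (via \cref{lem:transposition}). You instead translate \cref{def:critical-set} into a statement about reduced pivots of cosets modulo a span of boundary columns, and transfer the hypothesis from $\tsx_1$ to an arbitrary $\tsx_j$ using the elementary perturbation fact. That fact is correct and your verification of it is fine; this buys a short, self-contained linear-algebra argument that does not rely on the vineyards transposition machinery.

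The gap is in the claimed symmetry. Your argument hinges on the sentence ``the $p$-columns preceding it span $W := W_0 + \langle\col{D_p}{\tsx_k}\rangle$ --- a span that, crucially, does not depend on the internal order of the block.'' That is true only when the block simplex under inspection is placed \emph{first}. In the decreasing case \cref{def:critical-set} requires placing $\tsx_j$ \emph{last}, so the columns preceding it include every other block column, and the resulting span $W_0 + \operatorname{span}\{\col{D_p}{\tsx_i} : i \neq j\}$ genuinely depends on $j$ (the block columns are independent modulo $W_0$, since each produces a distinct pivot). So ``reflecting first $\leftrightarrow$ last'' does not hand you a fixed coset to reason about, and the hypothesis --- which speaks only of $\tsx_r$ reduced against $W_0 + \operatorname{span}\{\col{D_p}{\tsx_i} : i \neq r\}$ --- does not directly anchor $\tsx_j$ for $j \neq r$. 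The decreasing case can still be made to work in your framework, but by a different mechanism: the hypothesis produces an element $z$ of $W_0 + \operatorname{span}\{\col{D_p}{\tsx_i} : \text{all } i\}$ with low exactly $\ssx$, and one shows that if reducing $\col{D_p}{\tsx_j}$ after removing $\col{D_p}{\tsx_k}$ raised the pivot above $\ssx$, then $z$ together with the pre-transposition reduction would push the pivot \emph{below} $\ssx$, a contradiction. That is a real argument with different structure, not a word swap. This is the one place where the paper's transposition proof is more uniform: it never speaks of spans, so the first/last asymmetry never appears, and both directions really are the same argument with the words reversed.
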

\begin{proof}
    Suppose we are decreasing the value of $\tsx$ and, therefore, by definition,
    any simplex in $X_\ssx^{k-1}$, when placed last, is paired with $\ssx$. Let $\tsx'$
    be this last simplex in $X_\ssx^{k-1}$ in the filtration order. Transpose
    $\tsx_k$ with all but the last simplex in the critical set. The last
    simplex, $\tsx'$, remains paired with $\ssx$ (by \cref{cor:reordering}).
    Now transpose $\tsx_k$ and $\tsx'$.
    Their pairing doesn't change, since $\tsx'$ does not become paired with
    $\ssx$ by the assumption of the lemma, and $\tsx'$ remains paired with
    $\ssx$. Since this argument holds for every $\tsx' \in X_\ssx^{k-1}$, the
    critical set does not change.

    The same argument applies when increasing the value of $\tsx$ by replacing ``last''
    with ``first.''
\end{proof}

\cref{lem:critical-set-add,lem:critical-set} together mean that as we move
simplex $\tsx$, the critical set can only grow: simplices enter, but never
leave.
\cref{lem:critical-set} suggests \cref{alg:transpositions}
for changing the value of $\tsx$, using transpositions~\cite{vineyards}:
for each of the $m_p$ $p$-simplices
with values between $d$ and $d'$, transpose it past the critical set. If
its pairing changes to $\ssx$ (or if the transposition is impossible because it
is a face or a coface of one of the simplices in the critical set), add it to
the critical set. Because each transposition takes linear time~\cite{vineyards},
the first for-loop runs in $\bigO(m^2 n)$ time.
The second for-loop can be implemented as a breadth-first search through the
graph of the face--coface relationships
(called a Hasse diagram), so it takes $\bigO(dm)$ time, where $d$ is the dimension of $K$.
Because $d < n$, the former dominates, and we get $\bigO(m^2 n)$ running time
for the whole algorithm.

\begin{algorithm}
    \caption{Moving $\tsx$ using individual transpositions.}
    \begin{algorithmic}[1]
    \State $X_\ssx^1 = \{ \tsx \}$ \;
    \For{\textit{each $p$-simplex $\tsx_k$ with $f(\tsx_k)$ from $d$ to $d'$}}
        \State {transpose $\tsx_k$ with each simplex in $X_\ssx^{k-1}$,}
        \State {\quad updating the pairing using the algorithm in \cite{vineyards}}
        \label{lin:transpose-forward} \;
        \If{$\tsx_k$ \textit{becomes paired with} $\ssx$}
            \State $X_\ssx^k = X_\ssx^{k-1} \cup \{ \tsx_k \}$ \;
            \State {transpose $\tsx_k$ with each simplex in $X_\ssx^{k-1}$,}
            \State {\quad undoing the transpositions in \cref*{lin:transpose-forward}, }
            \State {\quad returning it to the opposite end of $X_\ssx^k$ }
            \label{lin:transpose-back}
        \Else
            \State $X_\ssx^k = X_\ssx^{k-1}$ \;
        \EndIf
    \EndFor
    \For{\textit{each} $\tsx \in X_\ssx^{m_p}$}
        \label{lin:face-coface-for-begin}
        \State set $f(\tsx) = d'$ (ties broken implicitly via the original order) \;
        \If{$d' > d$}
            \State {// move cofaces}
            \For{\textit{each simplex $\rho \supseteq \tsx$, and $d < f(\rho) < d'$}}
                \State set $f(\rho) = d'$ \;
            \EndFor
        \Else
            \State {// move faces}
            \For{\textit{each simplex $\ssx \subseteq \tsx$, and $d' < f(\ssx) < d$}}
                \State set $f(\ssx) = d'$ \;
                \label{lin:face-coface-for-end}
            \EndFor
        \EndIf
    \EndFor
    \end{algorithmic}
    \label{alg:transpositions}
\end{algorithm}

\begin{remark*}
    The transpositions in \cref*{lin:transpose-back} are unnecessary, but they
    simplify the proofs below.
\end{remark*}

Our main contribution is an algorithm for identifying the entire critical set in
$\bigO(m)$ time, without having to perform the transpositions. 
The resulting effect is illustrated in \cref{fig:big-step}, where the gradient
flow implicitly traced by \cref{alg:transpositions} follows the brown
curve. By identifying the critical set, we can move directly to the final
destination --- taking a ``big step'' --- as illustrated with the blue curve.

\begin{figure}
    \centering
    \includegraphics{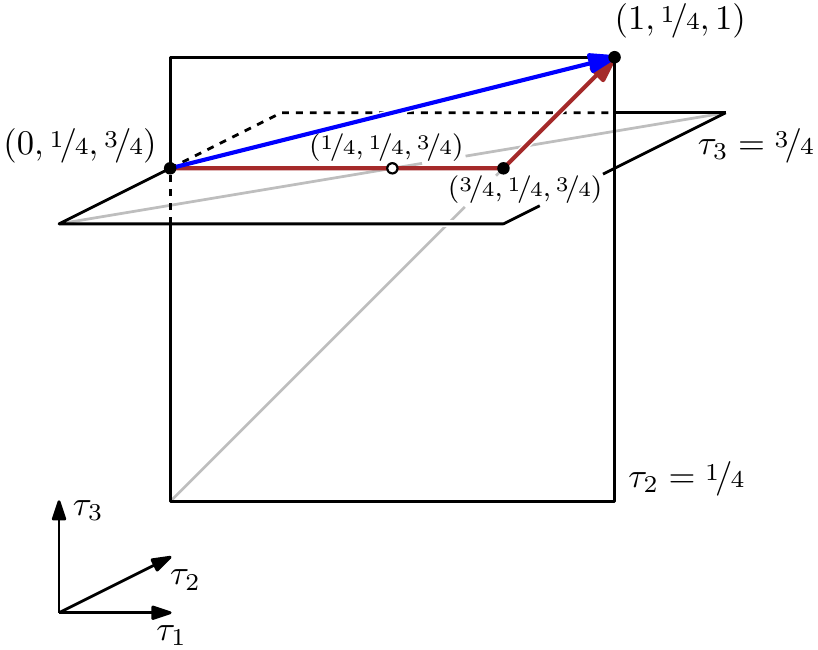}
    \caption{Three simplices, $\tsx_1, \tsx_2, \tsx_3$ have initial values
             $(0,\nicefrac{1}{4},\nicefrac{3}{4})$. Our goal is to increase the
             value of $\tsx_1$ to $1$, and we assume that its critical set
             includes simplex $\tsx_3$, but not $\tsx_2$.
             The final simplex values are $(1,\nicefrac{1}{4},1)$.
             The path taken by the gradient flow is shown in brown.
             The big step that our algorithm identifies, in blue.}
    \label{fig:big-step}
\end{figure}

\subsection{Increase Death}
\label{sec:increase-death}
Suppose we are trying to increase the value of $\tsx$, paired with $\ssx$, from $d$ to $d'$.
And suppose decomposition $D_p = R_pU_p$ is obtained using a lazy reduction.
Then it suffices to examine the row $\row{U_p}{\tsx}$ to identify the simplices
that must move together with $\tsx$. Specifically,
\begin{equation}
    \label{eq:increase-death}
    X_\ssx =
            \left\{ \tsx_i \;\middle\vert\;
            \begin{array}{@{}l@{}}
                d \leq f(\tsx_i) \leq d', \\
                U_p[\tsx,\tsx_i] \neq 0   
            \end{array}
            \right\}
\end{equation}
is the final critical set that we would accumulate under the gradient flow.
In other words, it suffices to move simplices in $X_\ssx$ --- and their cofaces
--- directly by setting $f(\tsx_i) = d'$.

\begin{theorem}
    \label{thm:increase-death}
    The critical set $X_\ssx$ defined in \cref{eq:increase-death} is the
    set of simplices accumulated by \cref{alg:transpositions}, when increasing
    the value of a negative simplex $\tsx$.
\end{theorem}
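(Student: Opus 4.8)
The plan is to show the two characterizations of the critical set agree by establishing a bijection between the transposition process of \cref{alg:transpositions} and the nonzero pattern of the row $\row{U_p}{\tsx}$. The key is \cref{lem:lazy-reduction} and its corollaries: they tell us that $U_p[\tsx,\tsx_i] \neq 0$ forces $\low\col{R_p}{\tsx_i} \leq \low\col{R_p}{\tsx}$, i.e. the paired birth simplex of $\tsx_i$ comes no later than $\ssx$. The first direction I would prove is that every $\tsx_i$ contributing to $X_\ssx$ in \cref{eq:increase-death} must be added by the algorithm. The argument is that when $\tsx_i$ is transposed past the current critical set (which contains $\tsx$ and other simplices with even-earlier pairings, by \cref{cor:reordering} their pairs are undisturbed), the fact that $U_p[\tsx,\tsx_i]\neq 0$ means $\tsx_i$'s boundary chain genuinely involves the cycle carried in $\col{R_p}{\tsx}$ — placing $\tsx_i$ before $\tsx$ lets it "claim" the death of $\ssx$, changing its pairing to $\ssx$. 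Here I need a lemma relating the entry $U_p[\tsx,\tsx_i]$ to what a single transposition does, essentially a quantitative version of \cref{lem:transposition} read off the decomposition $D_p = R_p U_p$.

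The converse direction — that nothing else enters the critical set — is where I expect the real work. Suppose $\tsx_k$ has $U_p[\tsx,\tsx_k] = 0$. I must show that transposing $\tsx_k$ past the entire accumulated critical set $X_\ssx^{k-1}$ never changes its pairing to $\ssx$. The subtlety flagged in \cref{sec:singleton-loss} is exactly the $k!$-orders issue: the critical set has already been reshuffled by the earlier transpositions, so $\tsx_k$ meets it in a state that is not the original filtration. The strategy is to argue that the relevant invariant is preserved: after processing simplices $\tsx_1,\dots,\tsx_{k-1}$, the submatrix of $U_p$ restricted to $X_\ssx^{k-1}$, together with the row $\row{U_p}{\tsx}$, evolves in a controlled way, and $U_p[\tsx,\tsx_k] = 0$ together with \cref{cor:lazy-U} implies $\tsx_k$'s reduced column never develops a lowest-one in row $\ssx$ even after the transpositions. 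Concretely, I would track how the lazy reduction would proceed if $\tsx_k$ were inserted just before $\tsx$: the reduction of $\col{R_p}{\tsx_k}$ would need to cancel an entry in row $\ssx$, which by the structure of $R_p U_p = D_p$ is governed precisely by $U_p[\tsx,\tsx_k]$; if this is zero, no cancellation against $\col{R_p}{\tsx}$ is triggered and $\tsx_k$ pairs with a birth simplex $\neq \ssx$.

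To make the bookkeeping tractable I would set up an induction on $k$ with the hypothesis: (i) $X_\ssx^{k-1} = \{\tsx_i : d \leq f(\tsx_i), i \leq k-1, U_p[\tsx,\tsx_i]\neq 0\} \cup \{\tsx\}$; and (ii) a "normal form" claim that after the transpositions, the joint state of the critical-set columns of $R_p$ is reduction-equivalent to what one gets by first moving the $U_p$-nonzero simplices to be contiguous with $\tsx$ in the original order and then reducing. Part (ii) is what lets \cref{cor:reordering} and \cref{lem:critical-set} do their work cleanly: since reordering within the critical set does not change the pairing of the $\ssx$-paired simplex, I can always analyze the transposition of $\tsx_k$ against a single representative. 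The inductive step then splits into the two cases above. I would expect the hardest technical point to be proving (ii) — i.e., that the sequence of elementary transpositions performed by \cref{alg:transpositions} really does realize the decomposition-level picture — since the transposition algorithm of \cite{vineyards} does its own local column operations that must be reconciled with the global $R_p = D_p V_p$, $D_p = R_p U_p$ structure; once (ii) is in hand, the case analysis on whether $U_p[\tsx,\tsx_k]$ vanishes is comparatively direct. Finally, the cofaces are handled separately and do not interact with the $U_p$-row argument, so the face--coface closure at the end of \cref{alg:transpositions} is orthogonal and can be cited from \cref{sec:faces-cofaces}.
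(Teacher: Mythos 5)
Your plan correctly identifies the key ingredients --- \cref{lem:lazy-reduction}, \cref{cor:lazy-U}, induction on $k$ over the $m_p$ relevant $p$-simplices, a case split on whether $U_p[\tsx,\tsx_k]$ vanishes, and the need to argue that $\row{U_p}{\tsx}$ is, in some sense, stable across the induction --- and this is indeed the paper's overall structure. But the crux, which you flag as ``the hardest technical point'' and phrase as an unresolved ``normal form'' invariant (your item (ii)), is precisely the idea the paper supplies and you are missing. The paper never tries to reconcile the vineyards updates with the global decomposition by tracking a reduction-equivalent state of the critical-set columns. Instead, for each $k$ it performs a one-shot side calculation on the \emph{original} decomposition: it takes the submatrix of $D_p = R_pU_p$ restricted to the range $\tsx,\ldots,\tsx_k$, and zeroes out $\col{U_p}{\tsx_k}$ in that range by adding multiples of $\row{U_p}{\tsx_k}$ to the rows above it; the compensating column operations subtract multiples of $\col{R_p}{\tsx_i}$ (for $\tsx_i$ in the critical set) from $\col{R_p}{\tsx_k}$, producing a new factorization $D_p = R_p'U_p'$ in which $\tsx_k$ is decoupled from the critical set, so its transposition past $X_\ssx^{k-1}$ is trivial to read off from $\low\col{R_p'}{\tsx_k}$. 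The case analysis is then immediate: if $\ssx_k > \ssx$, the low cannot move up to $\ssx$ because by the inductive hypothesis and \cref{cor:lazy-U} every critical-set column has low $\leq \ssx$, and \cref{lem:lazy-reduction} forces $U_p[\tsx,\tsx_k]=0$ in this case; if $\ssx_k < \ssx$, then $\low\col{R_p'}{\tsx_k} = \ssx$ if and only if $U_p[\tsx,\tsx_k]\neq 0$. Finally, the reason one may keep consulting the \emph{original} row $\row{U_p}{\tsx}$ at each step --- the point you gesture at but do not pin down --- is exactly that skipped simplices have $U_p[\tsx,\tsx_k]=0$, so the vineyards transposition leaves this row unchanged.

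Two smaller points. Your forward direction (``if $U_p[\tsx,\tsx_i]\neq 0$ then $\tsx_i$ claims the death of $\ssx$'') is not actually a separate direction in the paper; both directions fall out of the same decoupling computation, so you should not structure the proof as a bijection with two independent implications. And when you appeal to ``a lemma relating $U_p[\tsx,\tsx_i]$ to what a single transposition does'', note that no such stand-alone lemma is needed or proved in the paper --- the decoupling argument plays that role globally rather than transposition by transposition, which is what lets the proof sidestep the $k!$-orders issue entirely. The faces/cofaces remark at the end of your proposal is fine and matches the paper's treatment.
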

\begin{proof}
    Suppose there are $m_p$ $p$-simplices with $d \leq f(\tsx_i) \leq d'$.
    Denote the first $k$ of them with $Y_k$.
    We prove the claim by induction. Restrict the set $X_\ssx$ from
    \cref{eq:increase-death} to the set
    \begin{equation}
        \label{eq:increase-death-k}
        X_\ssx^k =
            X_\ssx \cap Y_k.
    \end{equation}
    We claim that
    this set is
    the same as its namesake in \cref{alg:transpositions}.

    The statement is trivially true for the base case: $X_\ssx^1 = \{ \tsx \}$.

    Consider the steps taken by \cref{alg:transpositions}. Suppose the claim is
    true after $k-1$ steps.
    By induction, all simplices $\tsx_i$ in $X_\ssx^{k-1}$ have
    $U_p[\tsx,\tsx_i] \neq 0$. Since the reduction is lazy,
    \cref{cor:lazy-U} implies
    $\ssx_i = \low \col{R_p}{\tsx_i} \leq \ssx$.
    At step $k$, we decide whether simplex $\tsx_k$ needs to be added to the critical set.

    \begin{figure}
        \centering
        \includegraphics{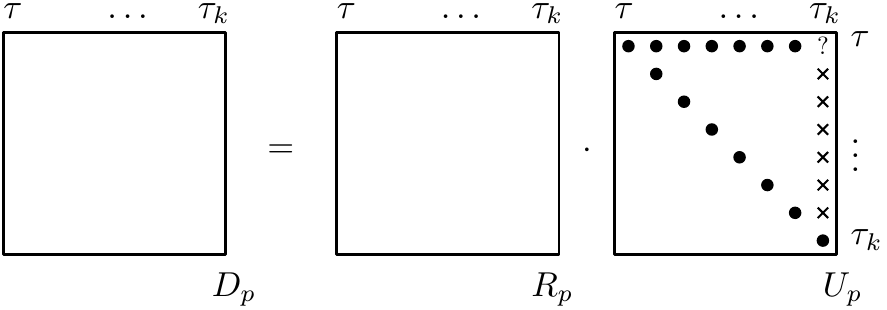}
        \caption{Subset of the matrices $D_p=R_pU_p$ involved in the proof of
                \cref{thm:increase-death}.}
        \label{fig:dru}
    \end{figure}

    Consider the subset of the $D_p=R_pU_p$ decomposition, restricted to the critical
    set and $\tsx_k$, i.e., simplices in the range $\tsx \ldots \tsx_k$; see~\cref{fig:dru}.
    We can zero out the column $\col{U_p}{\tsx_k}$ in this range using row operations in matrix
    $U_p$, adding multiples of row $\row{U_p}{\tsx_k}$ to the rows $\row{U_p}{\tsx_i}$
    above it. The corresponding operations in matrix $R_p$, which maintain the
    decomposition, subtract multiples of columns $\col{R_p}{\tsx_i}$ from column
    $\col{R_p}{\tsx_k}$. Denote the former by matrix $W$ and the latter by
    $W^{-1}$. We have $D_p = (R_p \cdot W^{-1}) \cdot (W \cdot U_p) = R_p' U_p'$.

    Once column $\col{U_p'}{\tsx_k}$ is zeroed out, we can
    transpose $\tsx_k$ with the critical set $X_\ssx^{k-1}$. The columns of the
    critical set may need to be reduced further, but the column
    $\col{R_p'}{\tsx_k}$ is already reduced, and therefore we can infer the pairing
    of $\tsx_k$ after the transposition.

    Denote by $\ssx_k = \low \col{R_p}{\tsx_k}$, the pair of $\tsx_k$ before the
    transposition.  If $\ssx_k > \ssx$, then it remains so
    after the transposition: by the inductive hypothesis
    $\ssx_i = \low \col{R_p}{\tsx_i} \leq \ssx$ for all $\tsx_i \in X_\ssx^{k-1}$,
    and therefore adding these columns to $\col{R_p}{\tsx_k}$ doesn't change its lowest
    non-zero. We note that because the reduction is lazy, in this case
    $U_p[\tsx,\tsx_k] = 0$ by \cref{lem:lazy-reduction}.

    If $\ssx_k < \ssx$, then we need to examine $U_p[\tsx, \tsx_k]$.
    If it is zero, then after the transposition $\ssx_k' = \low \col{R_p'}{\tsx_k}$
    remains less than $\ssx$, and therefore $\tsx_k$ does not become paired with $\ssx$.
    If $U_p[\tsx,\tsx_k] \neq 0$, then $\ssx = \low \col{R_p'}{\tsx_k}$ and $\tsx_k$
    enters the critical set.

    To summarize, $\tsx_k$ enters the critical set $X_\ssx^k$ if and only if
    $U_p[\tsx,\tsx_i] \neq 0$.
    In other words, $X_\ssx^k$ in \cref{eq:increase-death-k} and in
    \cref{alg:transpositions} are the same.

    It is crucial to our argument that if $\tsx_k$ does not enter the critical
    set, and therefore moves past it, that $U_p[\tsx,\tsx_k] = 0$. Because of this
    property, the row $\row{U_p}{\tsx}$ does not change via
    matrix updates in the induction, and therefore the entries that we encounter
    in the row at any step are the same.

\end{proof}

\subsection{Decrease Death}
\label{sec:decrease-death}

Suppose we are trying to decrease the value of simplex $\tsx$ from $d$ to $d'$.
And suppose decomposition $R_p = D_pV_p$ is obtained using a lazy reduction.
Then it suffices to examine the column $\col{V_p}{\tsx}$.
Specifically,
\begin{equation}
    \label{eq:decrease-death}
    X_\ssx =
            \left\{ \tsx_i \;\middle\vert\;
            \begin{array}{@{}l@{}}
                d' \leq f(\tsx_i) \leq d, \\
                V_p[\tsx_i,\tsx] \neq 0 
            \end{array}
            \right\}
\end{equation}
is the final critical set that we would accumulate under the gradient flow.
In other words, it suffices to move simplices in $X_\ssx$ --- and their faces
--- directly by setting $f(\tsx_i) = d'$.

\begin{theorem}
    \label{thm:decrease-death}
    The critical set $X_\ssx$ defined in \cref{eq:decrease-death} is the
    set of simplices accumulated by \cref{alg:transpositions}, when decreasing
    the value of a negative simplex $\tsx$.
\end{theorem}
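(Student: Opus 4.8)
The plan is to mirror the structure of the proof of \cref{thm:increase-death}, but to track the column $\col{V_p}{\tsx}$ instead of the row $\row{U_p}{\tsx}$, and to use \cref{cor:lazy-V} in place of \cref{cor:lazy-U}. Concretely, I would let $m_p$ be the number of $p$-simplices with $d' \leq f(\tsx_i) \leq d$, and denote by $Y_k$ the last $k$ of them in filtration order (so that the $k$-th simplex processed by \cref{alg:transpositions} when decreasing the value is the one ``just below'' the current critical set). I would then prove by induction on $k$ that the restricted set $X_\ssx^k = X_\ssx \cap Y_k$, with $X_\ssx$ as in \cref{eq:decrease-death}, coincides with the critical set built after $k$ steps of \cref{alg:transpositions}. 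The base case is again $X_\ssx^1 = \{\tsx\}$, which is immediate since $V_p[\tsx,\tsx] = 1$.

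For the inductive step, I would proceed as follows. By induction every $\tsx_i \in X_\ssx^{k-1}$ has $V_p[\tsx_i,\tsx] \neq 0$, so by \cref{cor:lazy-V} its pair satisfies $\ssx_i = \low\col{R_p}{\tsx_i} \leq \ssx$. At step $k$ we decide the fate of $\tsx_k$, which sits immediately before the critical set. Restricting the $R_p = D_p V_p$ decomposition to the range $\tsx_k \ldots \tsx$, I would zero out the portion of the row $\row{V_p}{\tsx_k}$ lying within this range by column operations in $V_p$ (adding multiples of $\col{V_p}{\tsx_k}$ to the columns $\col{V_p}{\tsx_i}$ to its right), with the corresponding row operations in $R_p$ maintaining the decomposition; this is the exact dual of the matrix manipulation in \cref{thm:increase-death}. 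Once that row is cleared, transposing $\tsx_k$ past $X_\ssx^{k-1}$ leaves $\col{R_p}{\tsx_k}$ already reduced, so I can read off its new pair. If $\ssx_k = \low\col{R_p}{\tsx_k} > \ssx$, it stays above $\ssx$ after the transposition (the columns we might add have lower ones $\leq \ssx$), so $\tsx_k$ is not captured; and by \cref{lem:lazy-reduction} in this case $V_p[\tsx_k,\tsx] = 0$, matching \cref{eq:decrease-death}. If $\ssx_k < \ssx$, then $\tsx_k$ becomes paired with $\ssx$ after transposition precisely when the relevant entry of $V_p$ in the $\tsx$-column is nonzero, i.e.\ when $V_p[\tsx_k,\tsx] \neq 0$; if that entry is zero the lowest one of $\col{R_p}{\tsx_k}$ stays below $\ssx$ and $\tsx_k$ moves past. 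In all cases $\tsx_k$ enters $X_\ssx^k$ iff $V_p[\tsx_k,\tsx] \neq 0$, which is the claim.

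The step I expect to be the main obstacle — and the one requiring the most care — is the closing observation that the matrix updates do not corrupt the column $\col{V_p}{\tsx}$ itself, so that the entries $V_p[\tsx_i,\tsx]$ we consult at later steps are the same ones present in the original lazily reduced decomposition. In \cref{thm:increase-death} this followed because when $\tsx_k$ was \emph{not} captured one had $U_p[\tsx,\tsx_k] = 0$, so adding row $\row{U_p}{\tsx_k}$ into rows above it never touched column $\tsx$. Here the analogous fact is that when $\tsx_k$ is not captured, $V_p[\tsx_k,\tsx] = 0$, so adding column $\col{V_p}{\tsx_k}$ into columns to its right never perturbs the entry in row $\tsx_k$ of the $\tsx$-column — and more to the point, never perturbs any entry of $\col{V_p}{\tsx}$, since the column operations only write into columns strictly between $\tsx_k$ and $\tsx$, never into $\col{V_p}{\tsx}$ itself, as $\tsx$ lies at the far (upper) end of the processed range. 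I would also note explicitly, as in the earlier theorem, that since \cref{alg:transpositions} reduces already-processed columns only, the pairs of simplices outside the current range do not change, which is what licenses reading off pairings from partial reductions throughout; and that the face/coface bookkeeping in the second for-loop of \cref{alg:transpositions} is identical in the decrease case up to replacing cofaces with faces, so it does not affect which $p$-simplices land in $X_\ssx$.
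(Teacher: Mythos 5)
Your proposal contains a decisive sign error that propagates through the entire inductive step. You assert that $V_p[\tsx_i,\tsx]\neq 0$ together with \cref{cor:lazy-V} gives $\ssx_i = \low\col{R_p}{\tsx_i}\leq\ssx$. But \cref{cor:lazy-V} says that $V[\tsx_i,\tsx_j]\neq 0$ implies $\low\col{R_p}{\tsx_j}\leq\low\col{R_p}{\tsx_i}$; taking $\tsx_j=\tsx$ yields $\ssx\leq\ssx_i$, i.e.\ $\ssx_i\geq\ssx$, the \emph{opposite} of what you wrote. (This is exactly what the paper records.) The reversal then corrupts your case analysis: you claim that $\ssx_k>\ssx$ forces $V_p[\tsx_k,\tsx]=0$ by \cref{lem:lazy-reduction}, but since $\tsx_k<\tsx$, the lemma's hypothesis $\ssx_k<\ssx_j=\ssx$ is what gives the vanishing — again, the opposite sign. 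With the signs corrected, your branch ``$\ssx_k<\ssx$, capture iff $V\neq 0$'' is vacuous (that case always has $V_p[\tsx_k,\tsx]=0$), and your branch ``$\ssx_k>\ssx$, never captured'' is wrong (this is precisely the case where $\tsx_k$ \emph{can} be captured). Net effect: your argument would conclude that nothing is ever added to the critical set.

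There is a second, structural gap: the matrix manipulation is not the right dual of the increase-death argument and is internally inconsistent. You propose to clear $\row{V_p}{\tsx_k}$ by adding multiples of $\col{V_p}{\tsx_k}$ to the columns to its right, ``with the corresponding row operations in $R_p$.'' Right-multiplying $V_p$ by an elementary matrix is a \emph{column} operation on $R_p=D_pV_p$, not a row operation. More importantly, clearing $V_p[\tsx_k,\tsx]$ would require adding a multiple of $\col{V_p}{\tsx_k}$ into $\col{V_p}{\tsx}$ itself, yet you later argue that the operations ``never write into $\col{V_p}{\tsx}$'' — this is only true in the case $V_p[\tsx_k,\tsx]=0$, which is exactly the case that needs no clearing. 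The paper avoids this tangle entirely: it transposes $\tsx_k$ past all simplices of the critical set \emph{except} $\tsx$, notes via \cref{cor:reordering} that $\col{R_p}{\tsx}$ and $\col{V_p}{\tsx}$ are untouched, and then branches directly on whether $V_p[\tsx_k,\tsx]$ is zero, using the explicit ``undo'' computation to show that in the nonzero case $\tsx_k$ inherits the pairing with $\ssx$. Your attempt to make it a mechanical mirror of the $U_p$ argument does not go through because $\tsx_k$ sits on the opposite side of the critical set, so the analogous reduction of $\tsx_k$'s column via a single row/column of $V_p$ is not what a transposition past the critical set actually performs.
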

\begin{proof}
    Suppose there are $m$ simplices with $d' \leq f(\tsx_i) \leq d$.
    Denote the last $k$ of them with $Y_k$.
    We prove the claim by induction. Restrict the set $X_\ssx$ from
    \cref{eq:decrease-death} to the set
    \begin{equation}
        \label{eq:decrease-death-k}
        X_\ssx^k = X_\ssx \cap Y_k.
    \end{equation}
    We claim that
    this set
    is the same as its namesake in \cref{alg:transpositions}.

    The statement is trivially true for the base case: $X_\ssx^1 = \{ \tsx \}$.

    Consider the steps taken by \cref{alg:transpositions}. Suppose the claim is
    true after $k-1$ steps.
    By induction, all simplices $\tsx_i$ in $X_\ssx^{k-1}$ have
    $V_p[\tsx_i,\tsx] \neq 0$. Since the reduction is lazy,
    \cref{cor:lazy-V} implies $\ssx_i = \low \col{R_p}{\tsx_i} \geq \ssx$.
    At step $k$, we decide whether simplex $\tsx_k$ needs to be added to the critical set.

    \begin{figure}
        \centering
        \includegraphics{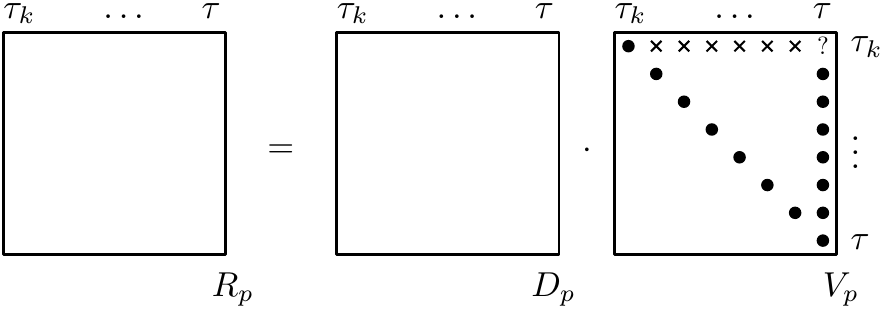}
        \caption{Subset of the matrices $R_p=D_pV_p$ involved in the proof of
                \cref{thm:decrease-death}.}
        \label{fig:rdv}
    \end{figure}

    Consider the subset of the $R_p=D_pV_p$ decomposition, restricted to the $\tsx_k$
    and the critical set, i.e., simplices in the range $\tsx_k \ldots \tsx$; see~\cref{fig:rdv}.
    Suppose we transpose $\tsx_k$ with all the simplices in the critical set
    $X_\ssx^{k-1}$, except for the last simplex $\tsx$. Denote the updated
    matrices $R_p'$ and $V_p'$. By \cref{cor:reordering},
    the pairing may change only among the transposed simplices. In particular,
    columns $\col{R_p'}{\tsx} = \col{R_p}{\tsx}$ and $\col{V_p'}{\tsx} = \col{V_p}{\tsx}$ do not change.

    If $V_p[\tsx_k,\tsx] = 0$, then we can transpose $\tsx$ and $\tsx_k$ without
    changing the pairing. In particular, $\tsx$ remains paired with $\ssx$. If
    $V_p[\tsx_k, \tsx] \neq 0$, then from the contrapositive of
    \cref{lem:lazy-reduction}, before the transposition
    $\ssx_k = \low \col{R_p}{\tsx_k} > \ssx$.
    From the inductive assumption
    (that together with \cref{lem:lazy-reduction} implies that for all
    $\tsx_i \in X_\ssx^{k-1} - \{ \tsx \}$, their pairs $\ssx_i > \ssx$)
    and from \cref{cor:reordering},
    after transposing $\tsx_k$ to just before $\tsx$, its pair
    $\ssx'_k = \low \col{R_p'}{\tsx_k} > \ssx$. To perform the final transposition, we need to
    zero out $V_p'[\tsx_k, \tsx]$, which adds a multiple of column
    $\col{R_p'}{\tsx_k}$ to $\col{R_p'}{\tsx}$. After the transposition, we undo the
    operation in the column of $\tsx_k$, which becomes
    \[
        \col{R_p}{\tsx_k} - (1/\alpha) \cdot (\col{R_p}{\tsx} - \alpha \cdot \col{R_p}{\tsx_k}) = - (1/\alpha) \cdot \col{R_p}{\tsx}
    \]
    where $\alpha = V_p[\tsx_k, \tsx]$. It follows that $\tsx_k$ becomes paired
    with $\ssx$ and therefore enters the critical set.

    To summarize, $\tsx_k$ enters the critical set $X_\ssx^k$ if and only if
    $V_p[\tsx_k,\tsx] \neq 0$.
    In other words, $X_\ssx^k$ in \cref{eq:decrease-death-k} and in
    \cref{alg:transpositions} are the same.

    It is crucial to our argument that if $\tsx_k$ does not enter the critical
    set, and therefore moves past it, that $V_p[\tsx_k,\tsx] = 0$. Because of this
    property, column $\col{V_p}{\tsx}$ does not change via
    matrix updates in the induction, and therefore the entries that we encounter
    in the column at any step are the same.
    This property, guaranteed by the use of the lazy reduction, is used in the
    proof via \cref{lem:lazy-reduction}.
\end{proof}

\begin{remark*}
    The proof of \cref{thm:decrease-death} carries through word-for-word if
    $\tsx$ is a positive unpaired simplex. This makes it possible to
    decrease the birth value of points at infinity by examining the respective
    column in matrix $V_p$. Notably, the argument breaks if simplex $\tsx$ is positive and
    paired. In this case the updates of the rows in matrix $R_p$ complicate the
    transpositions. It is not difficult to construct examples of the latter,
    where it is not enough to examine the columns of matrix $V_p$.
\end{remark*}

\subsection{Increase or Decrease Birth}
\label{sec:birth}

Thanks to duality, we are done.
Increasing and decreasing death in the previous subsection really means
moving $p$-simplex $\tsx$, with non-zero $\col{R_p}{\tsx}$, either to the left or to
the right in the filtration and matrices $D_p,R_p,V_p$, and $U_p$. In the dual
matrices $D_p^\bot,R_p^\bot,V_p^\bot$, and $U_p^\bot$, a simplex $\ssx$,
with non-zero $\col{R_p^\bot}{\ssx}$ is a birth simplex in a finite pair
$(\ssx,\tsx)$.  Moving it to the left in the anti-transposed matrices, whose
rows and columns are ordered in the reverse filtration order,
translates to \emph{increasing} its value in the filtration. Moving the simplex
to the right, to \emph{decreasing} its value.

As a result we get the following two theorems by substituting the dual matrices
into the proofs of \cref{thm:increase-death,thm:decrease-death}.

\begin{theorem}
    Critical set
    \[
        X_\tsx =
            \left\{ \ssx_i \;\middle\vert\;
            \begin{array}{@{}l@{}}
                d \leq f(\ssx_i) \leq d', \\
                V_p^\bot[\ssx_i,\ssx] \neq 0 
            \end{array}
            \right\}
    \]
    is the set of simplices accumulated by \cref{alg:transpositions}, when
    increasing the value of a positive $(p-1)$-simplex $\ssx$ paired with $\tsx$.
\end{theorem}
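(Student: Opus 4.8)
The plan is to obtain this theorem exactly the way the paper announces it: by substituting the dual matrices $D_p^\bot, R_p^\bot, V_p^\bot, U_p^\bot$ into the proof of \cref{thm:decrease-death}. The first step is to set up the duality dictionary with care. Since $\ssx$ is a positive $(p-1)$-simplex paired with the negative $p$-simplex $\tsx$, duality gives $\low\col{R_p^\bot}{\ssx}=\tsx$, so in the cohomology computation the column $\col{R_p^\bot}{\ssx}$ is a nonzero reduced cocycle and $\ssx$ plays the role of a \emph{negative} simplex --- which is precisely the situation \cref{thm:decrease-death} handles, and precisely the reason we must pass to the dual: the remark after \cref{thm:decrease-death} warns that the primal argument breaks for a positive paired simplex because of the row updates in $R_p$, but on the dual side $\ssx$ is negative and those updates never occur. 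The second step is to translate the move ``increase $f(\ssx)$ from its current value $d$ to $d'$'' into the dual picture. The dual matrices are ordered in reverse filtration order, so raising $f(\ssx)$ slides $\ssx$ toward the front of that order, i.e., to the left in all four dual matrices; within the dual computation this is exactly ``decreasing the death value'' of the negative simplex $\ssx$, and the numerical interval $[d,d']$ swept in the primal is the interval swept in the dual (the values are unchanged, only their position in the ordering flips).

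With the dictionary in hand, the next step is to check that \cref{alg:transpositions} behaves identically on the two sides. A transposition of two consecutive simplices in the primal filtration is the same combinatorial move as the transposition of the corresponding two consecutive columns/rows in the reverse order, and the pairing update performed afterward (\cref{lem:transposition} and the transposition-based reduction update) yields the same pairs because, by duality, the persistence pairing read off from $R_p$ and from $R_p^\bot$ agree. Hence the sequence of steps \cref{alg:transpositions} takes while increasing $f(\ssx)$ in the primal is, step for step, the sequence it takes while decreasing the dual death of $\ssx$, and it accumulates the same set $X_\tsx$ of $(p-1)$-simplices. One should also note that the face/coface bookkeeping lines up: to keep the primal order a filtration while raising $f(\ssx)$ one must drag up the cofaces of the critical-set members, and those cofaces are exactly the $p$-simplices appearing in the $D_p^\bot$-columns of those members --- i.e., the ``faces'' that the decrease-death branch of \cref{alg:transpositions} moves in the dual.

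Finally, I would invoke \cref{thm:decrease-death} verbatim with $D_p^\bot, R_p^\bot, V_p^\bot$ in place of $D_p, R_p, V_p$ and $\ssx, \ssx_i$ in place of $\tsx, \tsx_i$. Its hypothesis --- that the decomposition come from a lazy reduction --- holds because $R_p^\bot=D_p^\bot V_p^\bot$ and $D_p^\bot=R_p^\bot U_p^\bot$ are produced by running \cref{alg:ELZ} on $D_p^\bot$; and \cref{lem:lazy-reduction} together with \cref{cor:lazy-V} are stated for an arbitrary lazily reduced boundary matrix, so they apply to the dual matrices without change. Reading the conclusion of \cref{thm:decrease-death} back through the dictionary gives exactly that the critical set accumulated by \cref{alg:transpositions} is $\{\ssx_i : d\leq f(\ssx_i)\leq d',\ V_p^\bot[\ssx_i,\ssx]\neq 0\}$.

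The main obstacle is a bookkeeping one rather than a conceptual one: making the primal/dual correspondence airtight --- that a left-move in a reverse-ordered matrix is a genuine ``decrease death'' in the dual computation, that the transposition updates really do commute with the anti-transpose (this is where the duality statement ``$\low\col{R_p}{\tsx}=\ssx$ iff $\low\col{R_p^\bot}{\ssx}=\tsx$'' carries the weight), and that coface-dragging in the primal matches face-dragging in the dual branch of \cref{alg:transpositions}. Once those three matchings are nailed down, the proof of \cref{thm:decrease-death} transfers word for word, as the paper claims.
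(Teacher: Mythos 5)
Your proposal is correct and follows exactly the route the paper takes: the paper itself gives no standalone proof of this theorem but simply states that it follows by substituting the dual matrices $D_p^\bot, R_p^\bot, V_p^\bot, U_p^\bot$ into the proof of \cref{thm:decrease-death}, after observing that a positive paired $(p-1)$-simplex $\ssx$ becomes a negative simplex ($\col{R_p^\bot}{\ssx}\neq 0$) in the dual computation and that increasing $f(\ssx)$ in the primal is a left-shift in the reverse-ordered dual matrices, i.e.\ the ``decrease death'' move. Your added care about why one cannot run the primal argument directly (the remark after \cref{thm:decrease-death} about positive paired simplices), the interval translation, and the applicability of \cref{lem:lazy-reduction} and \cref{cor:lazy-V} to the dual matrices all match the paper's intended reasoning.
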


\begin{theorem}
    Critical set
    \[
        X_\tsx =
            \left\{ \ssx_i \;\middle\vert\;
            \begin{array}{@{}l@{}}
                d' \leq f(\ssx_i) \leq d, \\
                U_p^\bot[\ssx,\ssx_i] \neq 0 
            \end{array}
            \right\}
    \]
    is the set of simplices accumulated by \cref{alg:transpositions}, when
    decreasing the value of a positive $(p-1)$-simplex $\ssx$ paired with $\tsx$.
\end{theorem}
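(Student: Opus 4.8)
The plan is to derive the statement from \cref{thm:increase-death} by passing to cohomology, exactly as the preceding theorem follows from \cref{thm:decrease-death}. Recall from the duality paragraph that applying the lazy reduction of \cref{alg:ELZ} to the anti-transposed boundary matrix produces decompositions $R_p^\bot = D_p^\bot V_p^\bot$ and $D_p^\bot = R_p^\bot U_p^\bot$, and that the pairing it records agrees with the homology pairing with the roles of birth and death exchanged. Since $\ssx$ is positive and paired with $\tsx$ in the filtration, duality gives $\col{R_p^\bot}{\ssx} \neq 0$ with $\low \col{R_p^\bot}{\ssx} = \tsx$; thus $\ssx$ is a \emph{negative} simplex for $D_p^\bot$, playing the role of the moving simplex $\tsx$ in \cref{thm:increase-death}, with $\tsx$ playing the role of its partner $\ssx$. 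Note also that $D_p^\bot = R_p^\bot U_p^\bot$ is, by construction, a lazily reduced decomposition, so the hypotheses of \cref{thm:increase-death} are met.

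First I would set up the order dictionary. The rows and columns of $D_p^\bot, R_p^\bot, V_p^\bot, U_p^\bot$ are indexed in the \emph{reverse} filtration order, so decreasing $f(\ssx)$ from its current value $d$ to the target $d' < d$ moves the column of $\ssx$ to the \emph{right} in these matrices; this is precisely the ``increase death'' move of \cref{thm:increase-death} carried out on $D_p^\bot = R_p^\bot U_p^\bot$. Next I would check that the execution of \cref{alg:transpositions} that decreases $f(\ssx)$ in the filtration coincides, transposition by transposition, with the execution that increases the position of $\ssx$ in the reverse order on the dual matrices. This rests on \cref{lem:transposition} and \cref{cor:reordering}, which hold for the cohomology pairing as well: each transposition of two consecutive $(p-1)$-simplices in the filtration is a transposition of two consecutive columns of $D_p^\bot$ with the corresponding change of pairing, so the critical set $X_\tsx$ is accumulated identically in the two readings. (The later face-moving step of \cref{alg:transpositions}, which drags the lower-dimensional faces of the critical simplices along to keep a valid filtration, is downstream of the critical set and plays no part in this argument.)

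With the dictionary in place I would apply \cref{thm:increase-death} verbatim to $D_p^\bot = R_p^\bot U_p^\bot$: it says that \cref{alg:transpositions} accumulates exactly those $(p-1)$-simplices $\ssx_i$ that lie between $\ssx$ and its target slot in the reverse filtration order and satisfy $U_p^\bot[\ssx,\ssx_i] \neq 0$. Translating the order condition back, ``$\ssx_i$ lies between $\ssx$ and the target slot in the reverse order'' is exactly $d' \leq f(\ssx_i) \leq d$, so the accumulated set is precisely $X_\tsx$ as stated.

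The main obstacle I anticipate is making the dictionary step airtight rather than waving at ``duality.'' One must be careful that the anti-transpose swaps the roles of $U$ and $V$ in the correct direction (the remark following the duality paragraph warns that $R_p^\bot, V_p^\bot, U_p^\bot$ are \emph{not} the anti-transposes of $R_p, V_p, U_p$), and that the lazy-reduction stability property genuinely used in \cref{thm:increase-death} --- that the row $\row{U_p^\bot}{\ssx}$ is unchanged by the updates performed while processing the critical set --- is available here. This last point is automatic, however: \cref{thm:increase-death} and the results it invokes (\cref{lem:lazy-reduction} and \cref{cor:lazy-U}) are stated purely in terms of a lazily reduced decomposition $D = RU$ with no reference to the provenance of $D$, so once the correspondence of orders and elementary operations is pinned down, the conclusion transfers with no extra work.
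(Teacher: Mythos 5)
Your proposal is correct and takes essentially the same route as the paper: the paper proves these birth theorems by noting that decreasing the value of $\ssx$ in the filtration corresponds to moving its column rightward in the reverse-ordered dual matrices, and then substituting $D_p^\bot = R_p^\bot U_p^\bot$ into the proof of \cref{thm:increase-death}. You fill in the translation dictionary (reverse order, $\ssx$ as the negative simplex of $D_p^\bot$, applicability of \cref{lem:transposition} and \cref{cor:reordering} to cohomology) that the paper states only implicitly, which makes the argument more explicit but not different in substance.
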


\begin{remark*}
    The remark at the end of the previous subsection about examining the column
    $\col{V_p}{\ssx}$ to decrease the value of an unpaired simplex $\ssx$
    translates to examining the column $\col{V_p^\bot}{\ssx}$ to increase its
    value.
\end{remark*}

\cref{tbl:operations} summarizes which matrices participate in each
case.

\begin{table}
    \centering
    \caption{Summary of operations and their respective rows and columns for
            $(p-1)$-dimensional $\ssx$ and $p$-dimensional $\tsx$.}
    \begin{tabular}{ccc}
        \toprule
        Operation               & Row/column  & Extra   \\
        \midrule
        Increase birth ($\ssx$) in $(\ssx, \tsx)$    & $\col{V_p^\bot}{\ssx}$  & cofaces      \\
        Decrease birth ($\ssx$) in $(\ssx, \tsx)$    & $\row{U_p^\bot}{\ssx}$  & faces        \\
        Increase death ($\tsx$) in $(\ssx, \tsx)$    & $\row{U_p^{\phantom{\bot}}}{\tsx}$       & cofaces      \\
        Decrease death ($\tsx$) in $(\ssx, \tsx)$    & $\col{V_p^{\phantom{\bot}}}{\tsx}$       & faces        \\
        \midrule
        Increase birth ($\ssx$) in $(\ssx, \infty)$  & $\col{V_{p}^\bot}{\ssx}$  & cofaces      \\
        Decrease birth ($\ssx$) in $(\ssx, \infty)$  & $\col{V_{p}^{\phantom{\bot}}}{\ssx}$       & faces        \\
        \bottomrule
    \end{tabular}
    \label{tbl:operations}
\end{table}

\subsection{Consistency of Critical Sets}
\label{sec:consistency}

\begin{figure}
    \centering
    \includegraphics{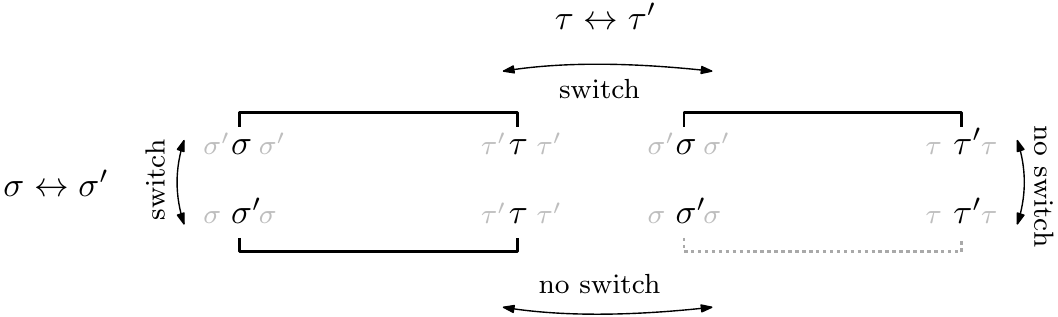}
    \caption{Top-left: simplices $\ssx$ and $\tsx$ are paired with each other
            and define the respective critical sets. $\ssx' \in X_\tsx$ and
            $\tsx' \in X_\ssx$ are contiguous with $\ssx$ and $\tsx$ (they come
            before if we are decreasing the respective value, and after, if we
            are increasing it).
            The assumption that simplices $\ssx'$ and $\tsx'$ are in critical sets implies
            that as we transpose them with $\ssx$ and $\tsx$, there is a switch in
            the pairing (top row and left column). If after both transpositions,
            $\ssx'$ and $\tsx'$ are not paired with each other (gray dotted line
            in the bottom-right), then no pairing switches during the
            second transposition. This determines the pairing between the four
            simplices: $\ssx$ is paired with $\tsx'$ and $\ssx'$ is paired with
            $\tsx$.}
    \label{fig:two-critical-sets}
\end{figure}

\cref{lem:critical-set-add,lem:critical-set} imply that individual critical sets
are well-defined: as we add simplices to a critical set during optimization, it
can never lose a simplex.  But what happens when we change
birth and death simultaneously? In this case, we have to settle for an
additional assumption, namely that point $p$ defining the singleton loss
has multiplicity one.

\begin{theorem}
    If $p = (f(\ssx), f(\tsx))$ has multiplicity one, then $X_\tsx$ and
    $X_\ssx$ don't change under permutation, i.e., for every $\tsx' \in X_\ssx$,
    if we swap it with $\tsx$, then its critical set $X_{\tsx'} = X_\tsx$, and
    for every $\ssx' \in X_\tsx$, if we swap it with $\ssx$, then its critical
    set $X_{\ssx'} = X_\ssx$.
\end{theorem}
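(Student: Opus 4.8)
The plan is to build the two-dimensional square of transpositions sketched in \cref{fig:two-critical-sets} and to determine the pairing in its fourth corner. By the birth--death duality of \cref{sec:birth} it is enough to prove the statement about $\tsx'\in X_\ssx$; the statement about swapping $\ssx'\in X_\tsx$ with $\ssx$ then follows by repeating the argument with the anti-transposed matrices $R_p^\bot,V_p^\bot,U_p^\bot$ in place of $R_p,V_p,U_p$. So I fix $\tsx'\in X_\ssx$ and perform the swap: reorder the contiguous set $X_\ssx$ so that $\tsx'$ occupies the critical end, which by \cref{def:critical-set} makes $\tsx'$ paired with $\ssx$; write $X_{\tsx'}$ for the resulting critical set of $(p-1)$-simplices. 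I must show that for every $(p-1)$-simplex $\ssx'$ we have $\ssx'\in X_\tsx$ if and only if $\ssx'\in X_{\tsx'}$.

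First I would lay out the four filtrations: $A$, the original one, in which $\ssx\sim\tsx$; $B=A$ with $\ssx'$ moved to the critical end of the $(p-1)$-block, so that $\ssx'\sim\tsx$ precisely when $\ssx'\in X_\tsx$; $C=A$ with $\tsx'$ moved to the critical end of the $p$-block, which is exactly the swapped filtration in which $X_{\tsx'}$ is computed (so $\tsx'\sim\ssx$); and $D$, with both moves applied. Passing from $C$ to $D$ is the act of placing $\ssx'$ at the critical end, so $\ssx'\in X_{\tsx'}$ iff $\ssx'\sim\tsx'$ in $D$, and the claim becomes the equivalence ``$\ssx'\sim\tsx$ in $B$'' $\iff$ ``$\ssx'\sim\tsx'$ in $D$''. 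Each edge of the square is realized by a sequence of adjacent transpositions of equal-dimensional simplices, carrying faces and cofaces along as in \cref{alg:transpositions} so that every intermediate order is a valid filtration; hence \cref{lem:transposition} and \cref{cor:reordering} apply --- along $B\to D$ only pairs touching $X_\ssx$ change, along $C\to D$ only pairs touching $X_\tsx$ change, and each adjacent transposition either fixes the two affected pairs or ``switches'' them.

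The core of the proof is then a case analysis of the net effect (switch or no switch) along the two edges $B\to D$ and $C\to D$. I would show that ``no net switch on either edge'' is incompatible with $\ssx\sim\tsx$ in $A$: following the partner of $\ssx'$ backward from $D$ through $B$ to $A$ --- it is unchanged at each step unless $\ssx'$ is paired with a member of $X_\ssx$ --- forces $\ssx'\sim\tsx$ already in $A$, and symmetrically $\tsx'\sim\ssx$ in $A$, contradicting the uniqueness of $\tsx$'s (resp.\ $\ssx$'s) partner. The one loophole is the ``crossing'' configuration in which $\ssx'\in X_\tsx$ is, in $A$, paired with some $\tsx''\in X_\ssx$ (or $\tsx'\in X_\ssx$ with some $\ssx''\in X_\tsx$); this is exactly where the hypothesis enters, since $X_\ssx$ and $X_\tsx$ are driven to the single values $d'$ and $b'$ by the gradient flow of \cref{alg:transpositions} while $(\ssx,\tsx)$ is driven to $(b',d')$ as well, so such a crossing pair would make $p$ occur with multiplicity greater than one. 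With that configuration excluded, the only self-consistent possibility is a net switch on each edge, giving $\ssx'\sim\tsx'$ in $D$; the same analysis run with the roles of $B$ and $C$ (equivalently $\ssx'$ and $\tsx'$) interchanged supplies the reverse direction of the equivalence. Hence $X_{\tsx'}=X_\tsx$, and the dual statement gives $X_{\ssx'}=X_\ssx$.

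The step I expect to be the main obstacle is the bookkeeping in this case analysis: once several of $\ssx,\ssx',\tsx,\tsx'$ lie in each other's critical sets or are paired among the two blocks, many pairs are free to move under \cref{cor:reordering}, and one has to track them carefully through compositions of transpositions to conclude that only the ``switch on both edges'' corner survives --- and, relatedly, to phrase precisely the multiplicity-one argument that excludes the crossing configuration. A secondary point needing care is verifying that the four edges of the square really are legitimate filtration reorderings (so that \cref{lem:transposition,cor:reordering} apply throughout), i.e.\ that after moving faces and cofaces the two blocks remain contiguous in the relevant boundary matrices.
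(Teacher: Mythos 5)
You correctly identify the geometric picture --- the square of four filtrations and the question of what happens at the fourth corner $D$ --- but the argument you propose for ruling out the ``no net switch on either edge'' case does not hold up. You try to trace the partner of $\ssx'$ backward from $D$ through $B$ to $A$ and force $\ssx'\sim\tsx$ in $A$; yet the exception you yourself note (``unchanged \dots unless $\ssx'$ is paired with a member of $X_\ssx$'') is precisely the situation at hand: in $B$ we have $\ssx'\sim\tsx$ with $\tsx\in X_\ssx$. Worse, along $B\to A$ it is $\ssx'$ itself that is transposed through the birth block, and $\ssx'\in X_\tsx$ means by definition that its partner does change on that edge. So one can never conclude $\ssx'\sim\tsx$ in $A$, the contradiction with uniqueness of $\tsx$'s partner never materializes, and the ``crossing configuration'' you flag as a loophole is not the place where the multiplicity-one hypothesis is actually needed --- nor is the claim that a crossing forces multiplicity greater than one actually established.

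The paper takes a cleaner route through the same square. It first reorders within each critical set --- harmless by the very definition of a critical set --- so that $\ssx'$ is adjacent to $\ssx$ and $\tsx'$ is adjacent to $\tsx$, making each edge of the square a single transposition of contiguous simplices. Then \cref{lem:transposition} gives a sharp ``switch / no switch'' dichotomy per edge. If $\ssx'\not\sim\tsx'$ at $D$, the last transposition into $D$ (approached from either the top-right or the bottom-left corner) cannot be a switch, so at $D$ one has simultaneously $\ssx'\sim\tsx$ and $\ssx\sim\tsx'$: two persistence pairs with endpoints in the two critical sets, hence two points at $p$, contradicting multiplicity one directly at $D$. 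No backward tracing to $A$ and no separate crossing-configuration case are needed. The repair for your proof is to add that WLOG-adjacency step and derive the contradiction at $D$ rather than at $A$.
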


\begin{proof}
Consider arbitrary simplices $\tsx'$ in $X_\ssx$ and $\ssx'$ in $X_\tsx$.
By \cref{def:critical-set}, if $\tsx$ and $\tsx'$ are swapped in the filtration,
then $\ssx$ is paired with $\tsx'$. Similarly, if $\ssx$ and $\ssx'$ are
swapped, then $\ssx'$ is paired with $\tsx$. Without loss of generality, we can
assume that both pairs of simplices --- $\ssx$ and $\ssx'$ as well as $\tsx$ and
$\tsx'$ --- are contiguous in the filtration. Then the two swaps above are
transpositions of contiguous simplices.
The assumptions about $\ssx' \in X_\tsx$ and $\tsx' \in X_\ssx$ imply that
either transposition (top and left in \cref{fig:two-critical-sets}) leads to a
switch in pairing. If after transpositions of both pairs, $\ssx'$
and $\tsx'$ are not paired with each other (lower-right part of
\cref{fig:two-critical-sets}), then no switch in pairing occurs during the
second of the two transpositions. This necessarily implies that $\ssx'$ is
paired with $\tsx$ and $\ssx$ is paired with $\tsx'$. In other words, there are
two persistence pairs between the critical sets, meaning point $p = (f(\ssx),
f(\tsx))$ in the diagram has multiplicity greater than one.
\end{proof}

\begin{remark*}
    The theorem applies to every critical set during the optimization. The point
    defining singleton loss may start out having multiplicity one, but gain
    higher multiplicity as the critical sets grow.
\end{remark*}

\subsection{Faces and Cofaces}
\label{sec:faces-cofaces}

After identifying the critical set $X_\ssx$, we need to move all the cofaces
(when increasing) or faces (when decreasing) of every simplex in the set, to
ensure that the new simplex values define a filtration. In
\cref{alg:transpositions}, the for-loop in
\crefrange{lin:face-coface-for-begin}{lin:face-coface-for-end}
performs the required update. For a general filtration, we simply execute the
same for-loop, which takes $\bigO(dm)$ time. Because we can identify the
critical set in $\bigO(m)$ time, finding the faces and cofaces dominates the
running time.
However, since $d$ is a small constant in all practical applications, linearity
in $m$ is most important.

In practice, an explicit update of faces and cofaces is unnecessary.
The function $f: K \to \Rsp$ that defines the filtration is derived from the input data.
The gradients on the simplices are backpropagated through $f$ to the input values, which are
updated by the optimization. When an updated function $f': K \to \Rsp$ is
derived from the updated data, it satisfies the face condition and defines a
filtration by construction.

For example, consider a lower-star filtration (which we use to compute
persistence of scalar fields in all our experiments in the next section):
given a function,
$\hat{f}: \Vert K \to \Rsp$, on the vertices of the simplicial complex, we
extend it to all the simplices, $f(\ssx) = \max \{ \hat{f}(v) \mid v \in \ssx \}$.
When we get a gradient $\partial \loss / \partial f(\ssx)$, which we backpropagate
to $\partial \loss / \partial \hat{f}(v)$,
where $v = \arg\max_{v' \in \ssx} \hat{f}(v')$. After taking a step, following
this gradient, we get a new function on the vertices $\hat{f}'$. Because the new
filtration is constructed as a lower-star filtration of this function, we are
guaranteed that all the faces precede $\ssx$ and all of its cofaces come after.
The same argument applies to the Vietoris--Rips filtration, {\v C}ech filtration, alpha filtration, etc.
In all such cases, the $\bigO(dm)$ term is eliminated from the running time,
leaving only $\bigO(m)$.

We note that it may still be worthwhile to compute and move the faces or cofaces
explicitly. The $\bigO(dm)$ overhead is minor, but more gradient information
gets propagated to the input data.

\subsection{Combined Loss}
\label{sec:combined-loss}

Given a general loss, $\loss = \sum_{(p,q) \in M} (p-q)^2$, defined by an
arbitrary matching $M$, typically recomputed after every step of the optimization,
we can compute the target values for each simplex
prescribed by the singleton losses defined by the individual terms of the sum.
For a simplex $\ssx$, with the initial value $f(\ssx) = a$,
we get a set of target values $\{ a_1, a_2, \ldots \}$, one for each singleton
loss. (If a singleton loss doesn't prescribe a value to a simplex, then the
corresponding value $a_i$ is missing from the target set, which can be empty as
a result.) Ultimately, we want to define a gradient on the individual simplices
that would allow us to take (small) optimization steps, but to do so we need to
decide on one target value for each simplex.

There are several ways to combine the target values into one. We choose to set
\[
    a' = a_i, \quad \mbox{where }i = \arg\max_j \{ \lvert a - a_j \rvert \}
\]
as the target value for $\ssx$, i.e., moving it as far as possible in the
filtration. (Two other strategies are considered in
\cref{sec:conflict-strategies}.)
This is a heuristic, without a strong justification, but with the
following reasoning behind it.
When all simplices of a given dimension are moving in the same
direction (e.g., all 1-simplices increase and all 2-simplices
decrease their values), most simplices get prescribed values $a_i$ that are lower bounds
on how far they need to move to solve the singleton loss. Put another way, all
but the first or the last simplex in the critical set can move farther than
their stated target. So taking the maximum is a way to satisfy all lower bounds
simultaneously. Another reason for the maximum is that for the simplification
loss $\loss_\ee$, defined in \cref{eq:simplification-loss} at the beginning of \cref{sec:singleton-loss}, when
applied to diagrams of dimension 0 or codimension 1, maximum gives the optimal
solution in one step. Intuitively, the reason is that when multiple values are
prescribed to the same simplex, it means that it belongs to multiple nested
topological features. (A formal proof of this claim requires a lot of new machinery,
which is why we omit it.  The claim is only a minor motivation for our heuristic
choice.)

\cref{alg:critical-method} summarizes our overall method.

\begin{algorithm}
    \caption{Critical set method.}
    \begin{algorithmic}[1]
        \State {\textbf{Input:} $\loss = \sum_{(p_i,q_i) \in M} (p_i - q_i)^2$}
        \For{\textit{each} $(p_i, q_i) \in M$}
            \State let $p_i = (b_i,d_i) = (f(\ssx_i), f(\tsx_i))$; $q_i = (b_i', d_i')$ \;
            \State $X_b = \left\{ \ssx_j \;\middle\vert\;
                        \begin{array}{l}
                            V^\bot[\ssx_j,\ssx_i] \neq 0 ~\textrm{and}~ b_i \leq f(\ssx_j) \leq b_i'; ~\textrm{or} \\
                            U^\bot[\ssx_i,\ssx_j] \neq 0 ~\textrm{and}~ b_i' \leq f(\ssx_j) \leq b_i \\
                        \end{array}
                    \right\}$ \;
            \State $X_d = \left\{ \tsx_j  \;\middle\vert\;
                        \begin{array}{l}
                            {\phantom{{}^\bot}}U[\tsx_i,\tsx_j] \neq 0 ~\textrm{and}~ d_i \leq f(\tsx_j) \leq d_i'; ~\textrm{or} \\
                            {\phantom{{}^\bot}}V[\tsx_j,\tsx_i] \neq 0 ~\textrm{and}~ d_i' \leq f(\tsx_j) \leq d_i \\
                        \end{array}
                    \right\}$ \;
            \State { // omitted: find faces/cofaces if necessary}
            \For{$\ssx_j \in X_b$}
                \State append $b_i'$ to $\target[\ssx_j]$
            \EndFor
            \For{$\tsx_j \in X_d$}
                \State append $d_i'$ to $\target[\tsx_j]$
            \EndFor
        \EndFor
        \For{\textit{each} $\ssx$}
            \If {$\target[\ssx]$ \textit{is empty}}
                \State $f'(\ssx) = f(\ssx)$ \;
            \Else
                \State {$j = \arg\max_j \left\{ \lvert f(\ssx) - \target[\ssx][j] \rvert \right\}$ \;}
                \State $f'(\ssx) = \target[\ssx][j]$ \;
            \EndIf
        \EndFor
    \Return{$\forall \ssx, \partial \loss / \partial f(\ssx) = 2(f(\ssx) - f'(\ssx))$}
    \end{algorithmic}
    \label{alg:critical-method}
\end{algorithm}

\begin{remark*}
    In practice, one typically uses an automatic differentiation library.
    Instead of computing the gradients explicitly, as in \cref{alg:critical-method},
    one computes the corresponding loss $\loss = \sum (f(\ssx) - f'(\ssx))^2$,
    where the summation is over all simplices $\ssx$ with non-empty $\target[\ssx]$.
    Value $f(\ssx)$ must be automatically differentiable and $f'(\ssx)$ is a constant.
    Backpropagation takes care of evaluating $\nabla \loss$
    with respect to $f(\ssx)$ and any variables on which $f(\ssx)$ depends.
\end{remark*}

\paragraph{Decreasing loss.}
In general, the heuristic of taking the maximum displacement as the target value
is not guaranteed to decrease the loss locally.
For such a guarantee, we need to assume that all the simplex values
are distinct and
take a sufficiently small step in any
direction whose individual components have the same sign as the negative of the gradient
of the loss. (The components on which the gradient of the loss is zero can
have any sign.) This follows from the loss being additive: each
individual term $(p_i - q_i)^2$ decreases if the coordinates of $p_i$ are
brought closer to $q_i$.

If one wanted to ensure that the loss decreases locally, it is easy to enforce
this condition explicitly by fixing the gradient values of the critical
simplices to be the same as the gradient given by the loss.
This results in an alternative heuristic for combining singleton losses. We
present one such example under the name \texttt{fca} in
\cref{sec:conflict-strategies}.

Another setting where the loss is guaranteed to decrease using
\cref{alg:critical-method} directly is if all points in the persistence diagram
are prescribed the same kind of movement, i.e., their birth and death values
either increase or decrease in tandem.

\section{Experiments}
\label{sec:experiments}

In this section, we compare optimization that computes gradients by identifying
critical sets of singleton losses, as explained in the previous section, to the
existing approach in the literature that defines the gradients on the pairs of
simplices that define the persistence pairing, as explained in the Introduction.
Below, in figure legends, we refer to our new method as ``Critical set'' and to
the previous method as ``Diagram.''

\paragraph{Vineyards.} In all our experiments
we get a series of diagrams ${D_t}$ indexed by the
optimization step. We visualize two of their projections to understand their
evolution.

\paragraph{Data.}
We use two scientific datasets from the ``Open Scientific
Visualization Datasets'' collection~\cite{osvd}.
\begin{itemize}
    \item Rotstrat~\cite{rotstrat}: temperature field of a numerical simulation of rotating stratified turbulence.
    \item Magnetic reconnection~\cite{magnetic_reconnection}:
        a single time step from a computational simulation of magnetic reconnection.
        This dataset has a visible geometric structure that looks like a curved
        tunnel in the middle.
\end{itemize}
We downsampled the data (still keeping it larger than most data sets used
for topological optimization in the literature).
All our experiments are done on datasets of size $32^3$.

We use upper- or lower-star filtrations, described in \cref{sec:faces-cofaces},
to compute persistence of the data. Because both use $\max$ or $\min$ to assign
values to the simplices, we apply the same maximum displacement construction as
in \cref{sec:combined-loss} to the vertices: if the same vertex is prescribed
different gradients by different simplices, we keep the one that results in the
largest displacement.

\subsection{Sublevel Set Simplification}
This experiment is motivated by the simplification of the decision boundary of a
neural network~\cite{chen2019}, formulated as a level set.
The authors phrase their loss in terms of well
groups~\cite{well-groups,well-groups-1d}.
For simplicity (to avoid having to introduce new constructions), we do not
simplify the level set, but rather a sublevel set.
Given a function $f: \Xsp \to \Rsp$, denote with $\Xsp_a =f^{-1}(\infty,a]$ its
sublevel set.
A topological feature exists in this sublevel set,
if its birth value is less than $a$ and its death value is greater than $a$.
Geometrically, we want to eliminate the points
of the persistence diagram that lie in the
quadrant defined by $b \leq a$ and $d \geq a$.
We match each such point $(b_i, d_i)$ to the closest point
on the boundary of the quadrant, i.e.,
either $(b_i, a)$ or $(a, d_i)$.

We ran this experiment for the magnetic reconnection dataset.
The threshold was chosen in such a way
that the quadrant contains a large portion of the points.

\cref{fig:dgm_wg_rs_klac_1_dgm,fig:dgm_wg_rs_klac_1_set_x} present the
vineyards of the two optimization procedures.
The color of the point encodes the step number, in both projections.
The blue lines show the quadrant that we want to make free of the diagram points.

\begin{figure*}
    \centering
    \includegraphics{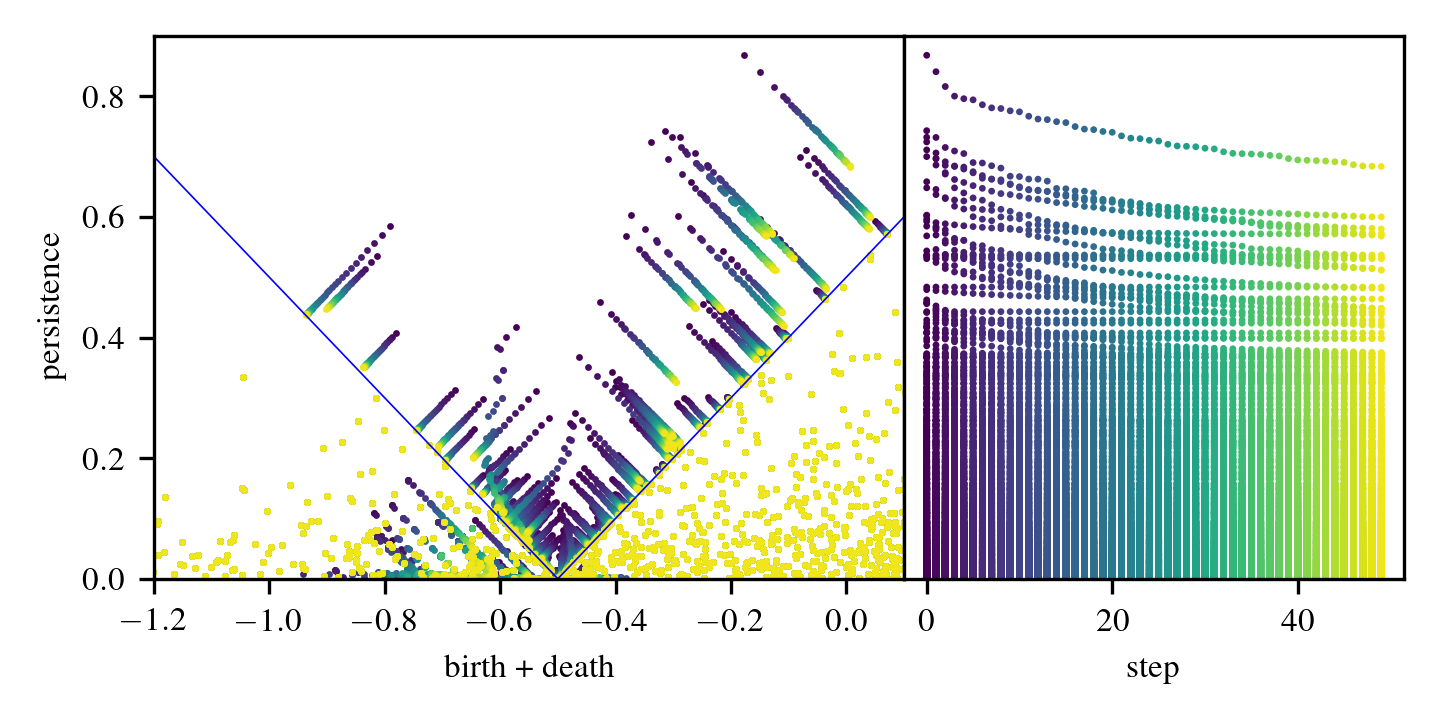}
    \vspace{-4ex}
    \caption{Vineyard of the optimization guided by the simplification of a
             sublevel set in a $0$-dimensional diagram of the magnetic
             reconnection dataset, using the \textbf{diagram method}.
             Learning rate is $0.2$, without momentum.
             The color encodes
             the time step. The left projection makes it clear that many of the
             points do not reach the quadrant boundary after 50 steps.}
    \label{fig:dgm_wg_rs_klac_1_dgm}
\end{figure*}

\begin{figure*}
    \centering
    \includegraphics{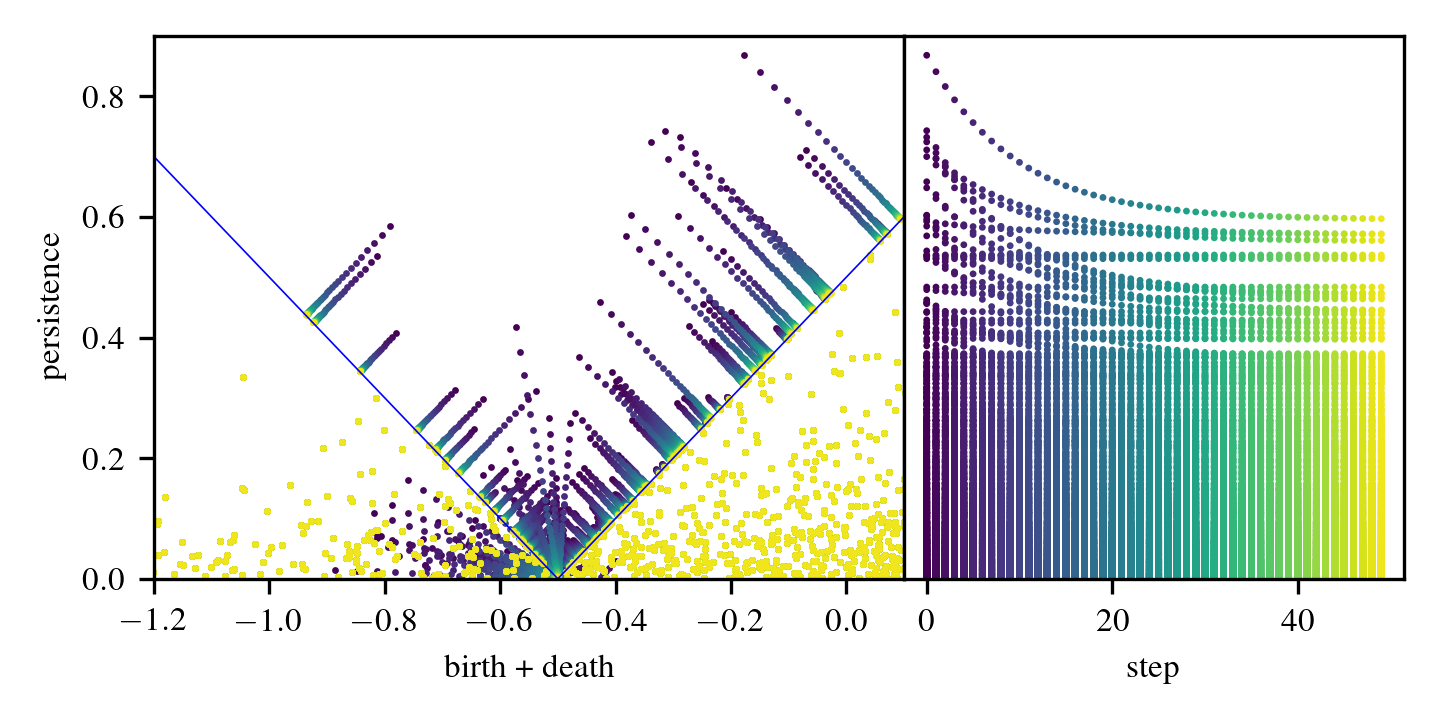}
    \vspace{-4ex}
    \caption{Vineyard of the optimization guided by the simplification of a
             sublevel set in a $0$-dimensional diagram of the magnetic
             reconnection dataset, using the \textbf{critical set method}.
             Learning rate is $0.2$, without momentum.
             The color encodes the time step. The left projection makes it clear that all the
             points rapidly reach the quadrant boundary.}
    \label{fig:dgm_wg_rs_klac_1_set_x}
\end{figure*}

Because topological features are intertwined in complicated ways,
it is impossible to move only the points in the quadrant.
The points outside of the quadrant are moving too, and
some of the points in the quadrant are not moving directly to their
prescribed target. This is expected in both cases.
What is notable is that using our critical set method,
the points move much more efficiently: after $50$ steps,
all points end up on the boundary of the quadrant, when using the critical set
method, but many do not reach the boundary, when using the diagram method.

\begin{figure*}
    \centering
    \includegraphics{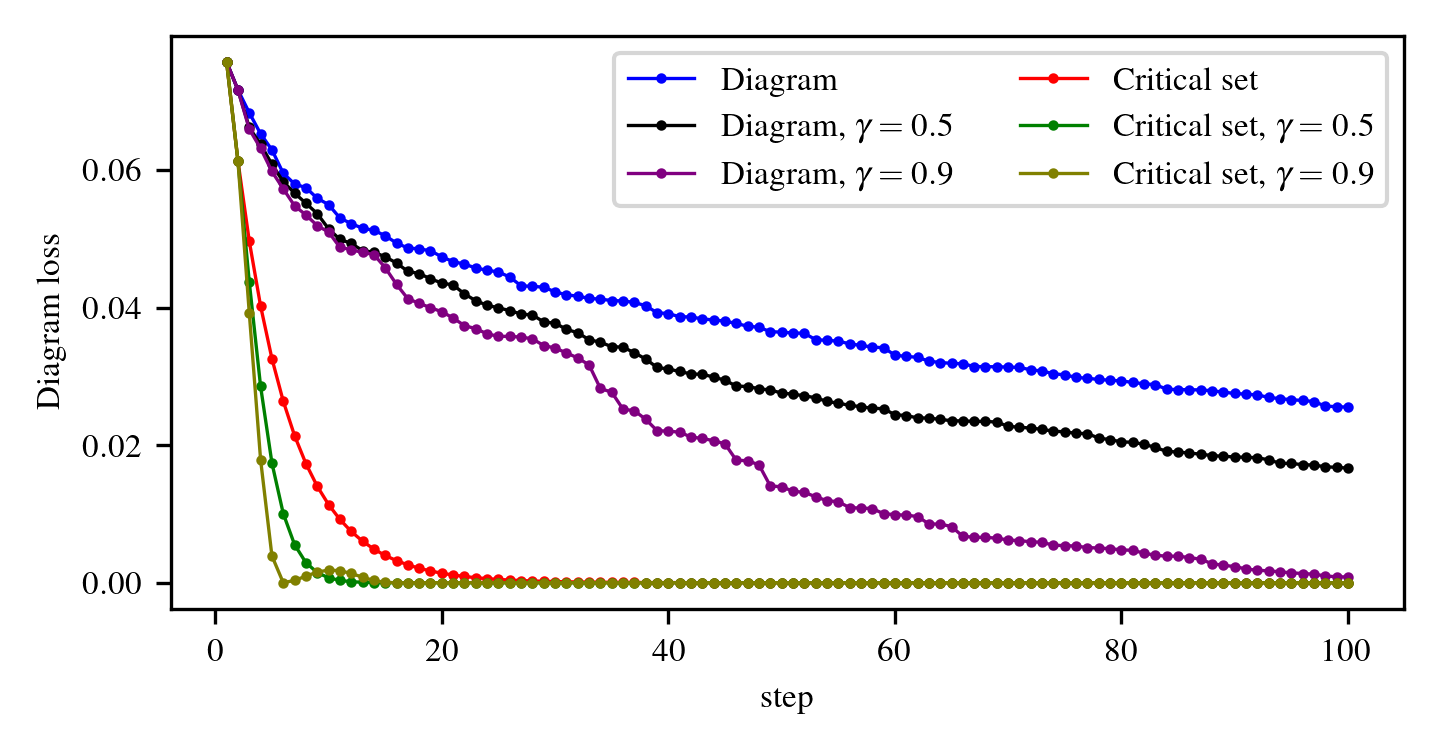}
    \vspace{-4ex}
    \caption{Comparison of the diagram losses during the optimization using the
             two methods. Diagram method greatly benefits from momentum.
             Critical set also benefits from momentum, but performs well even
             without it.}
    \label{fig:dgm_loss_klacansky_1}
\end{figure*}

To better compare the two optimization methods, we plot the value of the diagram
loss at each step of the optimization in \cref{fig:dgm_loss_klacansky_1}.
We used three optimization variants: standard gradient descent
and gradient descent with momentum, with damping parameter
$\gamma = 0.5, 0.9$. The smaller value of $\gamma$
makes the influence of the gradient from the previous
steps weaker.
Unsurprisingly, momentum makes a big difference for the diagram method: since it
needs to move large portions of the domain, but it has gradient information only
on the critical simplices, the ability to keep moving simplices for several
steps is crucial.
Our method also benefits from momentum, but less so, and it performs well with a lower value
of the damping parameter, $\gamma = 0.5$.
The diagram method works
best with the higher $\gamma = 0.9$, but even with this value
it is not nearly as fast the critical set method, which rapidly drops to $0$
with or without momentum.

\subsection{Persistence-sensitive Simplification}
Simplification loss was defined in \cref{eq:simplification-loss} at the beginning of \cref{sec:singleton-loss}:
it matches all the points with persistence below a prescribed threshold $\ee$ to
the diagonal.

We simplify the 1-dimensional diagrams, which is the case inaccessible to the
existing combinatorial methods.
The advantage of the critical set method is evident from 
the vineyards shown in \cref{fig:vin_dgms_dgm_loss_klacansky_rs_1,fig:vin_dgms_set_x_loss_klacansky_rs_1}.
The diagram method produces long trajectories of points moving towards
the diagonal. The critical set method moves the points much faster, which is
especially clear when comparing the right projections in the two figures.

\begin{figure*}
    \centering
    \includegraphics[]{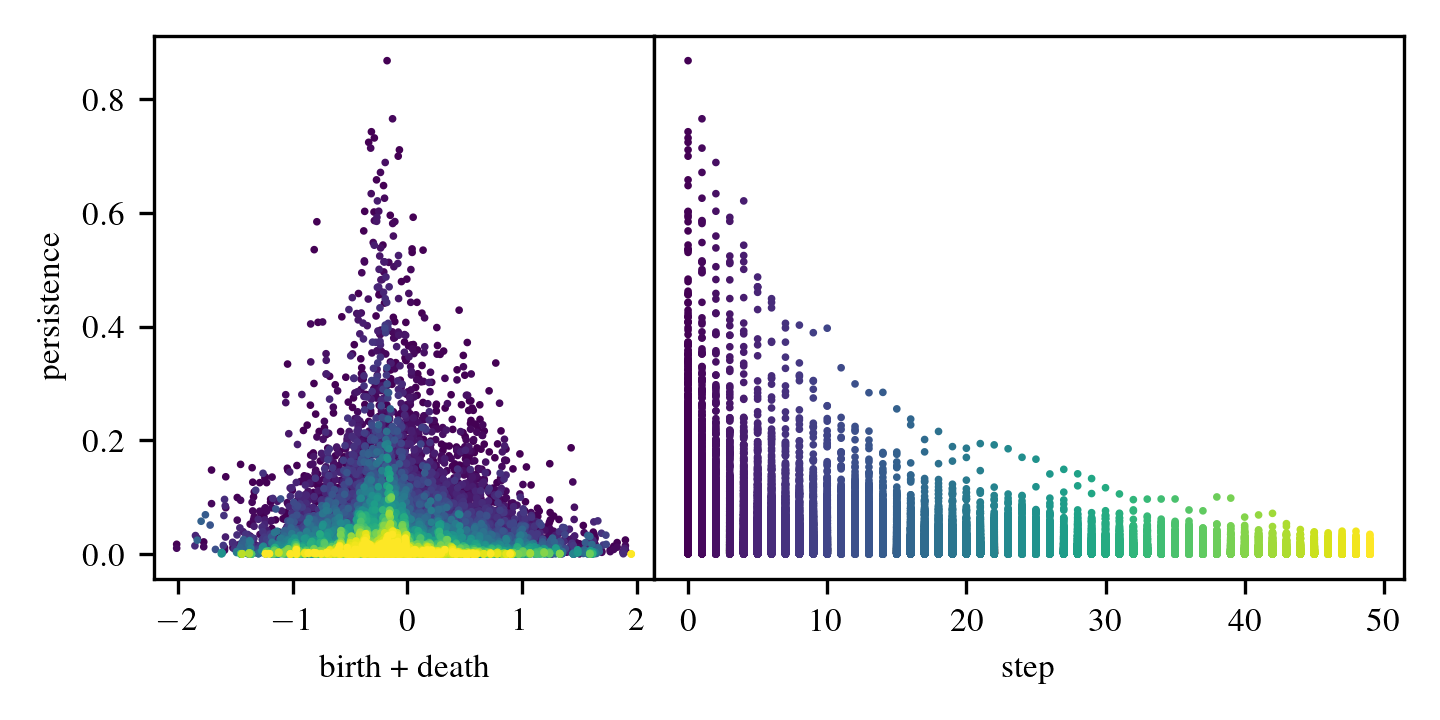}
    \vspace{-4ex}
    \caption{Vineyard of the optimization guided by the simplification loss in a
            1-dimensional diagram of the Rotstrat dataset, using the \textbf{diagram
            method}. The color encodes the time step. $\eps = \infty$, learning rate
            $0.2$, with momentum $\gamma = 0.9$.
            The plot makes it clear that the points don't reach their
            targets after 50 steps.}
    \label{fig:vin_dgms_dgm_loss_klacansky_rs_1}
\end{figure*}

\begin{figure*}
    \centering
    \includegraphics[]{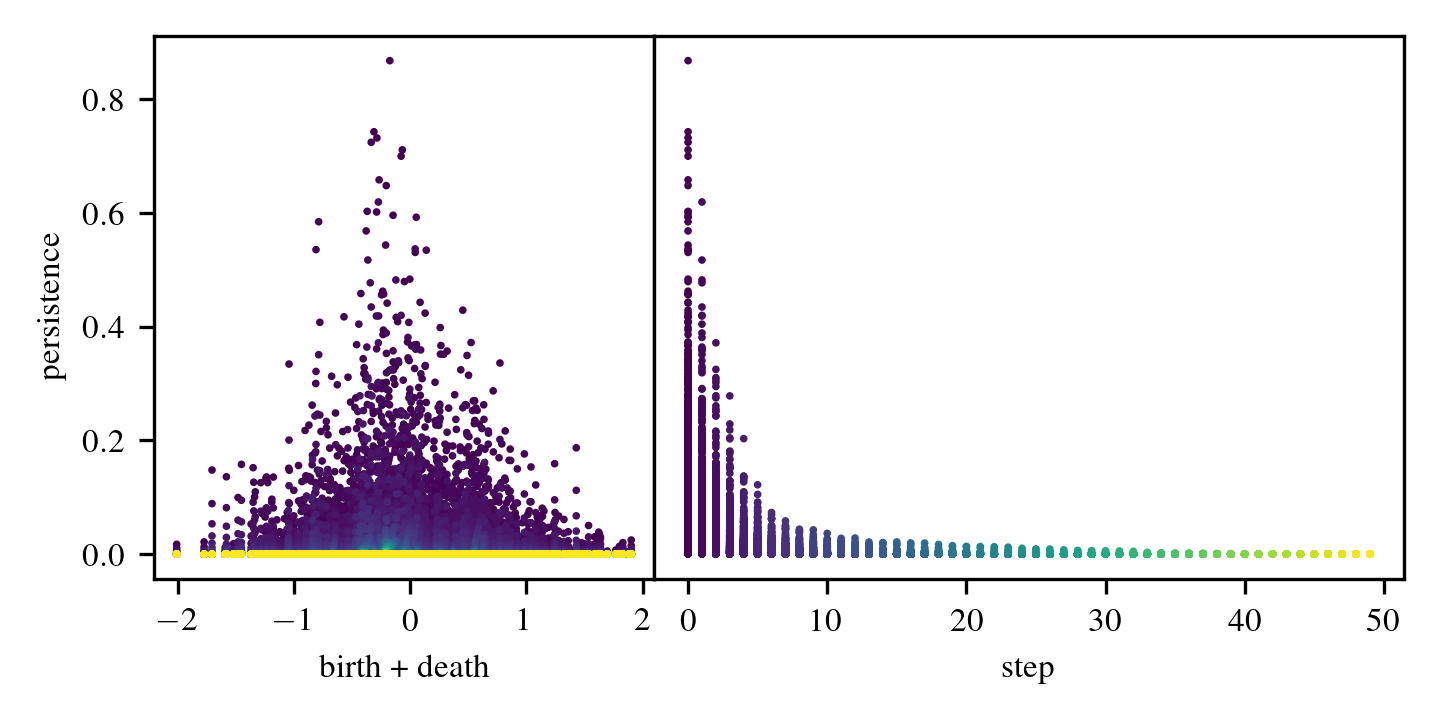}
    \vspace{-4ex}
    \caption{Vineyard of the optimization guided by the simplification loss in a
            1-dimensional diagram of the Rotstrat dataset, using the
            \textbf{critical set method}. The color encodes the time step.
            $\eps = \infty$, learning rate $0.2$, without momentum.
            The points approach the diagonal much faster than in \cref{fig:vin_dgms_dgm_loss_klacansky_rs_1}.}
    \label{fig:vin_dgms_set_x_loss_klacansky_rs_1}
\end{figure*}

The diagram loss plots are in \cref{fig:dgm_loss_klacansky_2}.
The $y$-axis is logarithmic, which emphasizes the advantage of the momentum
damping parameter of $\gamma=0.5$ for the critical set method.
Somewhat unexpectedly, a high value of momentum parameter ($\gamma = 0.9$)
almost completely wipes out the advantage of the critical set method.
For the diagram method, the momentum serves as a surrogate
for the critical set: it helps to further push the points,
which stopped being critical after one step.

\begin{figure*}
    \centering
    \includegraphics[]{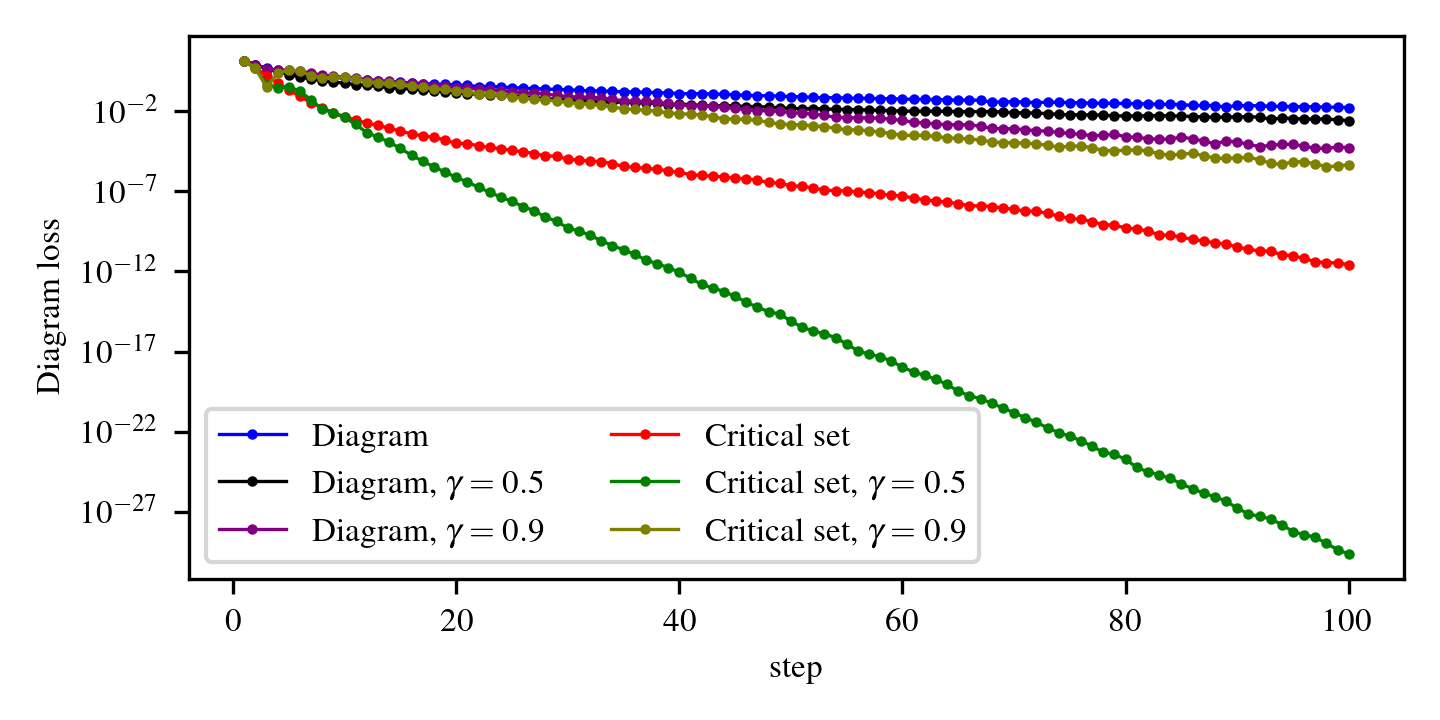}
    \vspace{-4ex}
    \caption{Comparison of the diagram losses during optimization of the
            simplification loss on Rotstrat dataset. Diagram methods benefits
            from momentum. So does critical set (for $\gamma=0.5$), but it
            performs much better than the diagram method, even without it.}
    \label{fig:dgm_loss_klacansky_2}
\end{figure*}

\subsection{Timing and Convergence Rate}
The major downside of our method is that it requires considerably more
computation per step.
We must compute not only the reduced boundary
matrix $R$, needed to read off the persistence diagram,
but also matrices $U$ and $V$. Moreover,
since we use both homology and cohomology, we have
to do this computation twice.

For example, for magnetic reconnection dataset it takes $3.4\times$ longer to
compute matrices $U$ and $V$ than matrix $R$ by itself. To compute all four
matrices $U,V,U^\bot$, and $V^\bot$ takes $4.2\times$ longer than just matrix
$R$.
Because it requires an order of magnitude fewer steps --- and the fraction gets
smaller as the data gets larger, see \cref{sec:scaling_experiments} ---
our method is still faster overall, but the result seems discouraging: much
of the savings suggested by the rapidly decreasing losses are lost because of
the more expensive per-step computation. We point out a possible solution in the
conclusion, but meanwhile note that some flexibility exists in the formulation
of the loss itself. For example, for the simplification loss, as we defined it,
we move every points $(b,d)$ to $((b+d)/2, (b+d)/2)$, which requires both to
increase birth and decrease death. As summarized in \cref{tbl:operations}, the
former requires computing matrix $V^\bot$; the latter, matrix $V$. But we could
also simplify the diagram by moving each point to the point $(b,b)$ on the
diagonal. This would require only decreasing the death values, and thus obviate
the need to compute cohomology.

\paragraph{Learning rate and momentum.}
To study the effect of the learning rate and momentum,
we simplify the Rotstrat dataset diagram in dimension $1$ for different values
of the hyper-parameters.
For $\eps = \infty$, the original value of the diagram loss is $7.2$.
For different values of the learning rate, we record the number of
steps needed to minimize it below $0.001$ using the two methods.
The results are in \cref{fig:epoch_lr_comparision}.

Without momentum, the critical set
method has a prominent advantage for all learning rates;
it requires $22-25\times$ fewer steps.
With momentum, the diagram methods performs better.
However, for large learning rates the performance of the best value of $\gamma = 0.9$
becomes worse: the corresponding purple line shoots up.
No choice of the hyper-parameters is a clear winner, but if we pick the two that
behave most reasonably --- $\gamma=0.5$ for the diagram method, and no momentum
for the critical set method --- we see about $11\times$ fewer steps for the
latter.

Taking into account the computational overhead, we conclude that the overall running time of our approach
is normally not worse than the diagram loss optimization, and for larger learning rates it is consistently
better.

\begin{figure}
    \centering
    \includegraphics[]{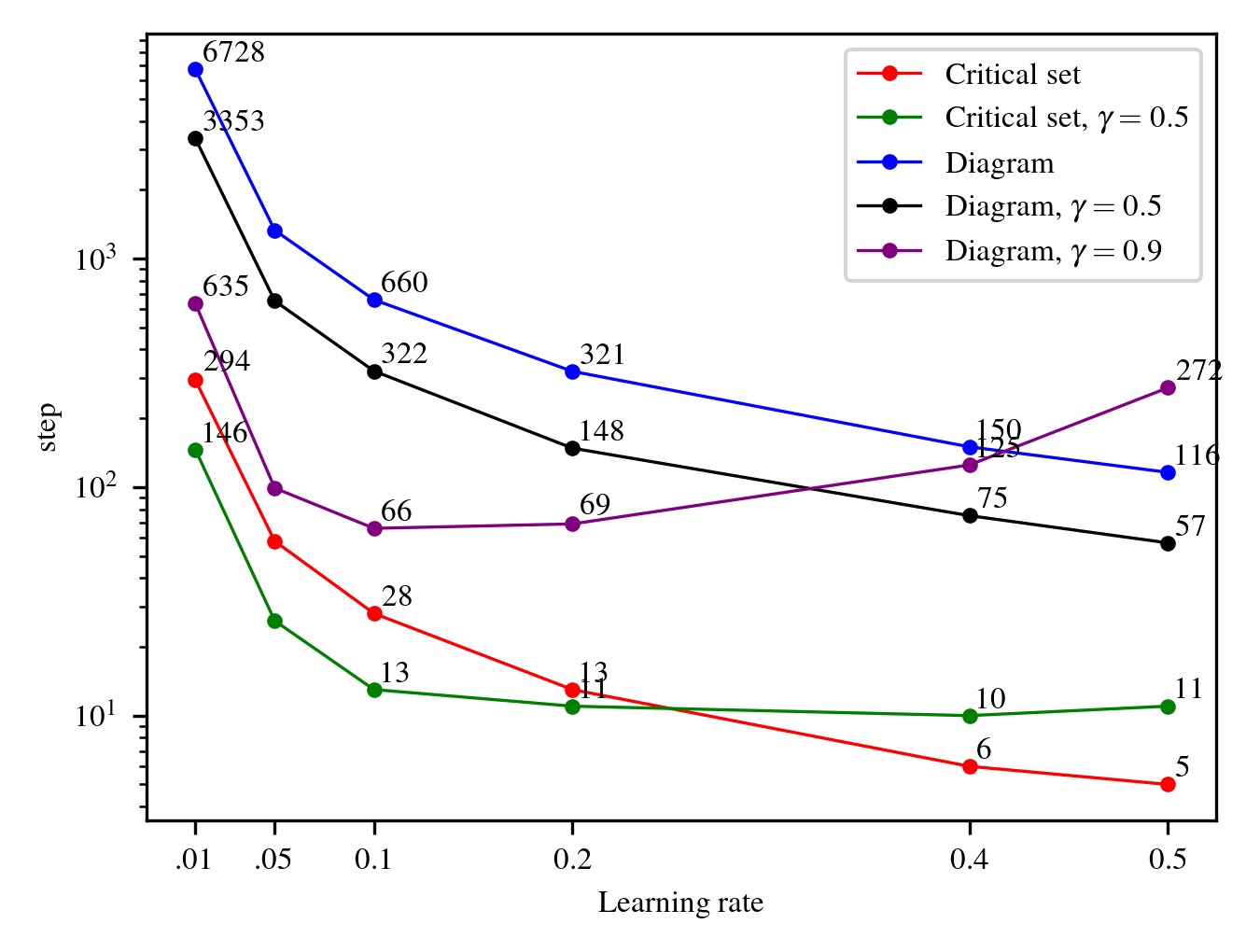}
    \vspace{-2ex}
    \caption{Comparison of the number of steps required to bring diagram loss
            from $7.2$ to $0.001$, using simplification loss on the 1-dimensional
            diagram of the Rotstrat dataset.  $\eps=\infty$.}
    \label{fig:epoch_lr_comparision}
\end{figure}

\section{Conclusion}
\label{sec:conclusion}

We have presented a method to accelerate optimization guided by a topological
loss, formulated as a matching. The method relies on examining the cycles,
chains, and related information calculated as a by-product of persistence
computation. We have shown empirically that our method reduces the number of
steps required to achieve a given loss by an order of magnitude.

\paragraph{Warm starts.}
The timing results seem discouraging: a
$10\times$ reduction in the number of optimization steps, combined with
a $4\times$ slow-down per step caused by the computation of matrices
$V,U$ and $V^\bot,U^\bot$ results in a very modest speed-up.
The fact that cohomology is not always needed provides little solace.
This may seem fatal to our approach, but the recent work of Luo and Nelson~\cite{LN21}
offers hope. Motivated by optimization, among other problems, they present a
simple algorithm to quickly compute persistence pairing, given a
reduction of a nearby filtration. They show that such ``warm starts''
significantly improve the computation speed, compared to recomputing the pairing
from scratch. Crucially for us, their algorithm relies on computing the $R=DV$
decomposition. In other words, following their method, there is no extra penalty
for computing matrix $V$, when iteratively updating persistence pairing.
Working out the technical details of such a combined approach is one the most
productive directions for future work.

\paragraph{Clearing optimization.}
Modern state-of-the-art implementations of persistence~\cite{ripser,phat},
use clearing optimization~\cite{clearing}, which identifies
zero columns of matrix $R$, corresponding to the births of finite pairs, without
reducing them explicitly. Such columns are not needed when we move points closer
to the diagonal --- increasing birth or decreasing death --- but the absence of
the corresponding operations in matrix $U$ presents a problem, when we want to
increase death or decrease birth. Although few of the losses proposed in the
literature need such operations, working out a complete method for combining our
construction with the clearing optimization is another worthwhile direction for
future work.

\paragraph{Combined losses.}
Given a matching, we combine multiple singleton losses by taking the maximum
displacement prescribed to individual simplices. This is a heuristic, without
a strong justification, other than what's stated in \cref{sec:combined-loss}.
There are other natural heuristics: for example, sending each simplex to the
average of its target values.
We discuss two of them in \cref{sec:conflict-strategies}.
Better
understanding the resulting dynamics and finding principled ways to combine multiple
singleton losses is the main theoretical question left open by our work.

In \cref{sec:hks_experiments}, we compare the performance of the diagram and
critical set methods on an optimization problem from \cite{Poulenard2018} that
back-propagates the loss past simplex values to a functional correspondence,
which serves as the parameter for optimization.

\paragraph{Convergence guarantees.}
Another major direction for future work is understanding convergence guarantees
of the critical set method. Carri\`ere et al.~\cite{Carriere2021} show that
persistence-based losses satisfy the assumptions required by the work
of Davis et al.~\cite{DDKL20}, and therefore their results on the convergence of
sub-gradient descent apply. Unfortunately, the gradient prescribed by the
critical set method does not lie in the sub-gradient of the loss. Its entire
point is to provide information on the non-critical simplices, on which the
sub-gradient of the original loss is necessarily zero.

\paragraph{Momentum and optimization.}
A striking result of our experiments is that momentum often hurts critical set
method, while it almost always helps the diagram method.
Our general intuition is that applied to the diagram method
momentum accumulates something like the critical set over the
iterations of the optimization. With the critical set method, the right
collection of simplices is identified by the algorithm itself and so bringing
information from prior iterations just obstructs progress.
Understanding the interaction of the critical set method with momentum,
and optimization more broadly, is another important research topic.

Experiments with other optimizers, namely RMSProp and Adam in
\cref{sec:different-optimizers}, reinforce that the interaction of momentum and
the critical set method is complicated and deserves future research.

\section*{Data Availability Statement}
The datasets analyzed are available in the
`Open Scientific Visualization Datasets' repository by P.\ Klacansky, \url{klacansky.com/open-scivis-datasets}.

\section*{Acknowledgments}
This work was initiated under Laboratory Directed Research and Development (LDRD)
funding from Berkeley Lab, provided by the Director, Office of Science, of the
U.S.\ Department of Energy under Contract No.\ DE-AC02-05CH11231.
It has since been supported by the U.S. Department of Energy, Office of
Science, Office of Advanced Scientific Computing Research, Scientific Discovery
through Advanced Computing (SciDAC) program and Mathematical Multifaceted Integrated
Capability Centers (MMICCs) program, under Contract No.\ DE-AC02-05CH11231 at Lawrence Berkeley National Laboratory.

\appendix
\appendixpage

\section{Different optimizers}
\label{sec:different-optimizers}

We also tried Adam and RMSProp optimizers.
The results for RMSProp are shown in \cref{fig:rmsprop_rotstrat}.
While there are some parameters which make direct diagram loss
optimization better than the critical set method, the
best results are still achieved when using the critical set.
The results for Adam in \cref{fig:adam_rotstrat} also demonstrate
the efficiency of the critical set method.

\begin{figure}
    \centering
    \includegraphics[]{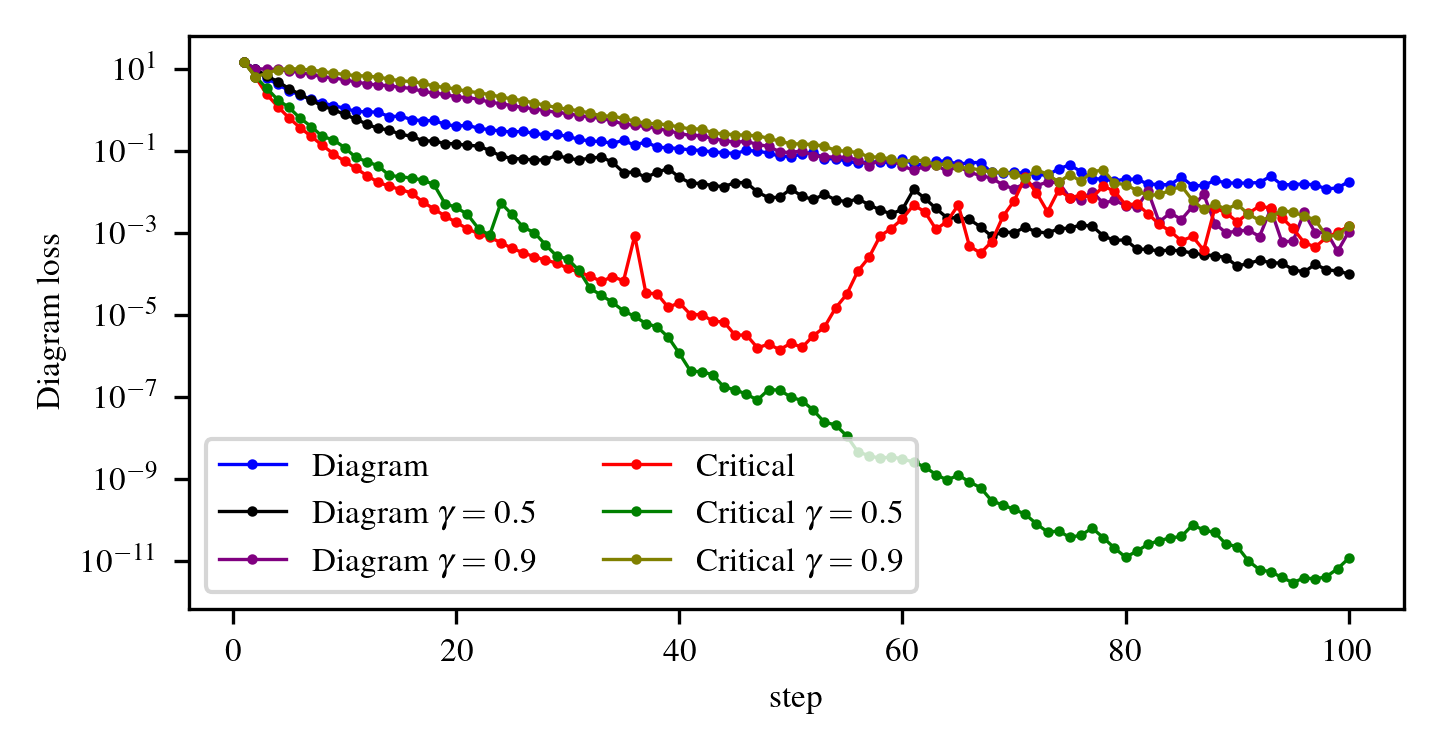}
    \vspace{-4ex}
    \caption{Simplification results for RMSProp on Rotstrat dataset.}
    \label{fig:rmsprop_rotstrat}
\end{figure}

\begin{figure}
    \centering
    \includegraphics[]{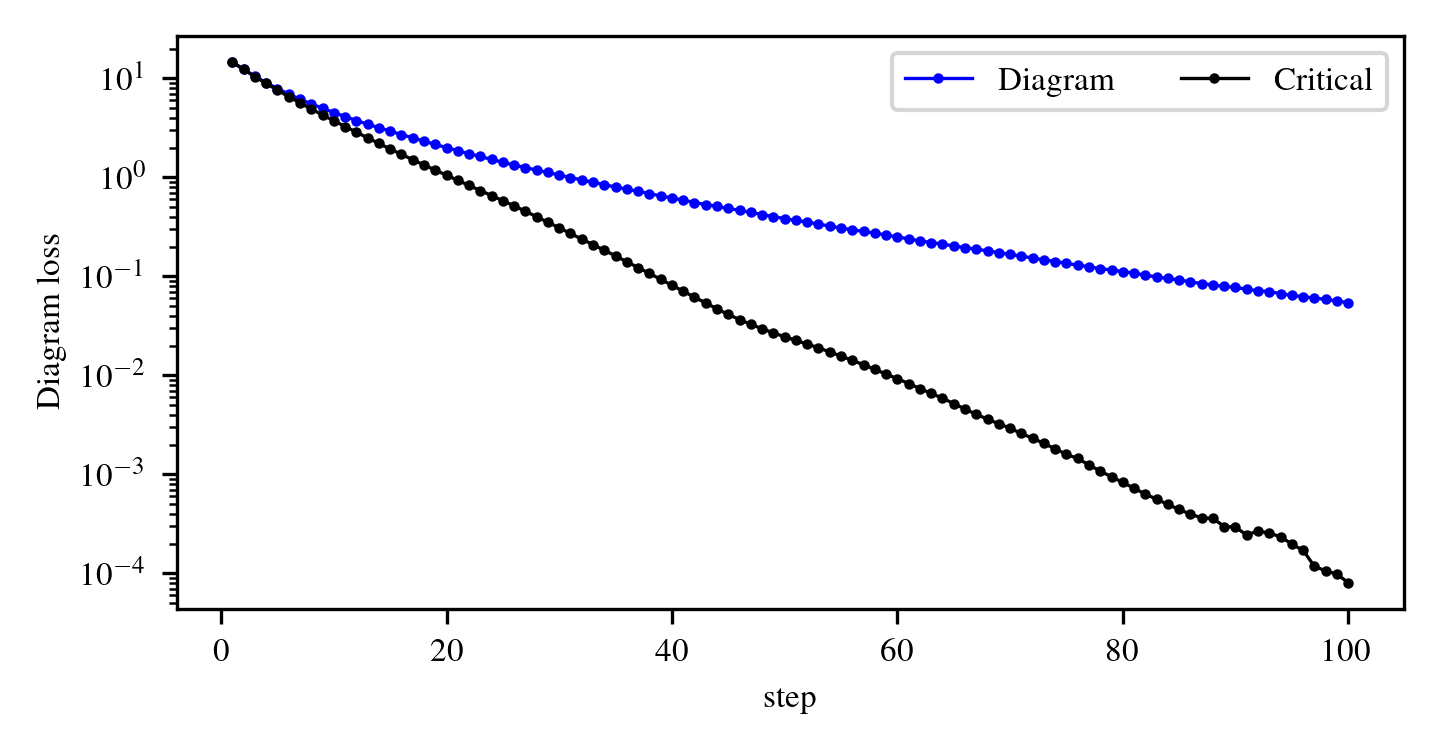}
    \vspace{-4ex}
    \caption{Simplification results for Adam on Rotstrat dataset. The parameters $\beta_1 = 0.9$ and $\beta_2 = 0.99$ are the default ones in PyTorch.}
    \label{fig:adam_rotstrat}
\end{figure}

\section{Other conflict strategies}
\label{sec:conflict-strategies}

We say that simplex $\ssx$ is in conflict, if
it belongs to multiple critical sets prescribed by the matching.
There are different ways to resolve such conflicts.
In \cref{sec:combined-loss}, we chose to take $v_k$ that maximizes $\lvert f(\ssx)-v_i \rvert$,
i.e., of all the values we take the farthest from the current one.
We abbreviate this choice as \texttt{max}.
Averaging is another natural choice:
if $\ssx$ appears in the critical sets $X_{\ssx_1}, \dots, X_{\ssx_k}$,
prescribing values $v_1, \dots, v_k$, then we assign $\frac{1}{k}\sum_{i=1}^k v_k$
as target value of $\ssx$. We abbreviate this choice as \texttt{avg}.

Another option is to take the average everywhere except the critical simplices.
Specifically, if $\ssx = \ssx_j$ is a simplex responsible for a point that appears in the
matching, then we assign $v_j$
as the target value. Otherwise we take the average. We abbreviate this method
as \texttt{fca} (Fix Critical simplices and take Average on others);
its pseudocode is in \cref{alg:critical-method-fca}.
This strategy imitates the gradient of the matching loss $\loss$.
Specifically,  \texttt{fca} guarantees
that for every critical simplex $\ssx_j$ whose point in the persistence diagram
appears in the matching, $\frac{\partial \loss}{\partial \ssx_j}$
and the $j$-th component of our gradient are the same.
If we assume general position, then the gradient of the diagram
loss is $0$ in all other components (infinitesimal perturbation
of other simplices does not change $\loss$).
In such general position, the loss $\loss$ is guaranteed to decrease in the
direction prescribed by the \texttt{fca} method, since the loss ignores values
of non-critical simplices.

\begin{algorithm}
    \caption{Fix Critical, Average on others.}
    \begin{algorithmic}[1]
        \State {\textbf{Input:} $\loss = \sum_{(p_i,q_i) \in M} (p_i - q_i)^2$}
        \For{\textit{each} $(p_i, q_i) \in M$}
            \State let $p_i = (b_i,d_i) = (f(\ssx_i), f(\tsx_i))$; $q_i = (b_i', d_i')$ \;
            \State $X_b = \left\{ \ssx_j \;\middle\vert\;
                        \begin{array}{l}
                            V^\bot[\ssx_j,\ssx_i] \neq 0 ~\textrm{and}~ b_i \leq f(\ssx_j) \leq b_i'; ~\textrm{or} \\
                            U^\bot[\ssx_i,\ssx_j] \neq 0 ~\textrm{and}~ b_i' \leq f(\ssx_j) \leq b_i \\
                        \end{array}
                    \right\}$ \;
            \State $X_d = \left\{ \tsx_j  \;\middle\vert\;
                        \begin{array}{l}
                            {\phantom{{}^\bot}}U[\tsx_i,\tsx_j] \neq 0 ~\textrm{and}~ d_i \leq f(\tsx_j) \leq d_i'; ~\textrm{or} \\
                            {\phantom{{}^\bot}}V[\tsx_j,\tsx_i] \neq 0 ~\textrm{and}~ d_i' \leq f(\tsx_j) \leq d_i \\
                        \end{array}
                    \right\}$ \;
            \State { // omitted: find faces/cofaces if necessary}
            \For{$\ssx_j \in X_b$}
                \State append $b_i'$ to $\target[\ssx_j]$
            \EndFor
            \For{$\tsx_j \in X_d$}
                \State append $d_i'$ to $\target[\tsx_j]$
            \EndFor
        \EndFor
        \For{\textit{each} $\ssx$}
            \If {$\target[\ssx]$ \textit{is empty}}
                \State $f'(\ssx) = f(\ssx)$ \;
            \ElsIf {$\ssx = \ssx_i$ for some $i$ (which is unique) }
                \State $f'(\ssx) = b_i'$ 
            \ElsIf {$\ssx = \tsx_i$ for some $i$ (which is unique) }
                \State $f'(\ssx) = d_i'$ 
            \Else
                \State $f'(\ssx) = \mbox{average of }\target[\ssx]$
            \EndIf
        \EndFor
    \Return{$\forall \ssx, \partial \loss / \partial f(\ssx) = 2(f(\ssx) - f'(\ssx))$}
    \end{algorithmic}
    \label{alg:critical-method-fca}
\end{algorithm}

\begin{figure}
    \centering
    \includegraphics[]{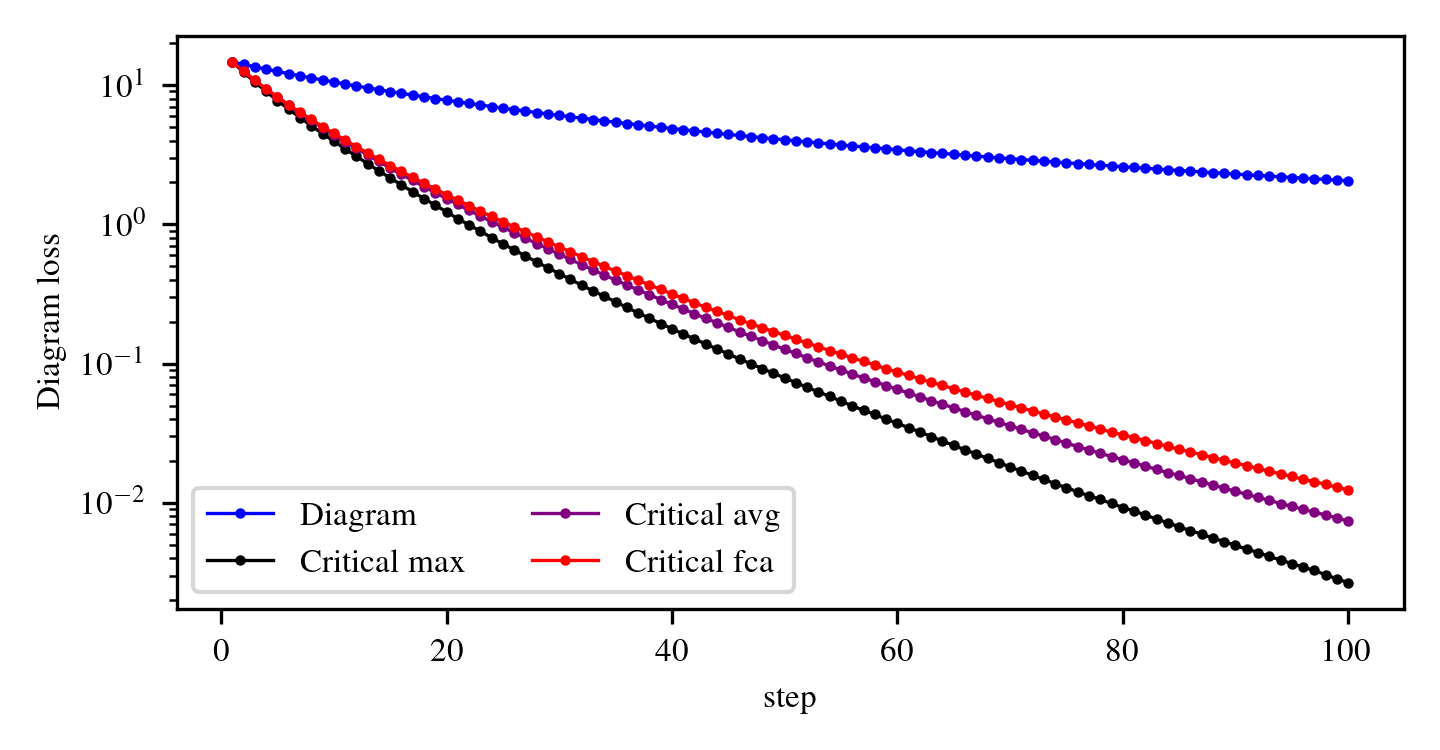}
    \vspace{-4ex}
    \caption{Comparison of different ways to combine singleton losses during
             optimization of the simplification loss on Rotstrat dataset, no momentum.}
    \label{fig:confl_strat_comp_no_mom}
\end{figure}

\begin{figure}
    \centering
    \includegraphics[]{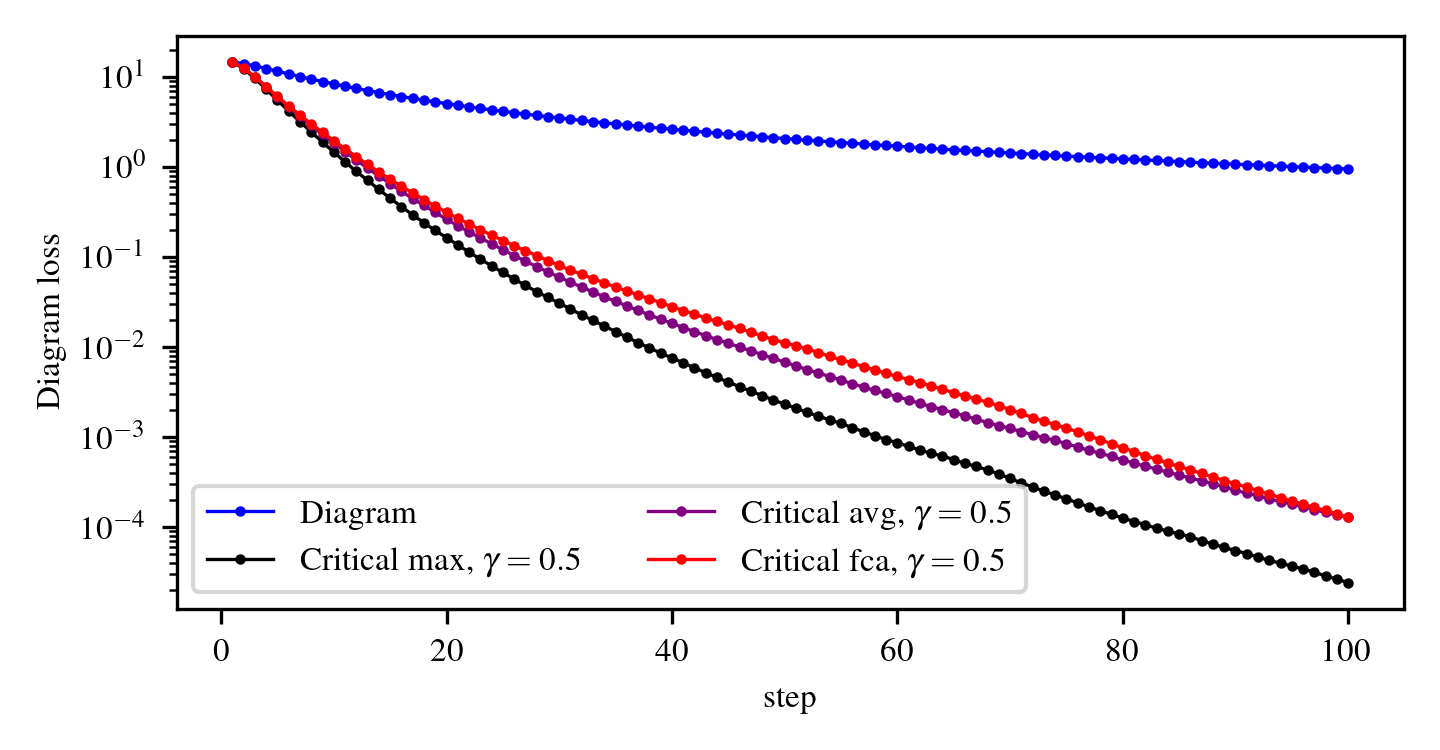}
    \vspace{-4ex}
    \caption{Comparison of different ways to combine singleton losses during
            optimization of the simplification loss on Rotstrat dataset, with momentum $0.5$.}
    \label{fig:confl_strat_comp_mom_0.5}
\end{figure}

\begin{figure}
    \centering
    \includegraphics[]{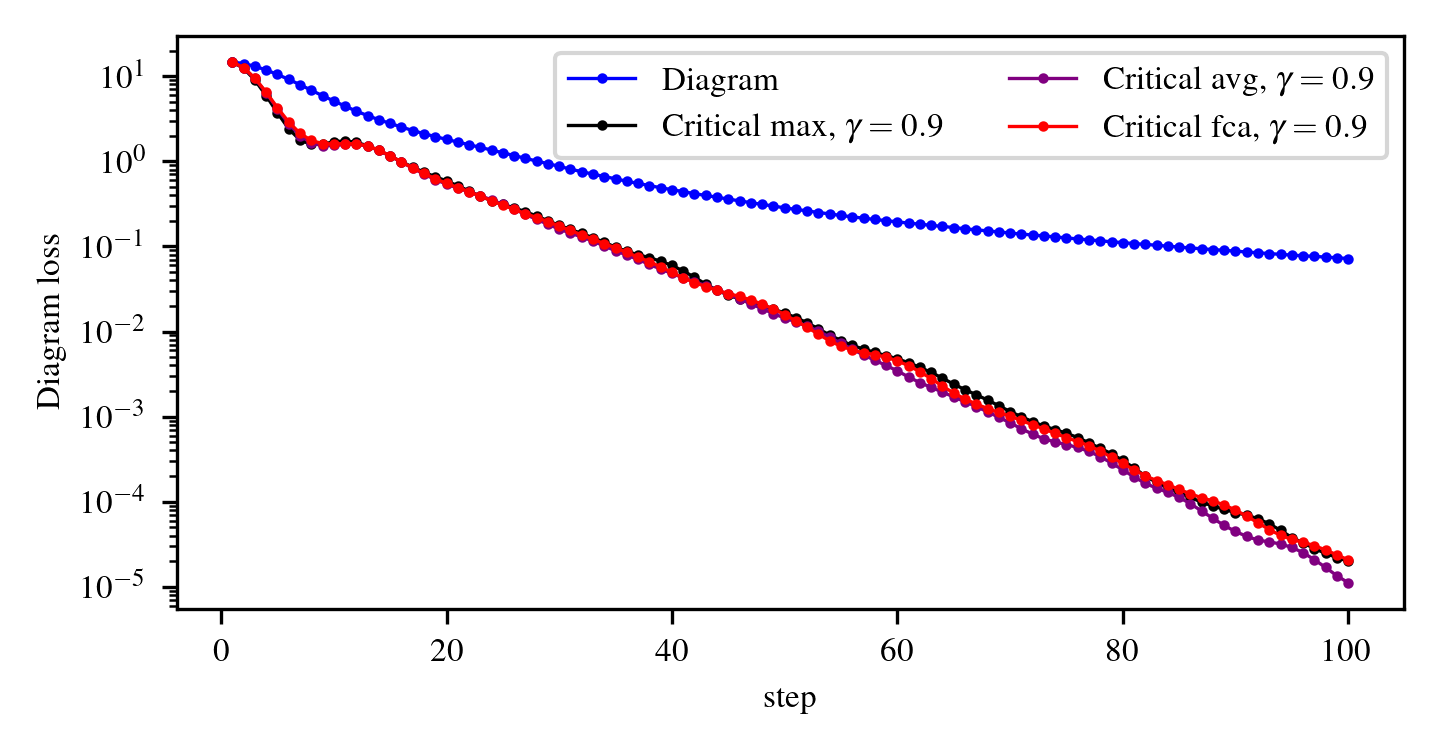}
    \vspace{-4ex}
    \caption{Comparison of different ways to combine singleton losses during
             optimization of the simplification loss on Rotstrat dataset, with momentum $0.9$.}
    \label{fig:confl_strat_comp_mom_0.9}
\end{figure}

We ran the Rotstrat example (simplification of $1$-dimensional diagram) with the same parameters
as in \cref{fig:dgm_loss_klacansky_2}, varying the conflict strategy.
\cref{fig:confl_strat_comp_no_mom,fig:confl_strat_comp_mom_0.5} show that there
is little difference between the three choices.
If $\gamma$ is small, then taking the maximum performs best.
For $\gamma = 0.9$, taking the average is slightly
better, see \cref{fig:confl_strat_comp_mom_0.9}.
In all cases, the critical set method clearly outperforms naive optimization of the diagram loss.

\section{Scaling experiments}
\label{sec:scaling_experiments}

Our method aims to move together
all the simplices whose critical values must be modified,
while the diagram method touches only the critical simplices.
Thus it is reasonable to expect
our method to perform better on larger inputs.
Suppose we want to simplify the diagram, and we have a version
of the same scalar field in different resolutions. For higher resolutions,
there will be more elements in the critical set of each point,
while the diagram loss identifies only one of these elements
at each step.

We took the Rotstrat example and downsampled it to 3 different sizes,
$32^3$, $64^3$ and $128^3$. Then we ran the well group simplification of $1$-dimensional diagram with the same parameters.
The plots of the losses are in
\cref{fig:scaling_no_momentum_losses,fig:scaling_with_momentum_losses}.
\cref{fig:scaling_with_momentum_losses} in particular shows that even when we
use momentum with the diagram loss, the critical set method
drives the diagram loss to zero significantly
faster for larger inputs. We also plot the ratio of the diagram loss values in \cref{fig:scaling_with_momentum_ratios}.
From this figure, we see that by step $50$, the diagram loss for the $32^3$ input was roughly $10^4$ times smaller when optimized
with the critical set method; for the $128^3$ input it was $10^9$ times smaller.

\begin{figure}
    \centering
    \includegraphics[]{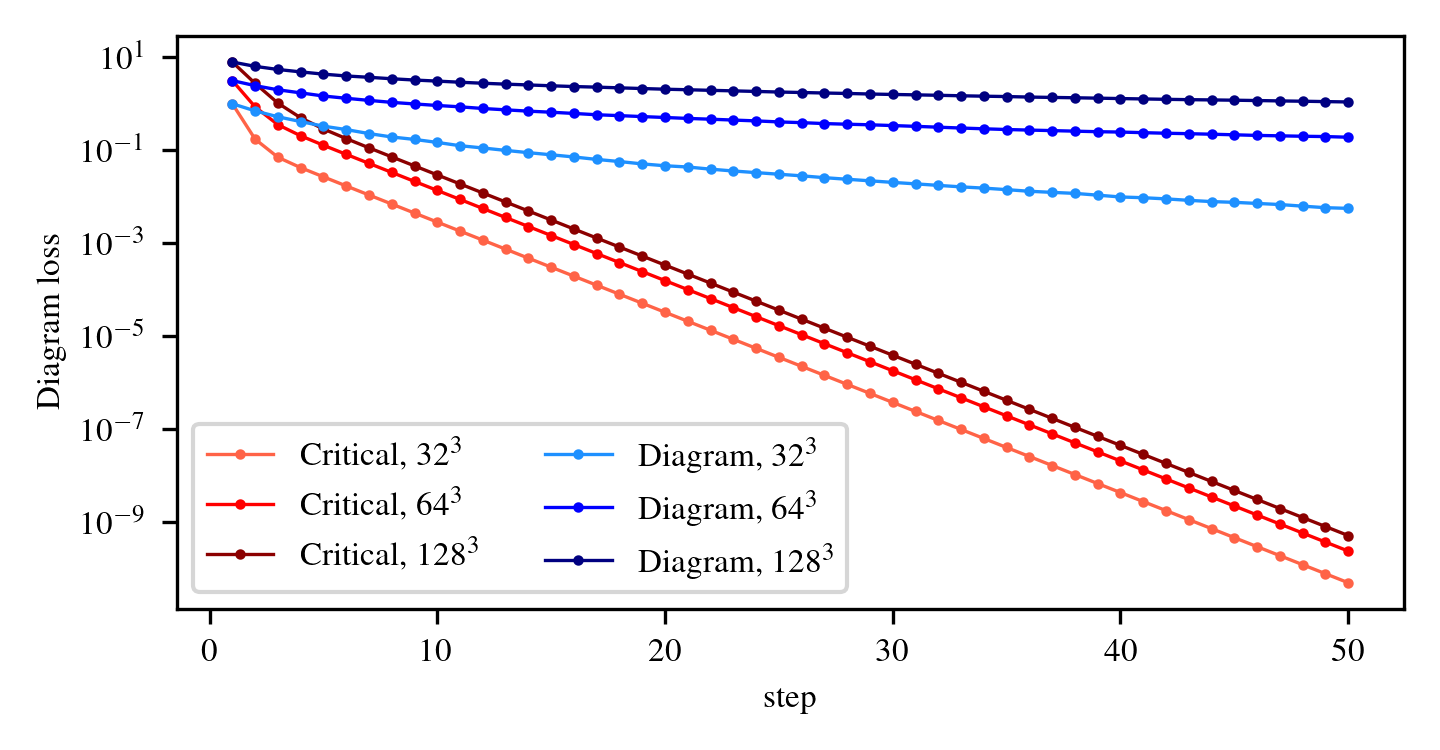}
    \vspace{-4ex}
    \caption{Comparison of the diagram losses during the optimization using the two methods on the inputs of different
    size.  The advantage of the critical set method becomes clearer for larger inputs. There is no momentum.
    Learning rate is $0.1$}
    \label{fig:scaling_no_momentum_losses}
\end{figure}

\begin{figure}
    \centering
    \includegraphics[]{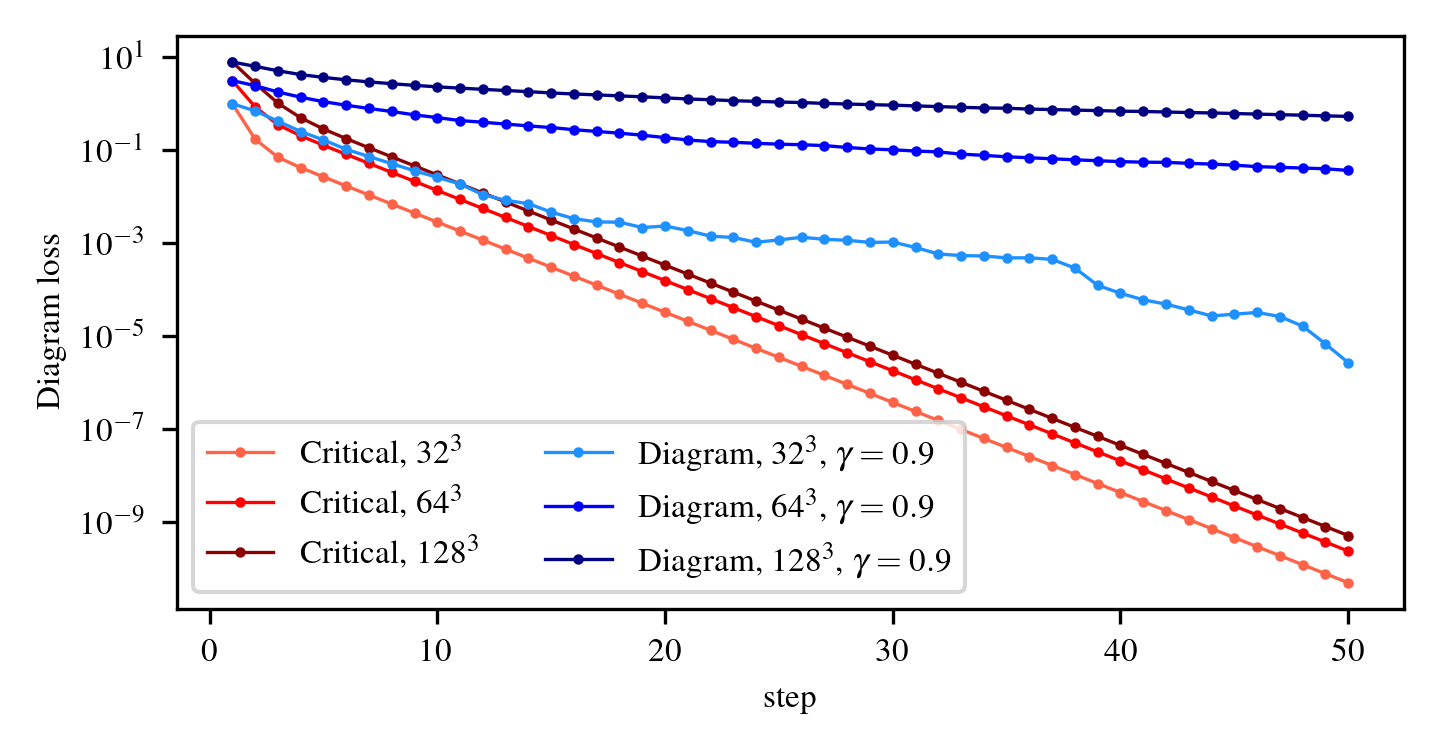}
    \vspace{-4ex}
    \caption{Comparison of the diagram losses during the optimization using the two methods on the inputs of different
    size.  The advantage of the critical set method becomes clearer for larger inputs. Optimization is with momentum for
    the diagram method, $\gamma=0.9$. Learning rate is $0.1$}
    \label{fig:scaling_with_momentum_losses}
\end{figure}

\begin{figure}
    \centering
    \includegraphics[]{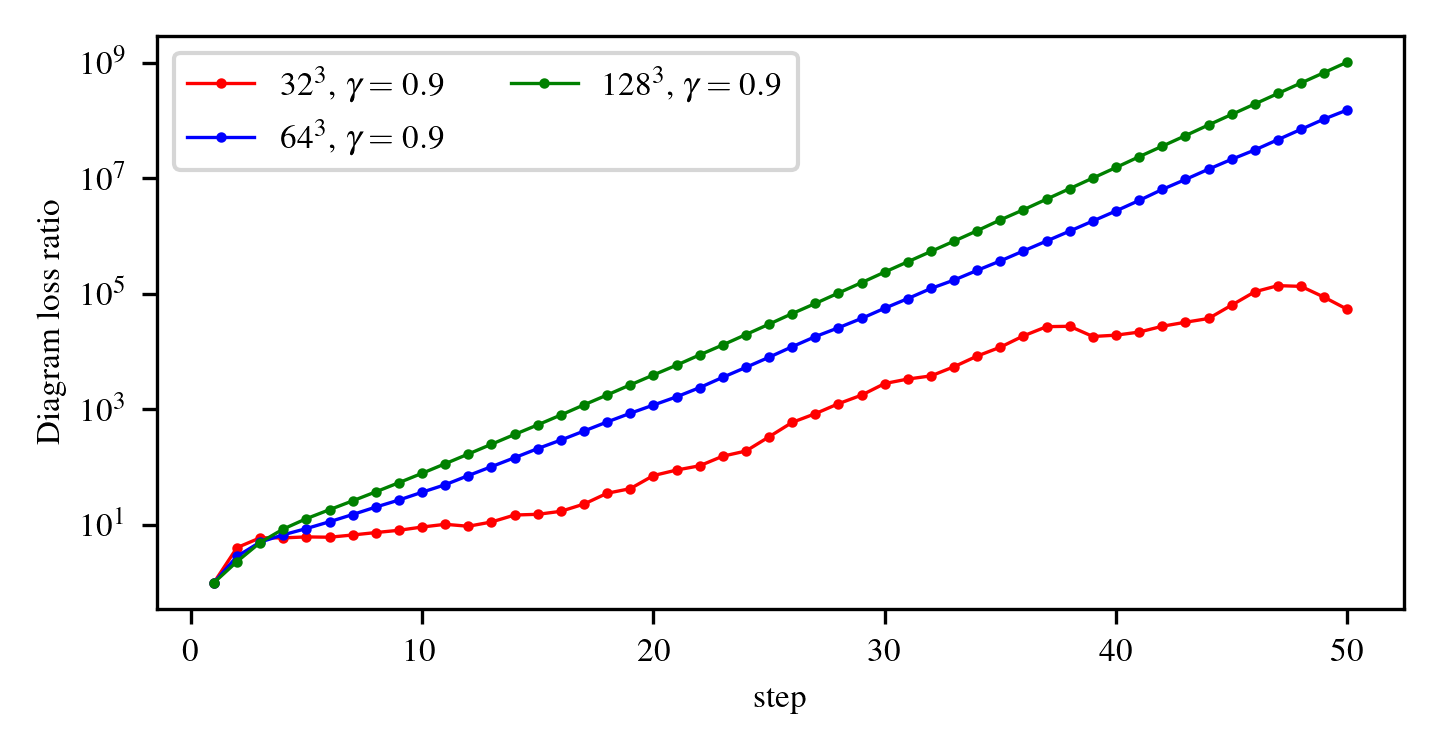}
    \vspace{-4ex}
    \caption{Ratio of the diagram losses during the optimization using the two methods on the inputs of different
    size: $y$-axis is the value of the diagram loss when optimized with the diagram method divided by the value of the diagram loss
    when optimized with the critical set method. Optimization is with momentum for
    the diagram method, $\gamma=0.9$. Learning rate is $0.1$.}
    \label{fig:scaling_with_momentum_ratios}
\end{figure}

\section{Experiments with Heat Kernel Signature}
\label{sec:hks_experiments}

We replicate some of the experiments from \cite{Poulenard2018}, both directly
optimizing the values on a mesh and back-propagating to optimize a functional
correspondence between two meshes.

\begin{figure}[ht]
    \centering
    \subfloat[Original HKS function.]{
        \includegraphics[width=0.3\textwidth]{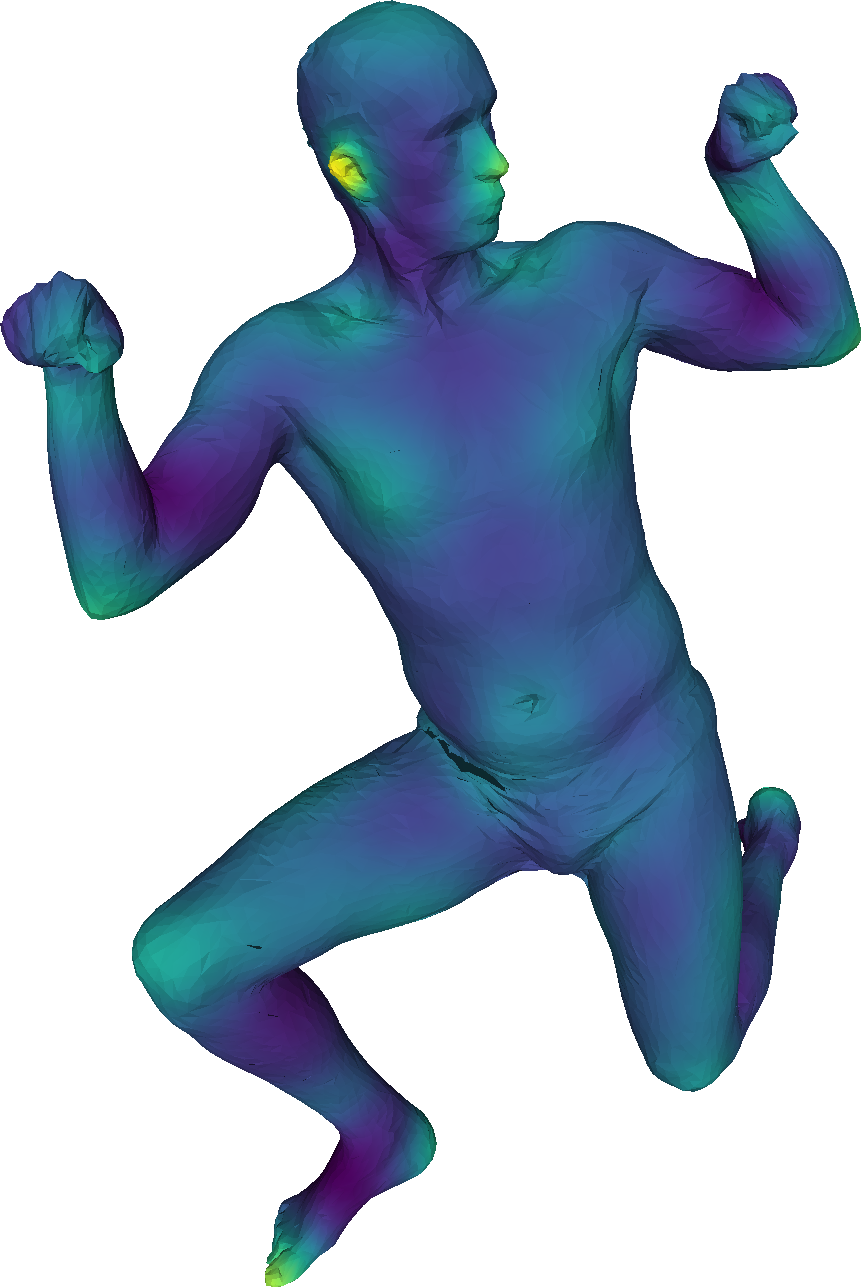}
    }
    ~
    \subfloat[Simplified (critical set).]{
        \includegraphics[width=0.3\textwidth]{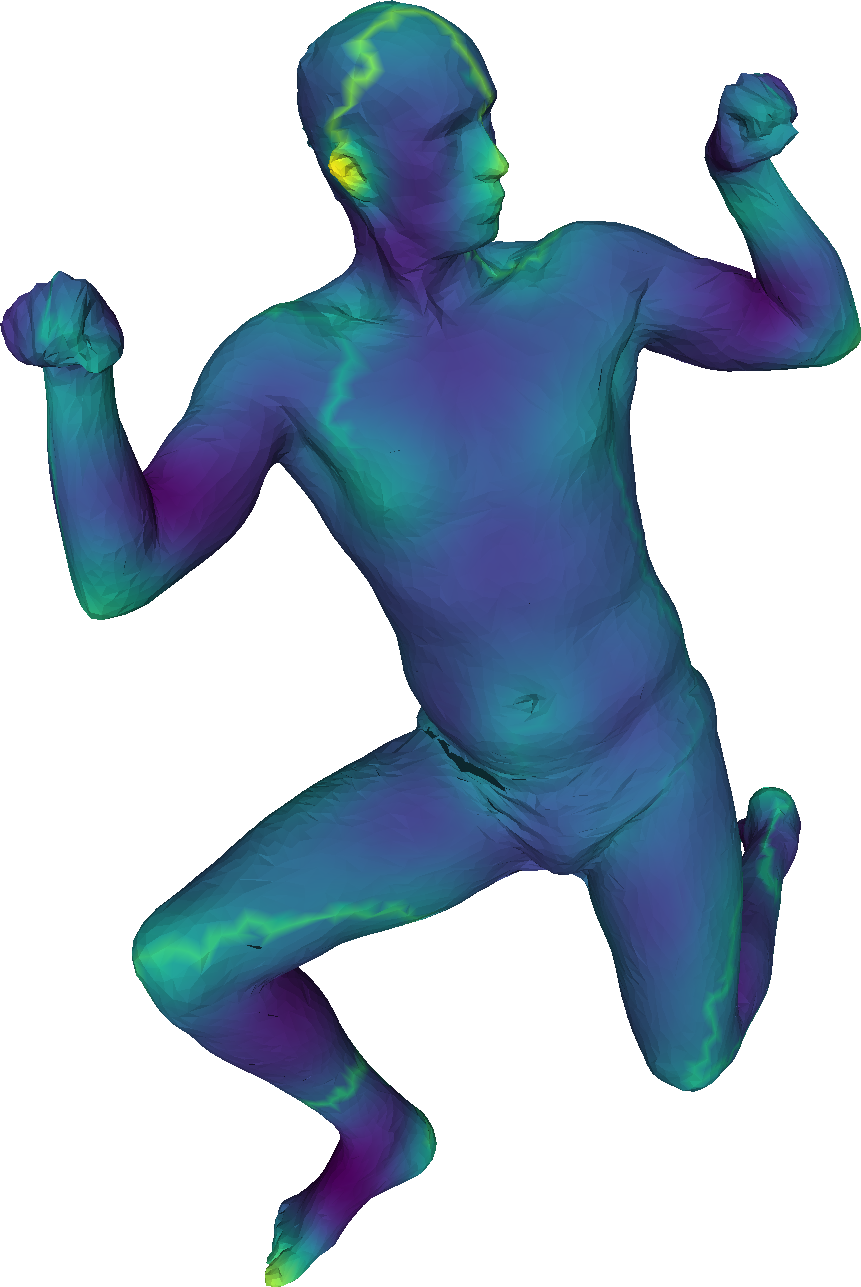}
    }
    ~
    \subfloat[Simplified (diagram).]{
        \includegraphics[width=0.3\textwidth]{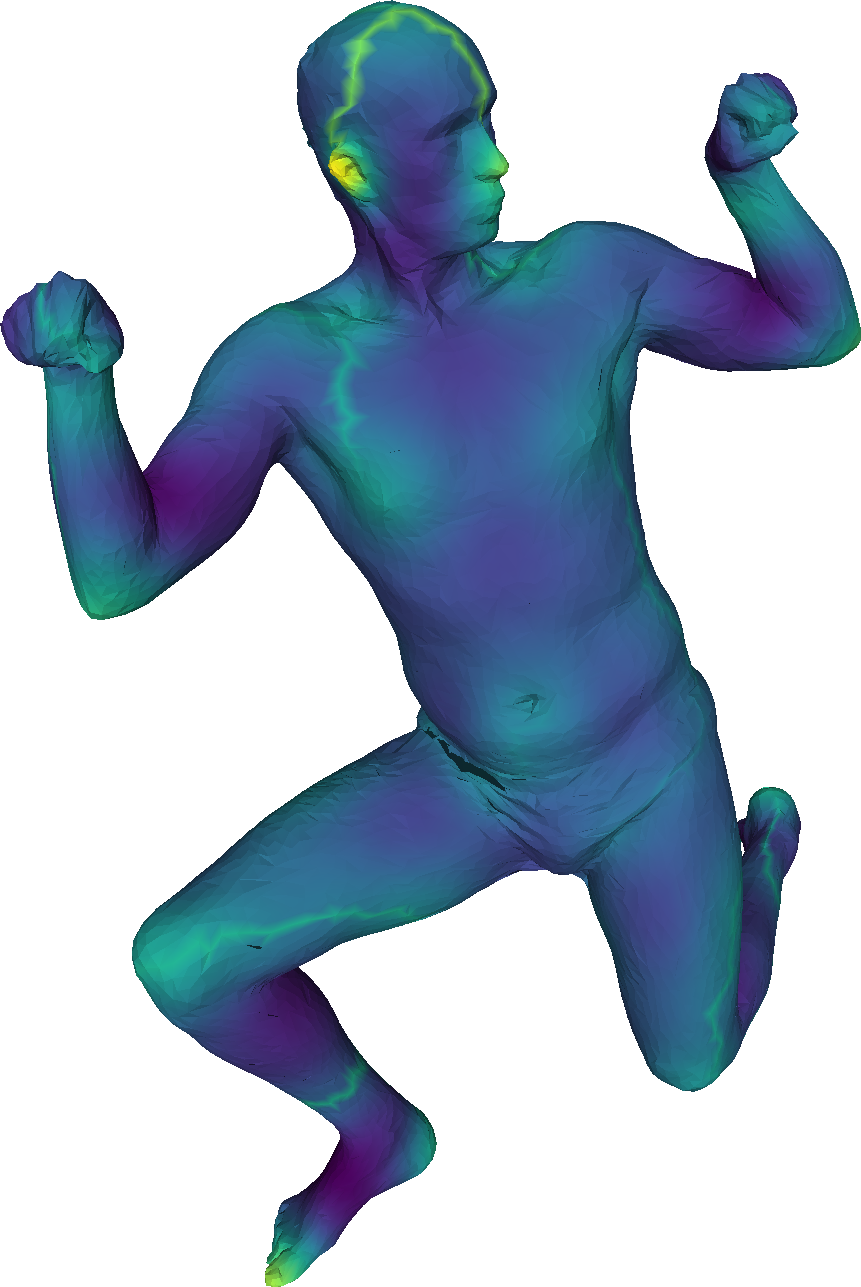}
    }
    \caption{Visualization of the Heat Kernel Signature and its simplification.}
    \label{fig:hks_result_human}
\end{figure}

\begin{figure}[ht]
    \centering
    \includegraphics[]{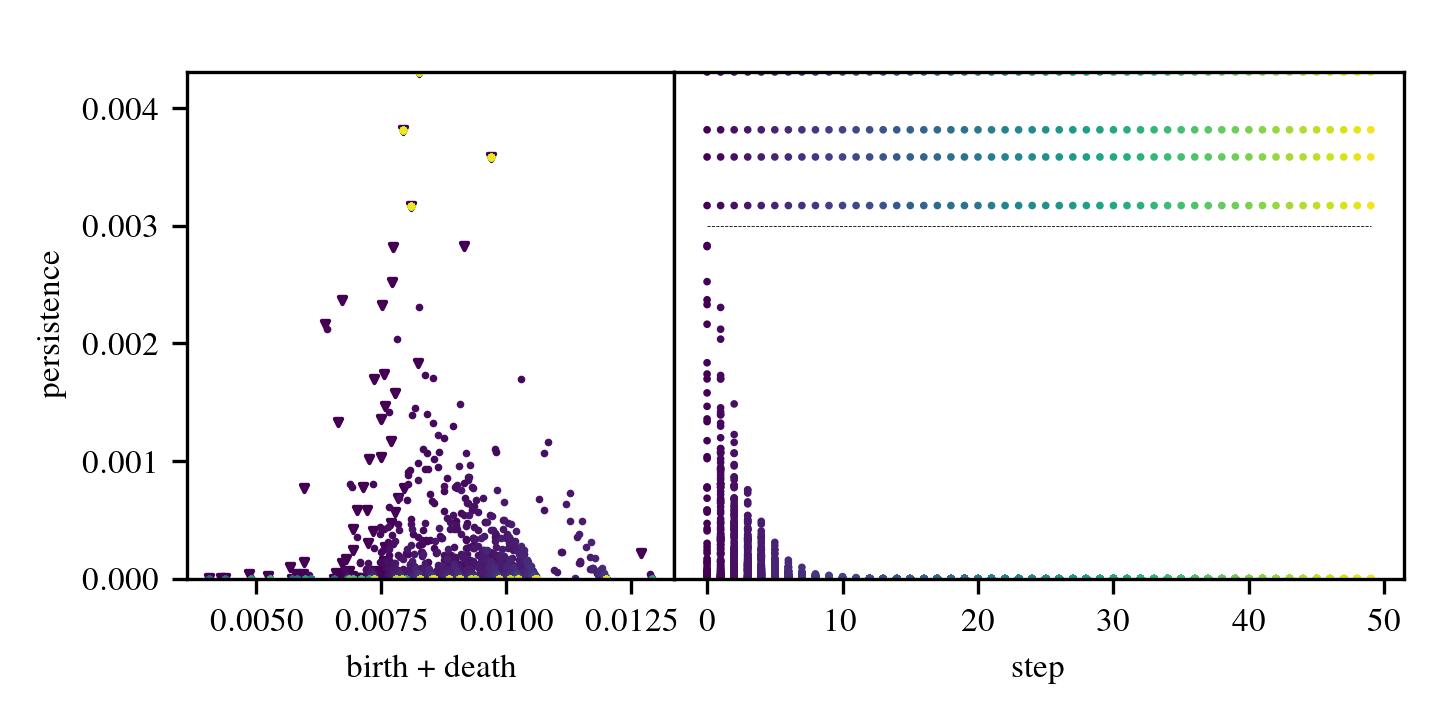}
    \vspace{-4ex}
    \caption{Vineyard of the optimization guided by the simplification of a
             superlevel set in a $0$-dimensional diagram of the HKS, using the \textbf{critical set method}.
             Learning rate is $0.2$, without momentum.
             The color encodes the time step.}
     \label{fig:hks_vineyard_crit}
\end{figure}

\begin{figure}[ht]
    \centering
    \includegraphics[]{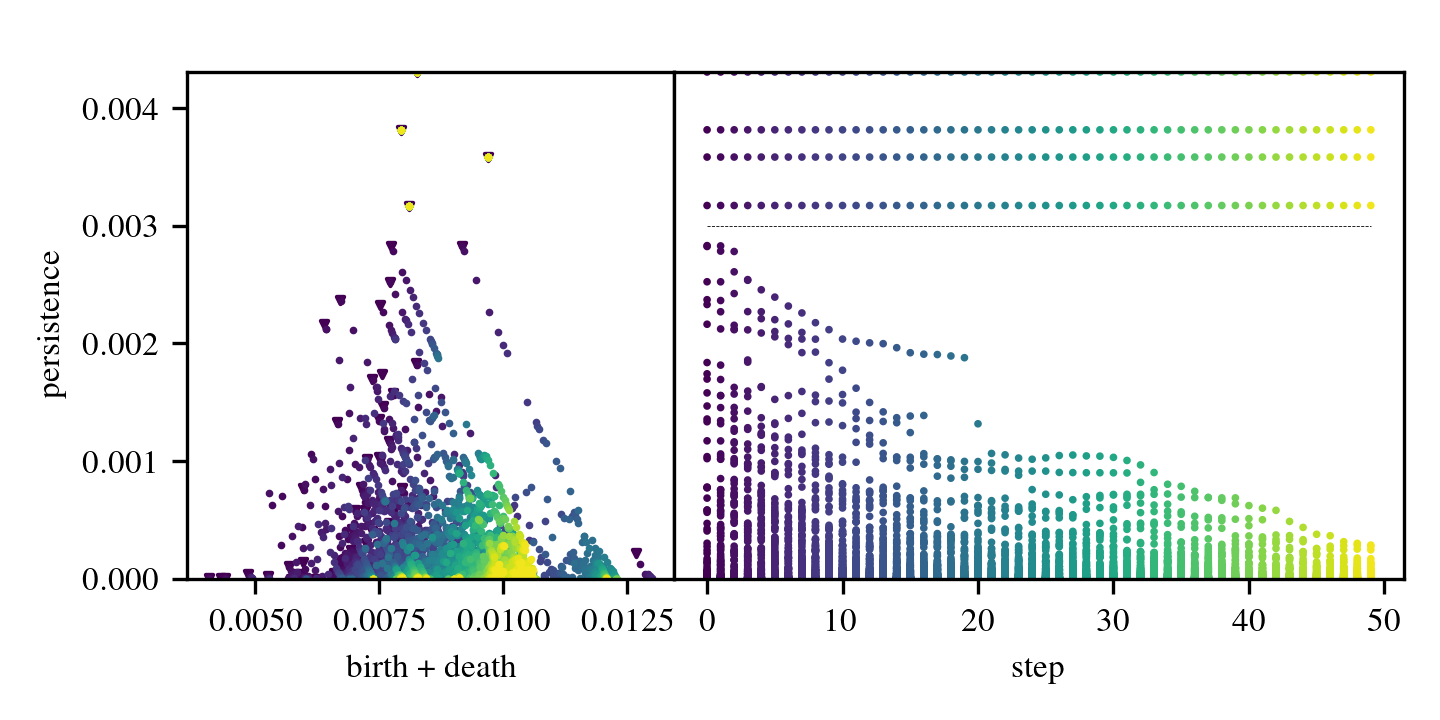}
    \vspace{-4ex}
    \caption{Vineyard of the optimization guided by the simplification of a
             superlevel set in a $0$-dimensional diagram of the HKS, using the \textbf{diagram method}.
             Learning rate is $0.2$, momentum $\gamma=0.5$.
             The color encodes the time step.}
     \label{fig:hks_vineyard_dgm}
\end{figure}

\begin{figure}[ht]
    \centering
    \includegraphics[]{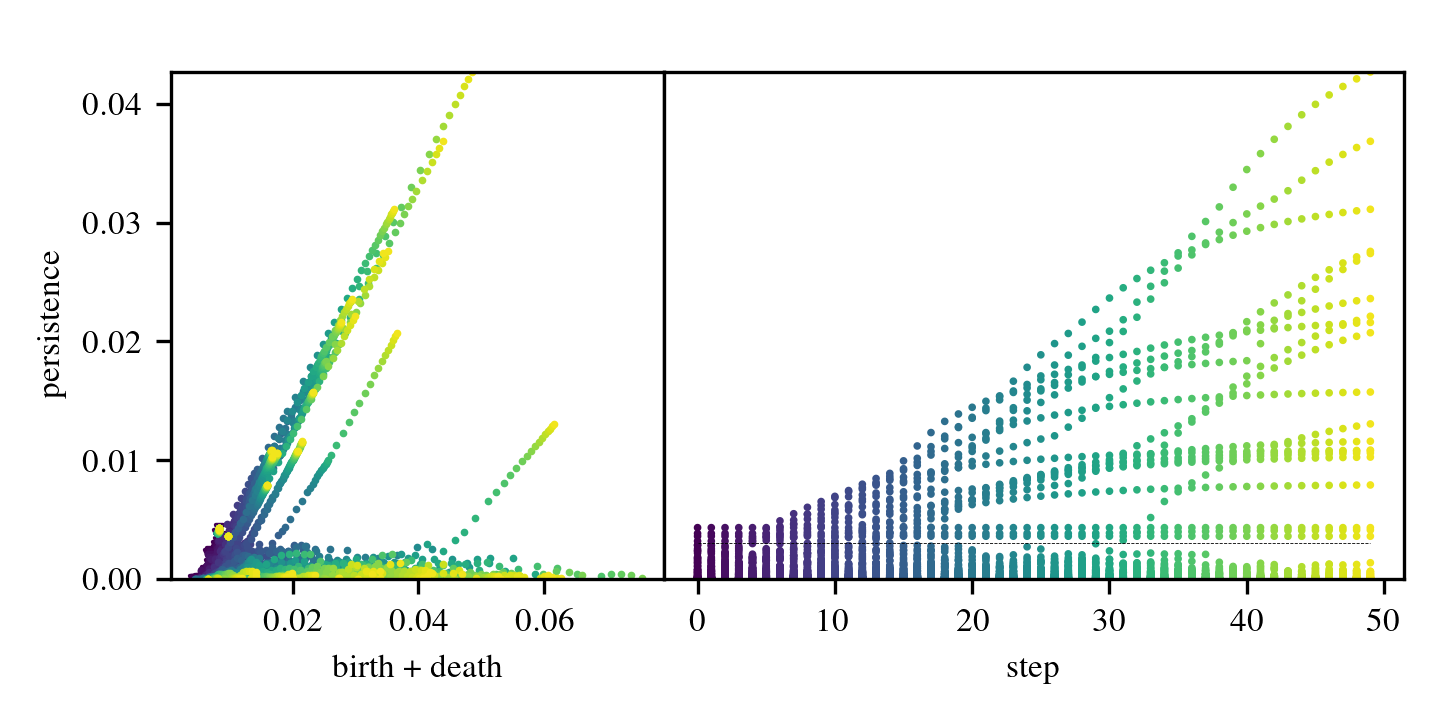}
    \vspace{-4ex}
    \caption{Vineyard of the optimization guided by the simplification of a
             superlevel set in a $0$-dimensional diagram of the HKS, using the \textbf{diagram method}.
             Learning rate is $0.2$, momentum $\gamma=0.9$.
             The color encodes the time step. Note that the most persistent points that we wanted to preserve are no longer fixed.}
     \label{fig:hks_vineyard_dgm_0.9}
\end{figure}

\paragraph{Direct optimization.} 
We performed an experiment similar to \cite{Poulenard2018}. We picked
a mesh from the SCAPE dataset \cite{anguelov2005scape} and computed HKS signature on it, using \cite{trailie_pyhks}.
We chose $t=0.2$ and $40$ eigenvectors.
Then we performed topological simplification, choosing $\eps$ to preserve the three most persistent
points in the zeroth diagram. The vineyards and diagrams are shown in \cref{fig:hks_vineyard_crit,fig:hks_vineyard_dgm}.
Here we find that the best performance of the diagram method was for a smaller value of $\gamma=0.5$,
with $\gamma=0.9$ the optimization diverges and starts moving points away from the diagonal, as we can see in \cref{fig:hks_vineyard_dgm_0.9}.
This highlights a disadvantage of the diagram loss: to perform well,
one needs to tune optimization parameters.
The critical set method with plain gradient descent quickly drives the loss to $0$,
while the diagram method does not achieve the same result even after $50$ steps, see
\cref{fig:hks_diagram_loss_comp}.

\begin{figure}[]
    \centering
    \includegraphics[]{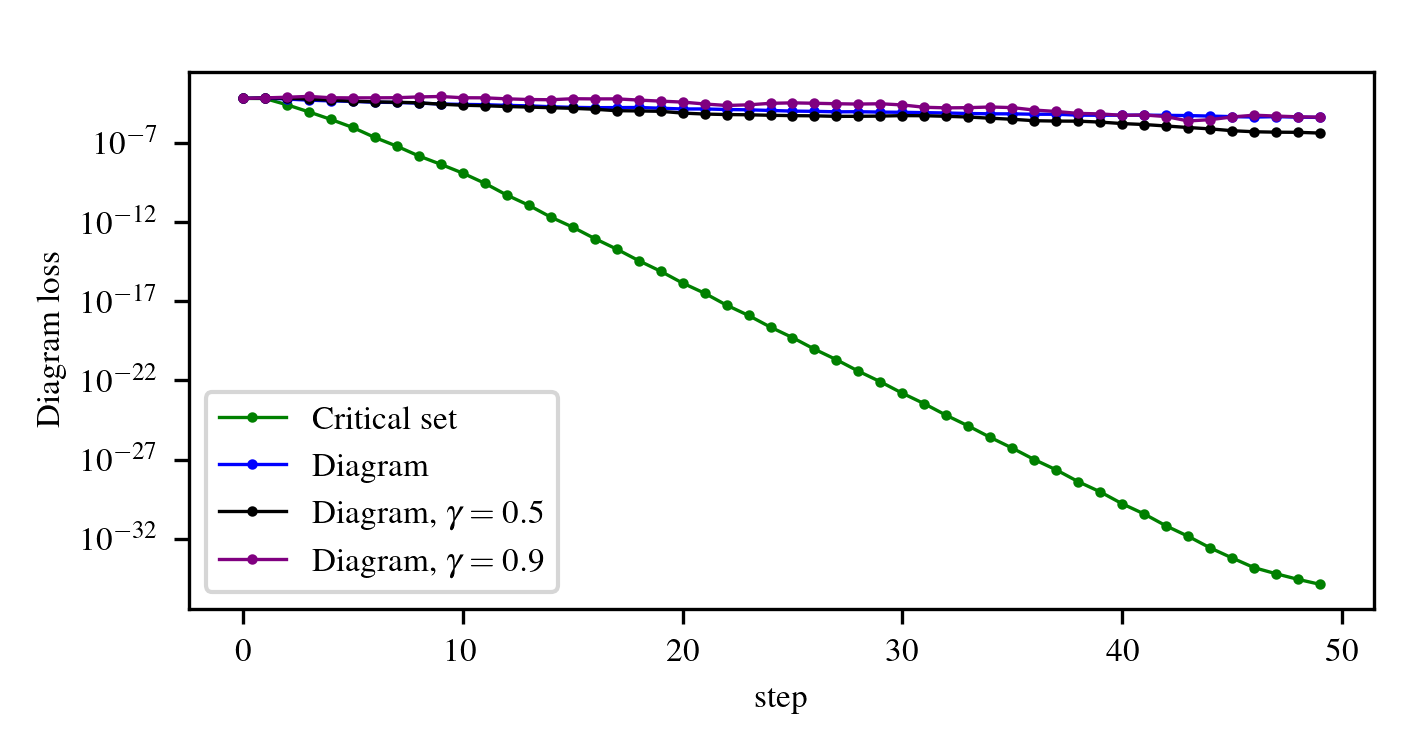}
    \vspace{-4ex}
    \caption{Comparison of the diagram losses during simplification of the HKS function.
             Diagram methods benefits from momentum, but the critical set significantly outperforms it.}
     \label{fig:hks_diagram_loss_comp}
\end{figure}

\cref{fig:hks_result_human} shows the results of the simplification.
It is hard to tell the difference visually; one can see that both methods remove
topological features by creating similar paths.

The diagram method takes 8.14 seconds. $50$ steps of the critical set method
take about $21.57$ seconds.
However, a fair comparison would be to run the critical set method
until it drives the loss below the value achieved by the diagram method, which
happens after 7 steps, after
only $2.96$ seconds.

\clearpage

\paragraph{Functional map regularization.}
Let us recall the methodology of the functional map correspondence method, following the notation
used in \cite{Poulenard2018}. We are given two manifolds (triangular meshes) $\manM$ and $\manN$.

First, we choose a set of basis functions, $k_\manM$ and $k_\manN$, on each manifold.
The basis functions are the eigenfunctions of the corresponding Laplace--Beltrami operator, $L_\manM$ or $L_\manN$.
Then we compute $k_d$ descriptors on each manifold and expand them in the corresponding basis.
We stack the column vectors with the coordinates of each descriptor into two matrices, $\matrA$ and $\matrB$,
of size $k_\manM \times k_d$ and $k_\manN \times k_d$.

More precisely, the chosen basis functions do not span the whole space of functions on the manifold,
unless we decide to use all of the eigenfunctions, because only in this case we have as many
functions in the basis as vertices. Accordingly, by expanding a function in the basis we actually
mean expanding its orthogonal projection on the subspace spanned by the first eigenfunctions.

The idea of a functional map correspondence is that instead of searching for a point-to-point correspondence $\manM \to \manN$,
we search for a linear mapping from the space of all real-valued functions on $\manM$ into the space of all functions on $\manN$.
Since we fix the bases, such a map is encoded by a matrix $\matrC$
of size $k_\manN \times k_\manM$. There are two reasonable requirements to impose on $\matrC$: 1) if the descriptors
are invariant under isometry, $\matrC$ must preserve them, i.e., $\matrC \matrA = \matrB$ and 2) the map should
commute with the Laplace--Beltrami operator. Our basis functions are eigenfunctions of the Laplace--Beltrami operator,
therefore we can express the second requirement as $\Lambda^\manN \matrC = \matrC \Lambda^\manM$,
where $\Lambda^\manM$ is the diagonal matrix whose diagonal consists of the first $k_\manM$ eigenvalues
of $L_\manM$, and $\Lambda^\manN$ is the diagonal matrices whose diagonal consists of the first $k_\manN$ eigenvalues
of $L_\manN$.

Thus, we obtain the first approximation of  $\matrC$ by solving the optimization problem
\begin{equation}
    \matrC = \arg \min_\matrX \| \matrX \matrA - \matrB \| + \eps \| \Lambda^\manN \matrX - \matrC \Lambda^\manM \|.
\label{eqn:matrC}
\end{equation}

In our experiments, we took two meshes form the SCAPE dataset.
We choose $k_\manM = k_\manN = k_d = 80$ and perform L-BFGS to solve \cref{eqn:matrC}.
The descriptors we chose are HKS function evaluated at different time values.

The topology comes into play in the second phase of the process.
While every bijective continuous mapping $f \colon \manM \to \manN$ gives rise
to the corresponding invertible linear map between functional spaces via pullback ($\phi \colon \manM \to \RR$
maps to $\phi \circ f^{-1} \colon \manN \to \RR$), the converse is not true.
Let us take $r$ connected regions on $\manM$ and let $\Omega_r$ be the indicator function of the union of the regions.
We slightly abuse the notation by writing $\matrC(\Omega_r)$ for the corresponding function $\manN \to \RR$ (first, $\Omega_r$ needs
to be projected onto the corresponding subspace).  The authors of \cite{Poulenard2018} show that it is reasonable to require the following:
the $0$-dimensional persistence diagram of $\matrC(\Omega_r)$ has exactly as
many points as the diagram of $\Omega_r$.
In other words, we should simplify the diagram of $\matrC(\Omega_r)$ to remove all but the first $r-1$
most persistent finite points (we assume $\manN$ and $\manM$ to be connected, so there is exactly one point at infinity).

We sample $r = 1$ random point and take all points
of the mesh that are at most $2$ hops away
as our region.
We want to optimize $\matrC$ to eliminate all finite points in the diagram 
of the image of the indicator function $\matrC(\Omega_1)$.
Crucially, unlike the rest of the examples in the paper, the optimization
parameters are the entries of matrix $\matrC$.

The behavior of the diagram loss is shown in \cref{fig:hks_corr_diagram_loss_comp}.
The advantage of the critical set method is evident, it rapidly drives the loss to $0$.

\begin{figure}[]
    \centering
    \includegraphics[]{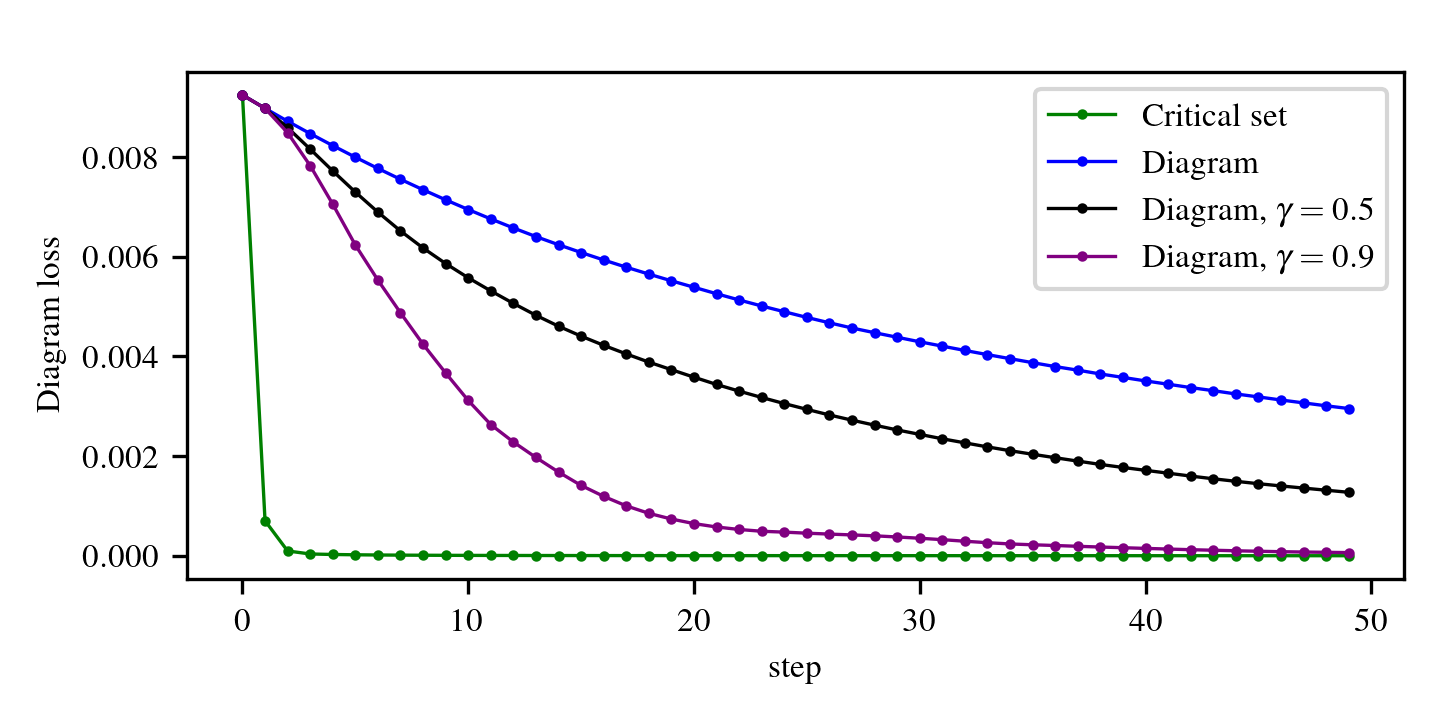}
    \vspace{-4ex}
    \caption{Comparison of the diagram losses during regularization of the functional map $\matrC$.
             Diagram methods benefits from momentum, but the critical set significantly outperforms it.}
     \label{fig:hks_corr_diagram_loss_comp}
\end{figure}

The vineyards are in \cref{fig:hks_corr_vineyard_crit,fig:hks_corr_vineyard_dgm,fig:hks_corr_vineyard_dgm_0.5,fig:hks_corr_vineyard_dgm_0.9}.

We should mention that these results are for simplification method that pushes the point $(b, d)$
towards $(d, d)$, i.e., increases the birth values.

\begin{figure}[ht]
    \centering
    \includegraphics[]{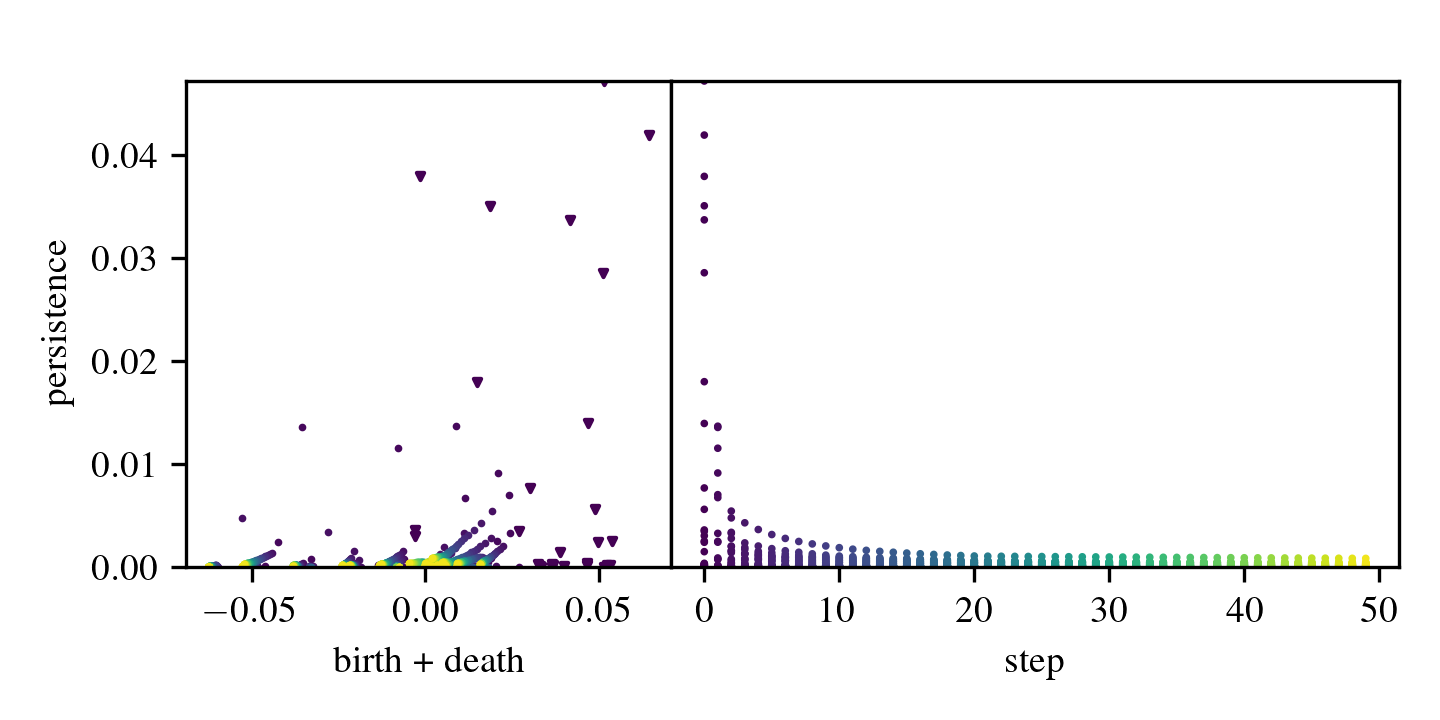}
    \vspace{-4ex}
    \caption{Vineyard of the optimization of the functional map $\matrC$ guided by the simplification of a
             superlevel set in a $0$-dimensional diagram of $\matrC(\Omega_1)$, using the \textbf{critical set method}.
             Learning rate is $0.2$, without momentum.
             The color encodes the time step.}
     \label{fig:hks_corr_vineyard_crit}
\end{figure}

\begin{figure}[]
    \centering
    \includegraphics[]{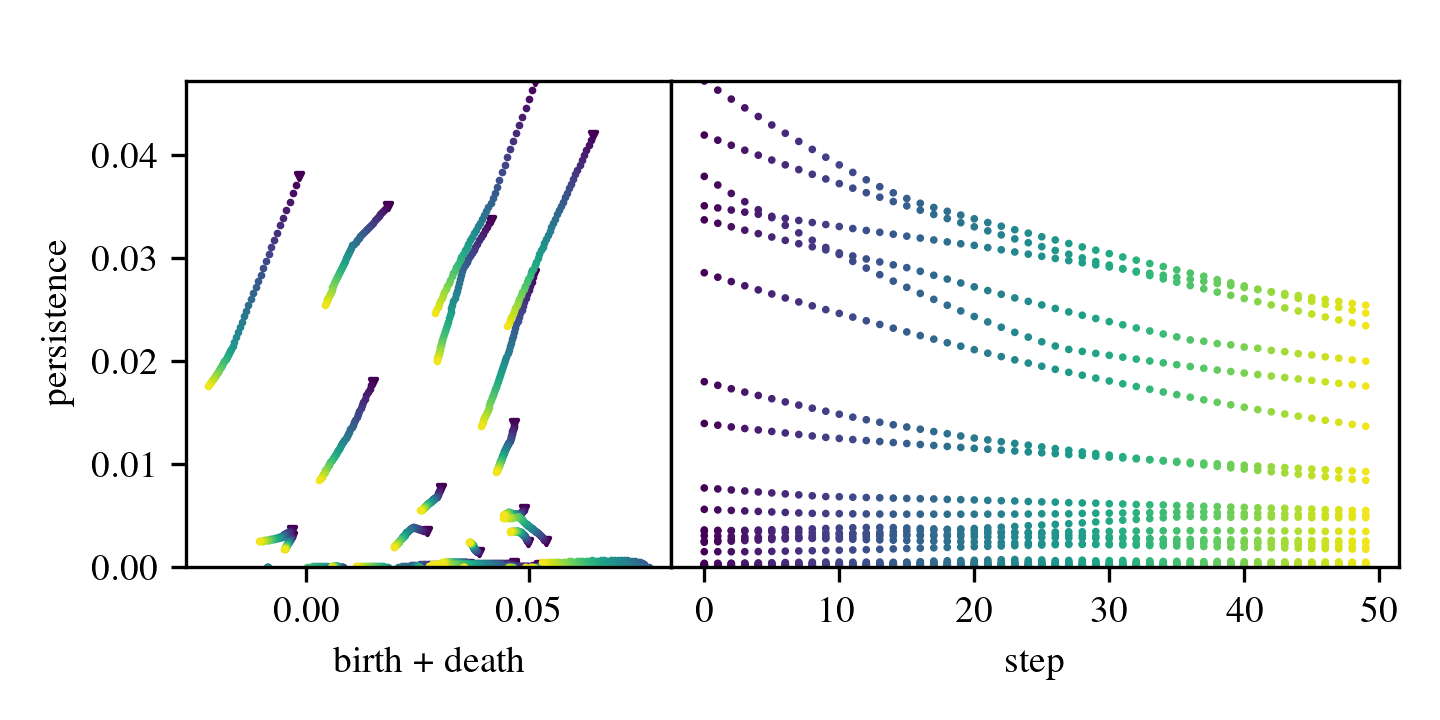}
    \vspace{-4ex}
    \caption{Vineyard of the optimization of the functional map $\matrC$ guided by the simplification of a
             superlevel set in a $0$-dimensional diagram of $\matrC(\Omega_1)$, using the \textbf{diagram method}.
             Learning rate is $0.2$, no momentum.
             The color encodes the time step.}
     \label{fig:hks_corr_vineyard_dgm}
\end{figure}

\begin{figure}[]
    \centering
    \includegraphics[]{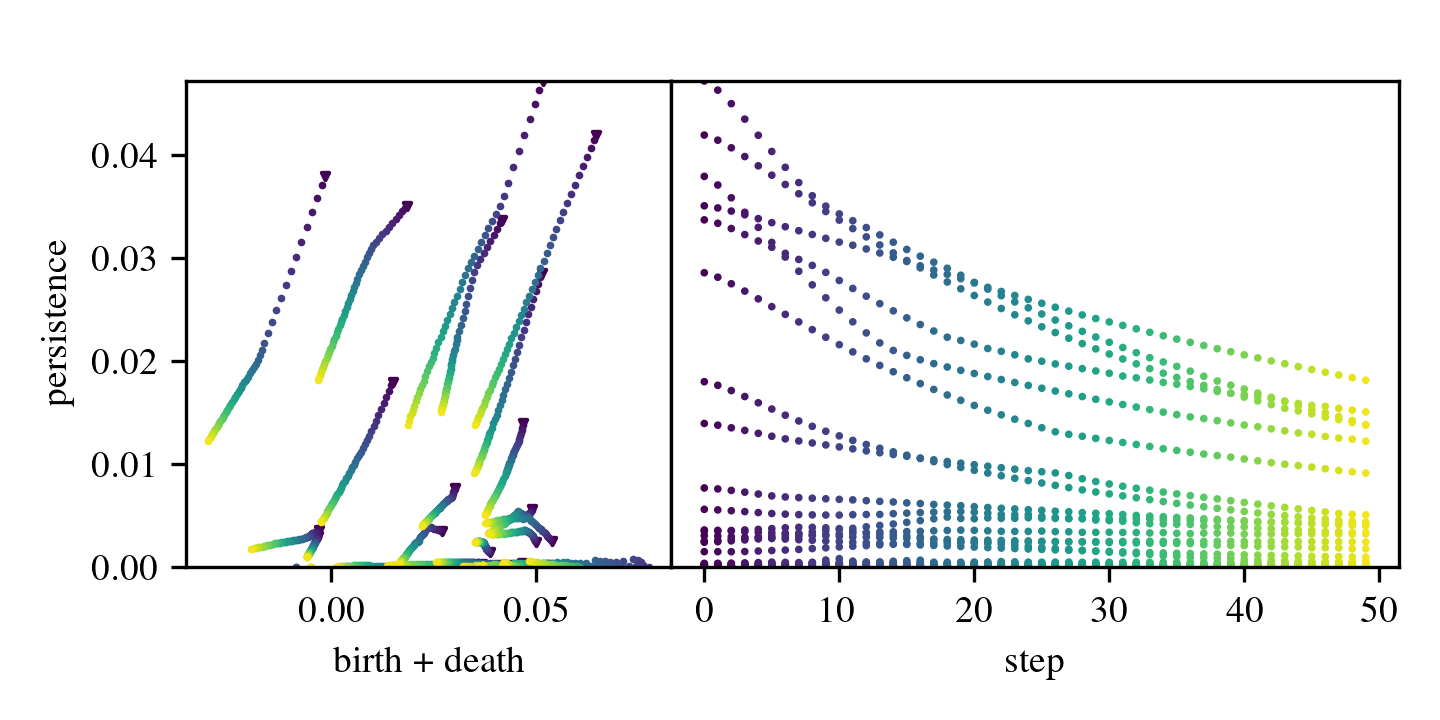}
    \vspace{-4ex}
    \caption{Vineyard of the optimization of the functional map $\matrC$ guided by the simplification of a
             superlevel set in a $0$-dimensional diagram of $\matrC(\Omega_1)$, using the \textbf{diagram method}.
             Learning rate is $0.2$, momentum $\gamma=0.5$.
             The color encodes the time step.}
     \label{fig:hks_corr_vineyard_dgm_0.5}
\end{figure}

\begin{figure}[]
    \centering
    \includegraphics[]{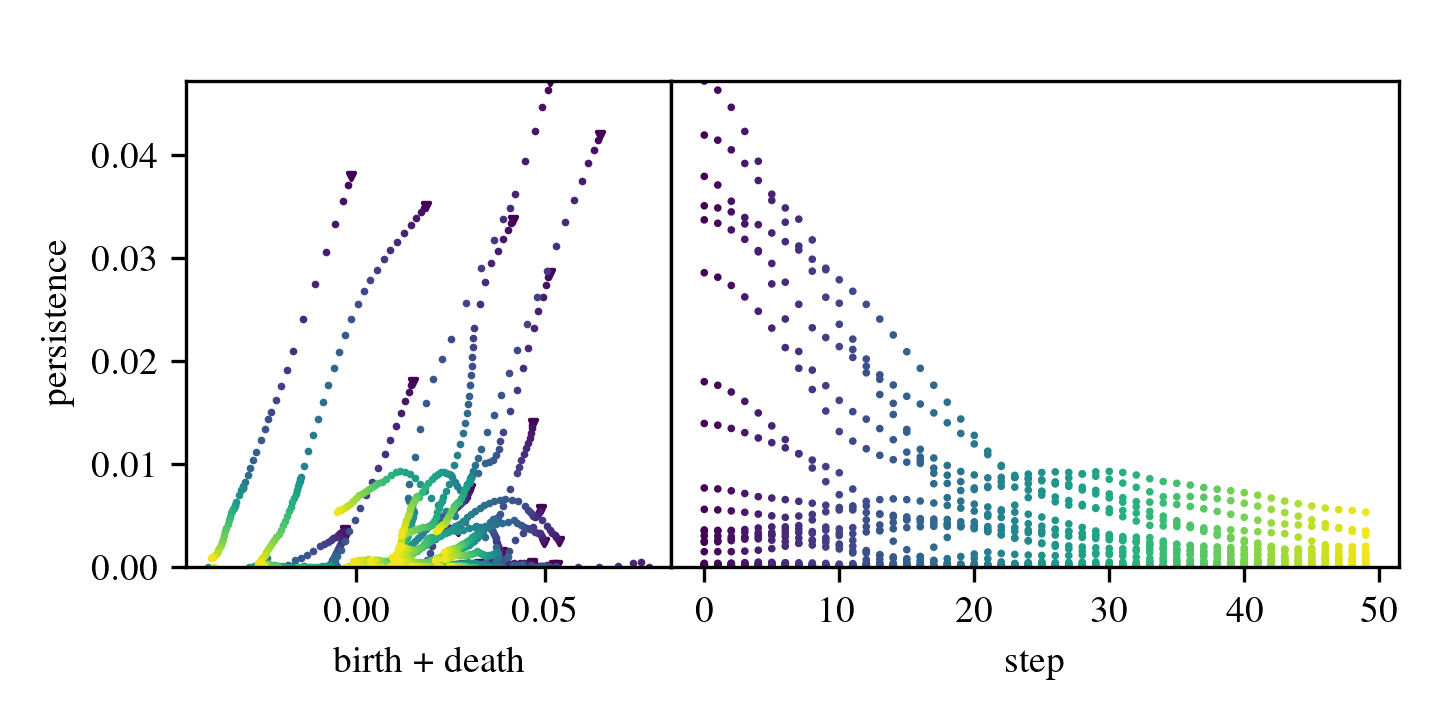}
    \vspace{-4ex}
    \caption{Vineyard of the optimization of the functional map $\matrC$ guided by the simplification of a
             superlevel set in a $0$-dimensional diagram of $\matrC(\Omega_1)$, using the \textbf{diagram method}.
             Learning rate is $0.2$, momentum $\gamma=0.9$.
             The color encodes the time step.}
     \label{fig:hks_corr_vineyard_dgm_0.9}
\end{figure}

\clearpage
\bibliographystyle{acm}
\bibliography{references}

\end{document}